\newtheorem{theorem}{Theorem}
\newtheorem{example}{Example}
\newcommand{\twodray}{{\sc 2draysweep }}
\newcommand{\twodonline}{{\sc 2donline }}
\newcommand{\hyperpolar}{{\sc hyperpolar }}
\newcommand{\arrangement}{{\sc satregions }}
\newcommand{\arrangementn}{{\sc satregions}}
\newcommand{\mdbaseline}{{\sc mdbaseline }} 
\newcommand{\anglepartitioning}{{\sc anglepartitioning }}  
\newcommand{\harrangement}{{\sc AT$_+$ }} 
\newcommand{\cellplane}{{\sc cellplane$_\times$ }} 
\newcommand{\cellarrangement}{{\sc markcell }} \newcommand{\harrangementc}{{\sc ATC$_+$ }} 
\newcommand{\cellcoloring}{{\sc cellcoloring }}
\newcommand{\mdonline}{{\sc mdonline }}
\newcommand{\twodraynsp}{{\sc 2draysweep}}
\newcommand{\twodonlinensp}{{\sc 2donline}}
\newcommand{\harrangementnsp}{{\sc AT$_+$}} 
\newcommand{\cellplanensp}{{\sc cellplane$_\times$}} 
\newcommand{\cellarrangementnsp}{{\sc markcell}}
\newcommand{\rankswitch}{ordering exchange }
\newcommand{\rankswitches}{ordering exchanges }
\newcommand{\rankswitchnospace}{ordering exchange}
\newcommand{\rankswitchesnospace}{ordering exchanges}
\newcommand{\stitle}[1]{\vspace{1ex}\noindent{\bf #1}}
\newcommand\julia[1]{\textcolor{blue}{(Julia) #1}}
\newcommand\techrep[1]{#1}
\newcommand\submit[1]{}
\title{Designing Fair Ranking Schemes}
\author{
	\alignauthor
	Abolfazl Asudeh$^\dag$, H. V. Jagadish$^\dag$, Julia Stoyanovich$^\ddag$, Gautam Das$^{\dag\dag}$ \\
	\affaddr {$^\dag$University of Michigan, $^\ddag$Drexel University, $^{\dag\dag}$University of Texas at Arlington}
	{\email{
		$^\dag\{$asudeh, jag$\}$@umich.edu, $^\ddag$stoyanovich@drexel.edu, $^{\dag\dag}$gdas@uta.edu
        }
	}
}
\begin{document}
\maketitle
\begin{abstract}
Items from a database are often ranked based on a combination of multiple criteria.  A user may have the flexibility to accept combinations that weigh these criteria differently, within limits.  On the other hand, this choice of weights can greatly affect the fairness of the produced ranking.  In this paper, we develop a system that helps users choose criterion weights that lead to greater fairness.

We consider ranking functions that compute the score of each item as a weighted sum of (numeric) attribute values, and then sort items on their score.  Each ranking function can be expressed as a vector of weights, or as a point in a multi-dimensional space. For a broad range of fairness criteria, we show how to efficiently identify regions in this space that satisfy these criteria.  Using this identification method, our system is able to tell users whether their proposed ranking function satisfies the desired fairness criteria and, if it does not, to suggest the smallest modification that does.
We develop user-controllable approximation that and indexing techniques that are applied during preprocessing, and support sub-second response times during the online phase.
Our extensive experiments on real datasets demonstrate that our methods are able to find solutions that satisfy fairness criteria effectively and efficiently.   
% \jag{Perhaps a sentence or so on the main experimental results??}

\end{abstract}

\section{Introduction}
\label{sec:intro}

Data-driven algorithmic decisions are commonplace today.
%in the public sector and in the commercial sphere.  
Because of the impact these decisions have on individuals and on population groups, issues of algorithmic bias and discrimination are coming to the forefront of societal and technological discourse~\cite{BarocasSelbst}.  In the seminal work of Friedman and Nissenbaum~\cite{DBLP:journals/tois/FriedmanN96} a biased computer system is one that (1) systematically and unfairly discriminates against some individuals or groups in favor of others, and (2) joins this discrimination with an unfair outcome. 

A prominent source of bias in data-driven systems is the data itself, a phenomenon colloquially known as ``racism in --- racism out''.  For example, it has been shown that machine learning models trained on biased data will produce biased results, further propelling historical discrimination~\cite{DBLP:journals/corr/EnsignFNSV17}. Naturally, the effects of biased data are not limited to machine learning scenarios, but also impact processes that are directly designed and validated by humans.  Perhaps the most immediate example of such a process is a score-based ranker.  In this paper we consider the task of {\em designing a fair score-based ranking scheme}. %, focusing on applications that rank individuals.  

Ranking of individuals is ubiquitous, and is used, for example, to establish credit worthiness, desirability for college admissions and employment, and attractiveness as dating partners. A prominent family of ranking schemes are score-based rankers, which compute the score of each individual from some database $\mathcal{D}$, sort the individuals in decreasing order of score, and finally return either the full ranked list, or its highest-scoring sub-set, the top-$k$.  Many score-based rankers compute the score of an individual as a linear combination of attribute values, with non-negative weights.  Designing a ranking scheme amounts to selecting a set of weights, one for each feature, and validating the outcome on the database $\mathcal{D}$.

Our goal is to assist the user in designing a ranking scheme that both reflects a user's a priori notion of quality and is fair, in the sense that it mitigates {\em preexisting bias with respect to   a protected feature} that is embodied in the data. In line with prior work~\cite{DBLP:conf/kdd/FeldmanFMSV15,StoyanovichYJ18,DBLP:conf/ssdbm/YangS17,DBLP:conf/cikm/ZehlikeB0HMB17,DBLP:journals/datamine/Zliobaite17}, a protected feature denotes membership of an individual in a legally-protected category, such as persons with disabilities, or under-represented groups by gender or ethnicity.  Interpreting the definition of Friedman and Nissenbaum~\cite{DBLP:journals/tois/FriedmanN96} for rankings, a biased outcome occurs when a ranking decision is based fully or partially on a protected feature.   Discrimination occurs when this outcome is systematic and unfavorable, for example, when minority ethnicity or female gender systematically lead to placing individuals at lower ranks.  To make our discussion concrete, we consider an example.

\begin{example}\label{ex:uni} 
A college admissions officer is evaluating a pool of applicants, each with several potentially relevant attributes.  For simplicity, let us focus on two of these attributes --- high school GPA 
%(grade point average) 
and SAT score %(US Scholastic Assessment Test)
, and use these in a score-based ranking scheme.

As the first step, to make the two score components comparable, GPA and SAT scores may be normalized and standardized.  We will denote the resulting values $g$ for GPA and $s$ for SAT.  
The admissions officer may believe a priori that $g$ and $s$ should have an approximately equal weight, computing the score of an applicant $t \in \mathcal{D}$ as $f(t) = 0.5 \times s + 0.5 \times g$, ranking the applicants, and returning the top 500 individuals.  

Upon inspection, it may be determined that an insufficient number of women is returned among the top-$k$: at least 200 women were expected to be among the top-$500$, and only 150 were returned, violating a fairness constraint.  This violation may be due to a gender disparity in the data: in 2014, women scored about 25 points lower on average than men in the SAT test~\cite{SAT2014}.  

The system will then assist the user in identifying a new scoring functions $f'(t) = 0.45 \times s + 0.55 \times g$, which meets the fairness constraint and is close to the original function $f$ in terms of attribute weights, thereby reflecting the user's a priori notion of quality.  

\end{example} 

In machine learning, the common setup is to have training data for which we know the outcome (label), and then the problem is to have the system learn weights (or other model parameters) that result in an algorithm that maximizes the predicted outcome (or correctness of label).  While this problem setup is appropriate for many tasks, it requires unreasonably simplistic assumptions in many others.  For example, what is the outcome an admissions officer seeks to maximize in admitting students?  Some outcomes are relatively easy to measure, such as GPA after admission and enrollment.  But what the university presumably really cares about is long-term success: the admissions officer is looking for students who will go on to become rich or famous or successful in some other dimension that matters.  This outcome is fuzzy, multi-dimensional, and hard to measure.  It is also long-term --- the algorithm cannot really be tuned for today based on data regarding students admitted 30 years ago.  For these reasons, many practical systems have simple models with weights set by human experts, usually in a subjective manner.

Precisely because these weights are often subjectively chosen, we have an even greater fear of discrimination than just algorithmic bias.  In fact, there is a long history of people using justifiable models to be able to discriminate.  For example, legacy was added to the variables considered at admission, and given a high weight, to keep down the number of Jewish students, since ``too many'' of them would have been admitted considering academic achievements alone~\cite{jacobs13,karabel05}.

In this paper, we consider this sort of reverse problem: the selection of model weights after we already know (the distribution of attribute values in) the dataset to be scored and ranked.  The goal is to select weights such that desired fairness and diversity criteria are satisfied.   To be certain of meeting these criteria, the weights have to be selected after we have the dataset in hand.   If we know that the distribution of values in the dataset will not change too much over some window, we can go through a design process to choose model weights once using a representative sample of the data, and then just reuse the same model and weights for each dataset that follows.  We may still wish to verify that we continue to meet the required criteria, and adjust our ranking function if needed.  In short, the choice of ranking function is not a one-time thing.  Rather, in practice, ranking functions are frequently tuned, typically with small changes.

As such, we repeatedly have a human model designer trying to tune model weights.  It may be acceptable for this tuning process to take some time. However, we know that humans are able to produce superior results when they get quick feedback in a design or analysis loop.  Indeed, it is precisely this need that is a central motivation for OLAP, rather than having only long-running analytics queries.   Ideally, a human designer of a ranking function would want the system to support her work through interactive response times.  Our goal is to meet this need, to the extent possible.

In the remainder of this paper, we will present a {\em query answering system} that assists the user in designing fair score-based rankers.  As the first step, the system pre-processes a dataset of candidates off-line, and is then able to handle user requests in real time.  The user specifies a {\em query} in the form of a scoring function $f$, which associates non-negative weights with item attributes and is used to compute items scores, and to sort the items on their scores.  We assume the existence of a {\em fairness oracle} that, given an ordered list of items, returns $true$ if the list meets fairness criteria and so is {\em satisfactory}, and returns $false$ otherwise.  If the list of items was found to be unsatisfactory, we will suggest to the user an alternative scoring function $f'$ that is both satisfactory and close to the query $f$.  The user may accept the suggested function $f'$, or she may decide to manually adjust the query and invoke our system once again. 

Numerous fairness definitions have been considered in the literature~\cite{DBLP:conf/innovations/DworkHPRZ12,DBLP:journals/datamine/Zliobaite17}. A useful dichotomy is between {\em individual fairness}, and {\em group fairness}, also known as statistical parity. The former requires  that similar individuals be treated similarly, while the latter requires that demographics of those receiving a particular outcome are identical or similar to the demographics of the population as a whole~\cite{DBLP:conf/innovations/DworkHPRZ12}.  These two requirements represent intrinsically different world views, and accommodating both may require trade-offs~\cite{DBLP:journals/corr/FriedlerSV16}. Our focus is on group fairness, which is based on the relationship between (1) membership of individuals in demographic groups and (2) their ranked outcome.

While fairness in algorithmic systems is an active area of research~\cite{DBLP:journals/datamine/Zliobaite17}, our work is among a small handful of studies that focus on fairness in ranking~\cite{DBLP:journals/corr/CelisSV17,DBLP:conf/ssdbm/YangS17,DBLP:conf/cikm/ZehlikeB0HMB17}.  While others considered mitigating bias in the output of a ranker~\cite{DBLP:journals/corr/CelisSV17,DBLP:conf/cikm/ZehlikeB0HMB17}, or incorporating fairness constraints into ranked models~\cite{DBLP:conf/ssdbm/YangS17}, {\em our work is the first to support the user in designing fair ranking schemes}.  

Our methods are general, and can accommodate a large class of group fairness constraints --- including those based on asserting a minimum or a maximum number of individuals at the top-$k$ that belong to a particular demographic group, as in Example~\ref{ex:uni} and in the work of Celis et al.~\cite{DBLP:journals/corr/CelisSV17}, but going far beyond this class.  In fact, our techniques treat the evaluation of fairness constraints as a black box (embodied by the fairness oracle), and support any constraint that can be evaluated over a ranked list of items.  We are not limited to binary protected group membership (such as gender in Example~\ref{ex:uni}, and in the work of Zehlike et al.~\cite{DBLP:conf/cikm/ZehlikeB0HMB17}), and can accommodate fairness constraints on multiple non-overlapping population groups (such as ethnicity).  Further, we support constraints that are stated over more than one sensitive attribute; for example, we can enforce constraints on gender, ethnicity and age group simultaneously.

\vspace{2mm} 
\noindent{\bf Summary of contributions:} In this paper, we assist the user in designing fair score-based ranking schemes.  Towards this goal, we characterize the space of linear scoring functions (with their corresponding weight vectors), and characterize portions of this space based on the ordering of items induced by these functions. We develop algorithms to determine boundaries that partition the space into regions where any desired fairness constraint is satisfied, called {\em satisfactory regions}, and regions where the constraint is not satisfied. Given a user's query, in the form of a scoring function $f$, we develop techniques to find the nearest scoring function $f'$ that satisfies the constraint (or to state that the constraint is not satisfiable).  Our 
%algorithmic 
contributions are as follows: %\abol{We may need to provide the necessary technical highlights. For examples the fact that we consider a fixed number of dimensions is important; because consequently nothing is NP-complete, all the proposed solutions polynomial. Or that we propose the approximate approx bcz we want to work as an online alg. not bcz it is NP-complete.\\} \julia{I wouldn't talk about that in the intro, but let's discuss.}
%\abol{More importantly, I believe the connection b/w the motivation exmaples/highlights, and the system architecture is lost. Here we do not even mentioned that we preprocess the data offline and our objective is to be fast in online answering of the user queries. It is not even clear {\em why we design a query answering system}. As by reading the examples and motivations (and even the title) it seems like it is a single shot thing. If it is for a single shot, what is the purpose of preprocessing the data.} \julia{I rephrased the paragraph right after Example 1, take a look.}

\begin{itemize}[itemsep=0pt,leftmargin=10pt]
\item We propose a query answering system that helps users choose ranking functions that meet fairness requirements. (\S~\ref{sec:intro})

\item Carefully defining the terms, problem statement, and assumptions (\S~\ref{sec:pre}), we pursue offline indexing of the data that helps in online answering of users' queries.

\item We introduce the notion of \rankswitch and use a transformation of the items to dual space to identify satisfactory regions, in which fairness constraints are met.  (\S~\ref{sec:2d})

\item We propose the ray sweeping algorithm \twodray for indexing the satisfactory regions in two-dimensional space and \twodonlinensp, an exact logarithmic binary search-based algorithm that takes as input a user's query $f$ and proposes an alternative $f'$ in real time. (\S~\ref{sec:2d}) %for the online answering of the user queries.

\item For studying the linear ranking functions in fixed-size multi dimensional spaces, we introduce an angle coordinate system. We propose \hyperpolar to transform the \rankswitches in the angle coordinate system.  (\S~\ref{sec:md})

\item We use ``the arrangement of hyperplanes''~\cite{edelsbrunner, orlik2013arrangements} and propose the polynomial time exact algorithms \arrangement and \mdbaseline for identifying satisfactory regions and proposing fair scoring functions to a user in a space with arbitrary number of dimensions. (\S~\ref{sec:md})

\item We propose the arrangement tree data structure for optimizing the running time of \arrangementn. (\S~\ref{sec:md})

\item We propose a user-controllable grid partitioning of the angle coordinate system that guarantees a maximum angle distance between every pair of points in a cell.

\item We use the grid partitioning of the angle coordinate system and propose an approximate algorithm (for multi-dimensional space) that guarantees a controllable distance from the optimal solution, while enabling efficient processing of users' queries. This approximate algorithm provides opportunities for speeding up the indexing algorithms by limiting the arrangements to each cell and applying an early stopping strategy. We propose algorithms \cellplanensp, \cellarrangementnsp, and \cellcoloring for indexing, and an efficient online algorithm \mdonline for answering users' queries. (\S~\ref{sec:speedup})
 
%(iv) We develop the BFS algorithm \cellcoloring to assign a satisfactory function to every cell that does not contain a satisfactory function.

%\item For large-scale datasets, we propose a sampling-based strategy that provides a ``close to satisfactory'' output with high probability.
%\item In addition to the theoretical analyses, we conduct extensive experiments on real datasets that confirm the efficiency and effectiveness of our proposal in practice.
\end{itemize}

In addition to the theoretical analyses, we conduct extensive experiments on real datasets that confirm the efficiency and effectiveness of our techniques, as described in \S~\ref{sec:exp}.
Related work and conclusions are discussed  in \S~\ref{sec:related} and \S~\ref{sec:conclusion}, respectively.

%We begin by setting up the problem in \S~\ref{sec:pre}. 
%We then consider the simple 2D case in \S~\ref{sec:2d}, followed by a study of the  general MD case in \S~\ref{sec:md}. In \S~\ref{sec:speedup}, we propose several techniques for making the MD solutions practical. 

\section{Preliminaries} \label{sec:pre}

\noindent{\bf Data model:}
We are given a dataset $\mathcal{D}$ of $n$ items, each with $d$ scalar scoring attributes\footnote{\small Additional non-scalar attributes are considered in the fairness model.}. 
We represent an item $t$ as a $d$-long vector of scoring attributes, $\{t[1], t[2], \ldots, t[d]\}$.
Without loss of generality, we assume that each scoring attribute is a non-negative number and that larger values are preferred.  This assumption is straightforward to relax with some additional notation and bookkeeping. 

%In addition to $d$ scoring attributes, items in $\mathcal{D}$ have one or more type attributes (e.g., gender, ethnicity, or age group).  Each type attribute is discrete and categorical, and it partitions the items in $\mathcal{D}$. We will state fairness constraints based on the order in which items of different types appear in a ranking.

\vspace{2mm}
\noindent{\bf Ranking model:}
Our focus in this paper is on the class of linear ranking functions that use a weight vector $\vec{w} ~=~\{w_1, w_2,\ldots,w_d\}$ to compute a goodness score $f_{\vec{w}}(t)$\footnote{\small To simplify notation, we use $f(t)$ to refer to $f_{\vec{w}}(t)$.} of item $t$ as $\Sigma_{j=1}^d w_j t[j]$.
Without loss of generality, we assume each weight $w_j\in\vec{w} \geq 0$.
The scores of items are used for ranking them. 
We assume that an item with a higher score outranks an item with a lower score.

Our ranking model has an intuitive geometric interpretation: items are represented by points in $\mathbb{R}^d$, and a linear scoring function $f$ is represented by a ray starting from the origin and passing through the point $\vec{w} = \{w_1, w_2,...,w_d\}$.   The score-based ordering of the points induced by $f$ corresponds to the ordering of their projections onto the ray for $\vec{w}$. %in $\mathbb{R}^d$ on the corresponding ray for $\vec{w}$.
Figure~\ref{fig:2d} shows the items of an example dataset with $d=2$ as points in $\mathbb{R}^2$.
The function $f=x+y$ is represented in Figure~\ref{fig:2d1} as a ray stating from the origin and passing through the point $\{ 1,1 \}$. Projections of the points onto the ray specify their ordering based on $f$.

Note that the rays corresponding to functions $f$ and $f'$ are the same if the weight vector of $f'$ is a linear scaling of the weight vector of $f$. This is because a weight vector $\vec{w} = \{w_1, w_2,\ldots, w_d\}$ induces the same ordering on the items as does its linear scaling $\vec{w'} = \{c.w_1, c.w_2,\ldots,c.w_d\}$, for any $c>0$. %, equivalent with $\vec{w}$.
%That is because these functions always generate the same ordering of items and are considered as equivalent.
Hence, the {\em distance between two functions} $f$ and $f'$ is considered as the angular distance between their corresponding rays in $\mathbb{R}^d$.
For example, the distance between $f=x+y$ and $f'=100x+100y$ is 0, while the distance between $f=x+y$ and $f''=x$ is $\frac{\pi}{4}$, the angular distance between the ray corresponding to $f$ in Figure~\ref{fig:2d1} and the $x$-axis.
\submit{Further details for computing the angular distance between two functions are provided in the technical report~\cite{techreport}.}
\techrep{Further details for computing the angular distance between two functions are provided in Appendix~\ref{appendix:angle}.}
For every item $t\in\mathcal{D}$, {\em contour} of $t$ on $f$ is the value combinations in $\mathbb{R}^d$ with the same score as $f(t)$~\cite{asudeh2016query, asudeh2017rrms}. For linear functions, the contour of an item $t$ is the hyperplane $h$ that is perpendicular to the ray of $f$ and passes through $t$. %\julia{What's a set of combinations in $\mathbb{R}^d$?}

\begin{figure}[t!]
    \centering
    \begin{subfigure}[b]{0.24\textwidth}
        \centering
        \includegraphics[width = 0.95\textwidth]{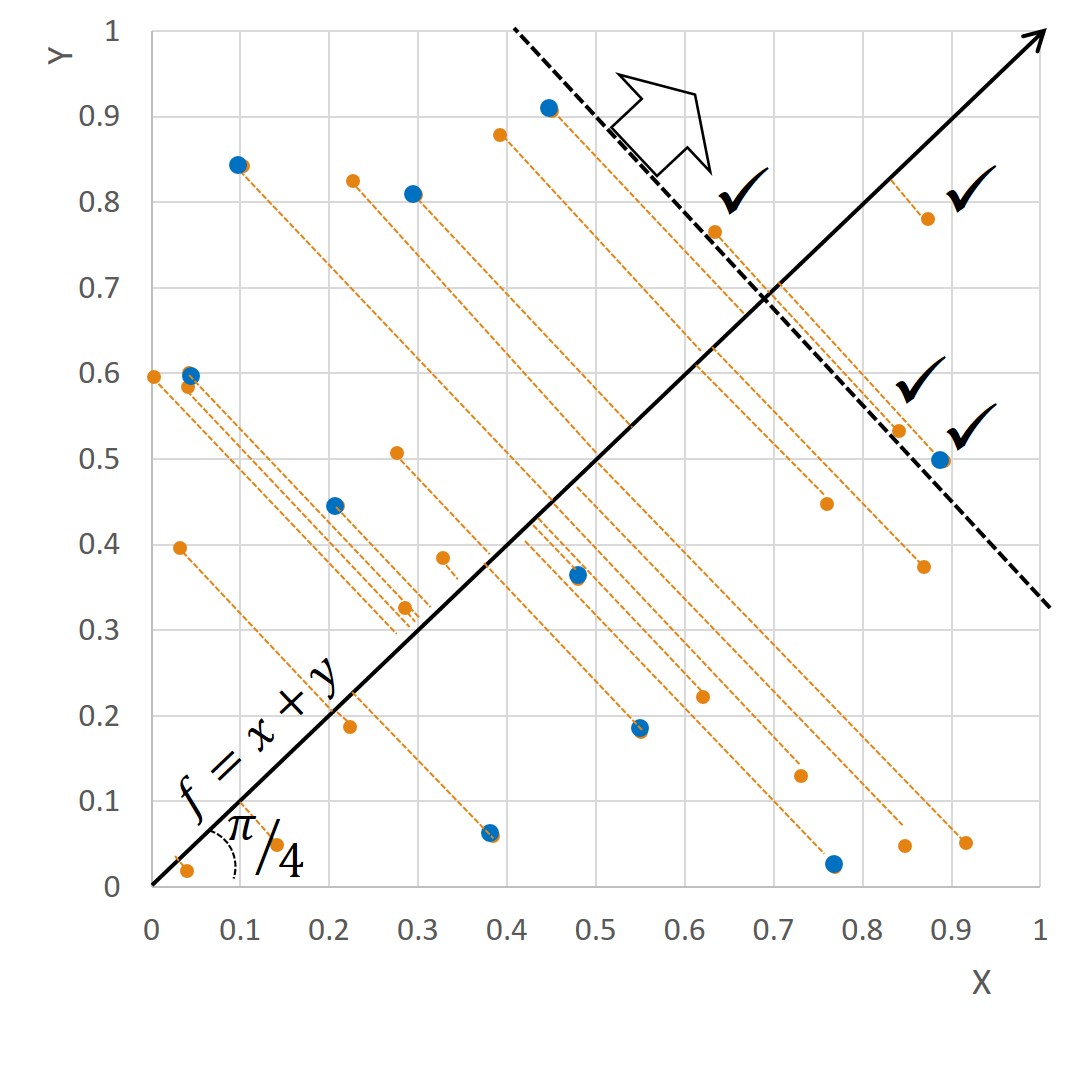}
        \caption{function $f=x+y$}\label{fig:2d1}
    \end{subfigure}%
    ~ 
    \begin{subfigure}[b]{0.24\textwidth}
        \centering
        \includegraphics[width = \textwidth]{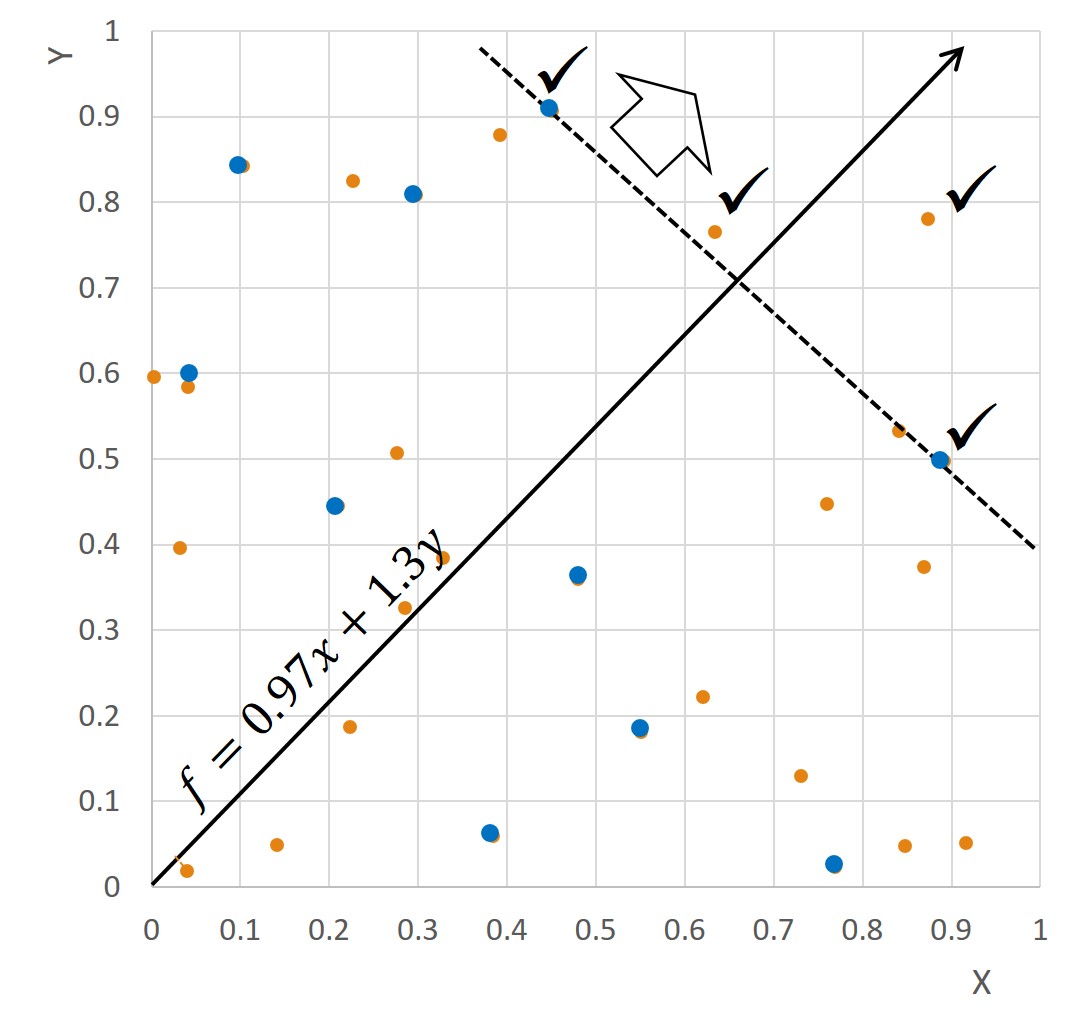}
        \caption{function $f=0.97x+1.3y$}\label{fig:2d2}
    \end{subfigure}
    \vspace{-4mm}\caption{2D, the effect of two similar functions on output fairness.}\label{fig:2d}
\end{figure}

%We will need to characterize similarity of weight vectors.  For this purpose, we compute the distance between two vectors, $\vec{v}$ and $\vec{w}$ as $\Sigma_{j=1}^d \{(w_j - v_j)^2/w_j v_j\}$.   We require normalization with the denominator because different dimensions may be scaled differently, and hence have very different weights.  Note that all weights are required to be non-zero.

%The ordering of items based on $\vec{w}$ is equal to the ordering of projection of their corresponding points in $\mathbb{R}^d$ on the corresponding ray for $\vec{w}$. Figure~\ref{fig:2d} shows the items of an example dataset where $d=2$, as the points in $\mathbb{R}^2$. The ray corresponding to the function $f=x+y$ is specified in Figure~\ref{fig:2d1} as a ray stating from the origin and passing through the point $\{1,1\}$. In addition, Figure~\ref{fig:2d1} shows projections of the points on the ray of $f$, which specify their ordering based on $f$.

\vspace{2mm}
\noindent{\bf Fairness model:} We adopt a general ranked fairness model, in which a fairness oracle $\mathcal{O}$ takes as input an ordered list of items from $\mathcal{D}$, and determines whether the  list meets fairness constraints: $\mathcal{O}: \mbox{ordered}(\mathcal{D}) \rightarrow \{\top, \bot\}$. A scoring function $f$ that gives rise to a fair ordering over  $\mathcal{D}$ is said to be {\em satisfactory}.

In addition to scoring attributes, discussed in the data model, items are associated with one or several type attributes. A type corresponds to a {\em protected feature} such as gender or race.  We discussed bias with respect to a protected feature in the introduction.
In the example in Figure~\ref{fig:2d}, there is a single binary type attribute, denoted by blue and orange colors.  Suppose that the fairness oracle returns true if the top-$4$ items contain an equal number of items of each type. Function $f=x+y$ in Figure~\ref{fig:2d1} is not satisfactory as it has 3 orange points and one blue point in its top-4, while $f'=0.97x + 1.3y$ in Figure~\ref{fig:2d2} contains two points each type in its top-$4$ and is satisfactory.

While our fairness model is general, in our experimental evaluation we focus on fairness constraints that were considered in recent literature~\cite{DBLP:journals/corr/CelisSV17,StoyanovichYJ18,DBLP:conf/cikm/ZehlikeB0HMB17}:  We work with proportionality constraints that bound the number of items belonging to a particular demographic group (as represented by an assignment of a value to a categorical type attribute) at the top-$k$, for some given value of $k$. 

%Zehlike et al.~\cite{DBLP:conf/cikm/ZehlikeB0HMB17} accommodate only binary membership in a protected group (such as gender) and make a statement about a ranking being fair based on the expected proportion of members of a protected group in a prefix of a top-$k$.  Celis et al.~\cite{DBLP:journals/corr/CelisSV17} quantify fairness in a ranked list as an upper bound on the number of items at the top-$k$ that belong to multiple, possibly overlapping, types.  For example, at most 5 males and at most 6 Asians can be at the top-$k$, and the same individual can be both male and an Asian.  We will use variants of these constraints in our experiments.

%Suppose that we select the top-$k$ items in the rank order, e.g. students for admission to a degree program at a university. Common fairness and diversity requirements can be expressed as constraints on the numbers of items selected in each class.  For example, we could check if there is proportionate representation of each class of items in the selected set.

%In the example provided in Figure~\ref{fig:2d}, in addition to the two scoring attributes, the items have a non-scoring binary attribute specified by blue and orange.

%\subsection{The origin-staring rays and the angle between them}\label{subsec:computeAngledistance}
%\subsection{Computing the The angle  between them}\label{subsec:computeAngledistance}

\subsection{Problem statement}\label{subsec:problemstate}
A given query $f$, with a corresponding weight vector, may not satisfy the required fairness constraints.  Our problem is to propose a scoring function $f'$ with a similar weight vector as $f$ %(with the smallest angle to the initial choice) 
that does satisfy the constraints, if one exists.  

Of course, the user may not accept our proposal.  Instead, she may try a different weight vector of her liking, which we can again examine and either approve or propose an alternative.  The final choice of an acceptable scoring function is up to the user. %Ultimately, it is the user's decision what to do.
The formal statement of our problem is as follows:

\vspace{0.02in}
\medskip\noindent
\framebox[\columnwidth]{\parbox{0.9\columnwidth}{ \textbf{\textsc{Closest Satisfactory Function:}}
\\ \textit{Given
a dataset $\mathcal{D}$ with $n$ items over $d$ scalar scoring attributes, %and $d'$ protected attributes,
%while each item $t\in\mathcal{D}$ belongs to the class $type(t)$,
a fairness oracle 
$\mathcal{O}: \mbox{ordered}(\mathcal{D}) \rightarrow \{\top, \bot\}$,
%$\mathcal{O}: \mbox{ordered}(\mathcal{D}) \rightarrow \{$true, false $\}$,
and a linear scoring function $f$ with the weight vector $\vec{w} = \{ w_1,w_2,\cdots ,w_d\}$, find the function $f'$ with the weight vector $\vec{w'}$ such that $\mathcal{O}($OrderBy$_{f'}(\mathcal{D}))=\top$  and the angular distance between $\vec{w}$ and $\vec{w'}$ is minimized.
}}}\\

%\julia{We have multiple types now, OK to remove type from the problem statement?}

\vspace{2mm}
\noindent{\bf High-level idea:}
From the system's viewpoint, the challenge is to propose similar weight vectors that satisfy the fairness constraints, in interactive time.  To accomplish this, our solution will operate with an offline phase and then an online phase.  In the offline phase, we will process the dataset, and develop data structures that will be useful in the online phase.  In the online phase, we will exploit these data structures to quickly find similar satisfactory weight vectors.  %We discuss these phases in turn in the following sections.
In the next section, we consider the easier to visualize 2D case, in which the dataset contains $2$ scalar scoring attributes.
%We first discuss how to identify and index the ranges of satisfactory weight vectors. To do so, we introduce the \rankswitch notion and use a dual transformation of items to design a ray-sweeping based algorithm \twodray.
The terms and techniques discussed in \S~\ref{sec:2d} will help us in \S~\ref{sec:md} for developing algorithms for the general multi-dimensional case where the number of scalar scoring attributes is $d>2$.
%Next, we provide a review of some geometric terms that will help in devising the algorithms. Then, in \S~\ref{sec:2d} and~\ref{sec:md} we study the problem for two dimensional and multi-dimensional cases where the number of scalar attributes are $d=2$ and $d>2$, respectively.

\section{The Two-Dimensional Case}\label{sec:2d}

In this section we consider a simplified version of the problem in which only two scalar attributes ($x$ and $y$) participate in the ranking.
The problem in 2D is easier to understand, visualize, and explain, and allows us to create the foundation for the general problem, which we will address in subsequent sections.  We begin by introducing the central notion of \rankswitch that partitions the space of linear functions into disjoint regions.  Then, we use this concept to develop two algorithms: an offline algorithm to identify and index the satisfactory regions, and an online algorithms that can be used repeatedly, as the domain expert interactively tunes weights, to obtain a desired ranking function.  

%In this section, as well as \S~\ref{sec:md}, our first objective is to identify and index the ranges of satisfactory functions, i.e., $\{f | \mathcal{O}($OrderBy$_{f}(\mathcal{D}))=$ true$\}$, in the offline preprocessing phase. Then, we aim to design efficient algorithms that use the indexed data and answer users' queries in an online manner.
%The critical requirement is that while preprocessing can be less efficient and take more time, the online query answering should be fast.
%In the following, we first introduce the notion of 
%. We use a dual transformation of the items for identifying the \rankswitches and propose a ray sweeping algorithm for identifying/indexing the satisfactory regions as a one-dimensional sorted list.
%Then, in \S~\ref{subsec:2donline}, we propose a logarithmic-time algorithm for the online answering of the user queries.

\subsection{Ordering exchange}\label{subsec:rankswitch}
Each item in a 2-dimensional dataset can be represented as a point in $\mathbb{R}^2$, and each ranking function $f$ can be represented as a ray starting from the origin. The ordering of the items is the ordering of their projections on the ray of $f$. For instance, Figure~\ref{fig:2d} specifies the projection of the points %in the figure 
on the ray of $f=x+y$.
One can see that the set of rays between the $x$ and $y$ axes represents the set of possible ranking functions in 2D.
Even though an infinite number of rays exists between $x$ and $y$, the number of possible orderings of $n$ items is limited to $n!$, the number of their permutations.  
Our central insight is that we do not need to consider every possible ranking function: we only need to consider at most as many as there are orderings of the items, as we discuss next. 

Consider two points $t_1\langle 1,2\rangle$ and $t_2\langle 2,1\rangle$, shown in Figure~\ref{fig:2dexplain}. The projections of $t_1$ and $t_2$ on the $x$-axis are the points $x=1$ and $x=2$, respectively. Hence, the ordering based on $f=x$ is $t_2\succ t_1$, which denotes that $t_2$ is preferred to $t_1$ by $f$. Moving away from the $x$-axis towards the $y$-axis, the distance between the projections of $t_1$ and $t_2$ on the ray decreases, and becomes zero at $f=x+y$. Then, moving from $f=x+y$ to the $y$-axis, the ordering between these two points changes to $t_1\succ t_2$. As we continue moving towards the $y$-axis, the distance between the projections of $t_1$ and $t_2$ increases, and their order remains  $t_1\succ t_2$. % and those do not switch ordering again.
Using this observation, we can partition the set of scoring functions based  on their angle with the $x$-axis into
%It thus partitions the set of ranking functions, based on their angle with the $x$ axis, to
$F_1=[0, \pi/4]$ and $F_2 = [\pi/4, \pi/2]$, such that for every $f\in F_1$ the ordering is $t_2 \succeq t_1$ and for every $f'\in F_2$ the ordering is $t_1 \succeq t_2$.
%\julia{The statement before was: ``$(t_2, t_1)$ and for every $f'\in F_2$ the ordering is $(t_1, t_2)$''.  This is (a) contrary to the order described above and (b) is imprecise, because you are not explicitly explaining what happens when neither of the points is preferred to the other. I suggest to use notation $t_1 \succ_{f} t_2$ when $t_1$ is ranked higher than $t_2$ by $f$, and $t_1 \succeq_{f} t_2$ when $t_1$ is ranked no lower than $t_2$. For example: $t_1 \succ_{x} t_2$, $t_2 \succ_{x+y} t_1$. Or even omitting the subscript: $t_1 \succ t_2$ is clearer than $(t_1, t_2)$.}
%\julia{Also, notation $F_1=[0, \pi/4]$ an then $f \in F_1$ is not clean.}\abol{I used capital F because it refers to a set of functions (every point in the range is a function).}

\begin{figure}[t!]
    \centering
    \includegraphics[width = 0.4\textwidth]{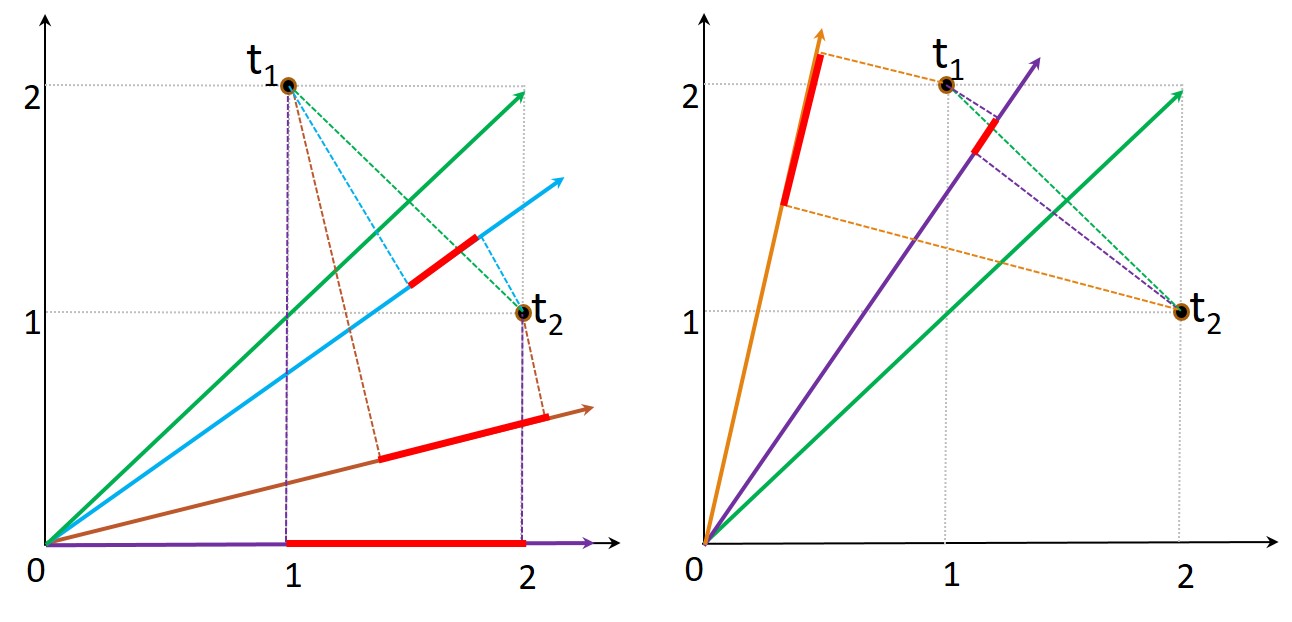}
    \vspace{-3mm}\caption{The \rankswitch between a pair of points. }\label{fig:2dexplain}
\end{figure}

Given the importance of the place where items $t_1$ and $t_2$ switch ordering, we define the {\em \rankswitch} as the ranking functions according to which $t_1$ and $t_2$ are equally good. In 2D, the \rankswitch of a pair of points is at most a single function.

For any specified ordering of items, the fairness constraint either is satisfied or it is not.  If this ordering is changed, the satisfaction of the fairness constraint may change as well.  Therefore, in the space of possible ranking functions, every boundary between a satisfactory region and an unsatisfactory region must comprise \rankswitch functions.  
%Identifying the \rankswitches between the pairs of items in $\mathcal{D}$ is a key idea here.

\subsection{Offline processing}\label{subsec:2doffline}

The goal of offline processing is to identify and index the satisfactory functions in a way that allows efficient answering of online queries.
Following the example in Figure~\ref{fig:2dexplain}, we propose a {\em ray sweeping} algorithm for identifying satisfactory functions in 2D.

\begin{figure*}[!tb] 
    \begin{minipage}[t]{0.2\linewidth}
        \centering
        \vspace{-25mm}
        \begin{tabular}{|l|c|c|}
               \hline
                       $t_1$&1&3.5 \\ \hline
                       $t_2$&1.5&3.1 \\ \hline
                       $t_3$&1.91&2.3 \\ \hline
                       $t_4$&2.3&1.8 \\ \hline
                       $t_5$&3.2&0.9 \\ \hline
        \end{tabular}
        \vspace{3mm}\caption{A 2D dataset}
        \label{fig:dual1}
    \end{minipage}
    %\hspace{1mm}
    \begin{minipage}[t]{0.26\linewidth}
        %\centering
        \includegraphics[width = \textwidth]{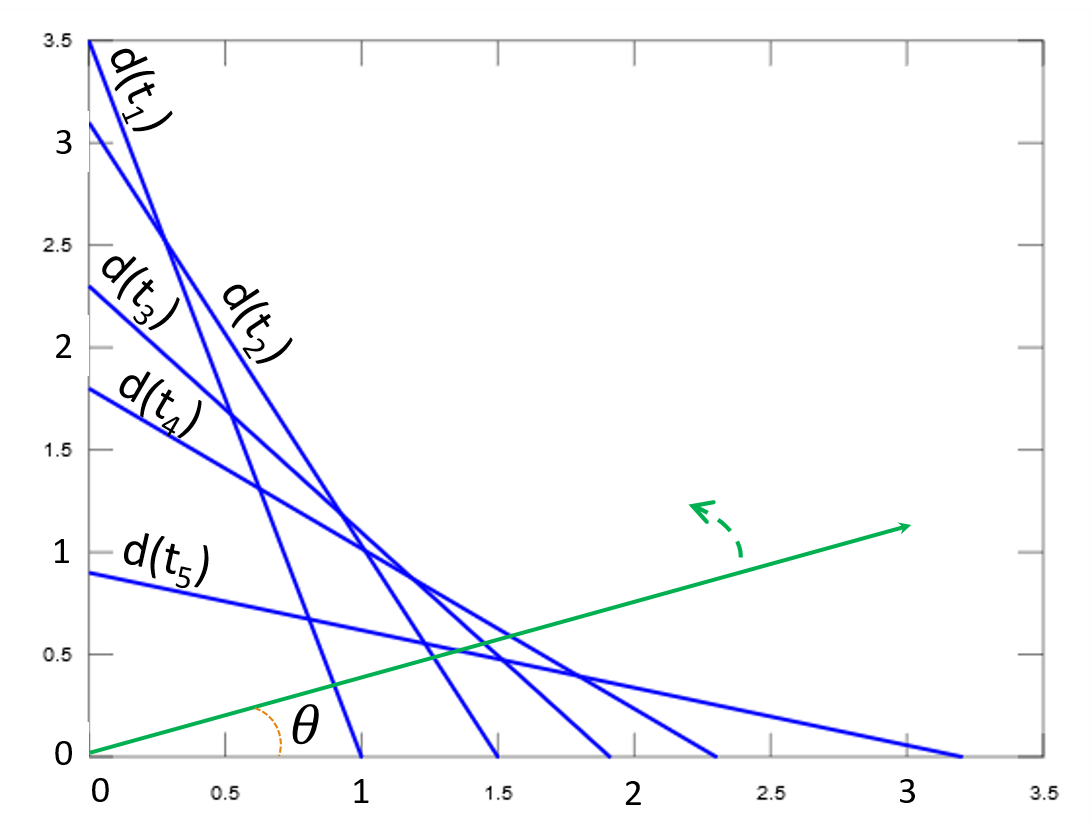}
        \vspace{-6mm}\caption{Dual presentation of Fig.~\ref{fig:dual1}}
        \label{fig:dual2}
    \end{minipage}
    %\hspace{1mm}
    \begin{minipage}[t]{0.26\linewidth}
        %\centering
        \includegraphics[width = \textwidth]{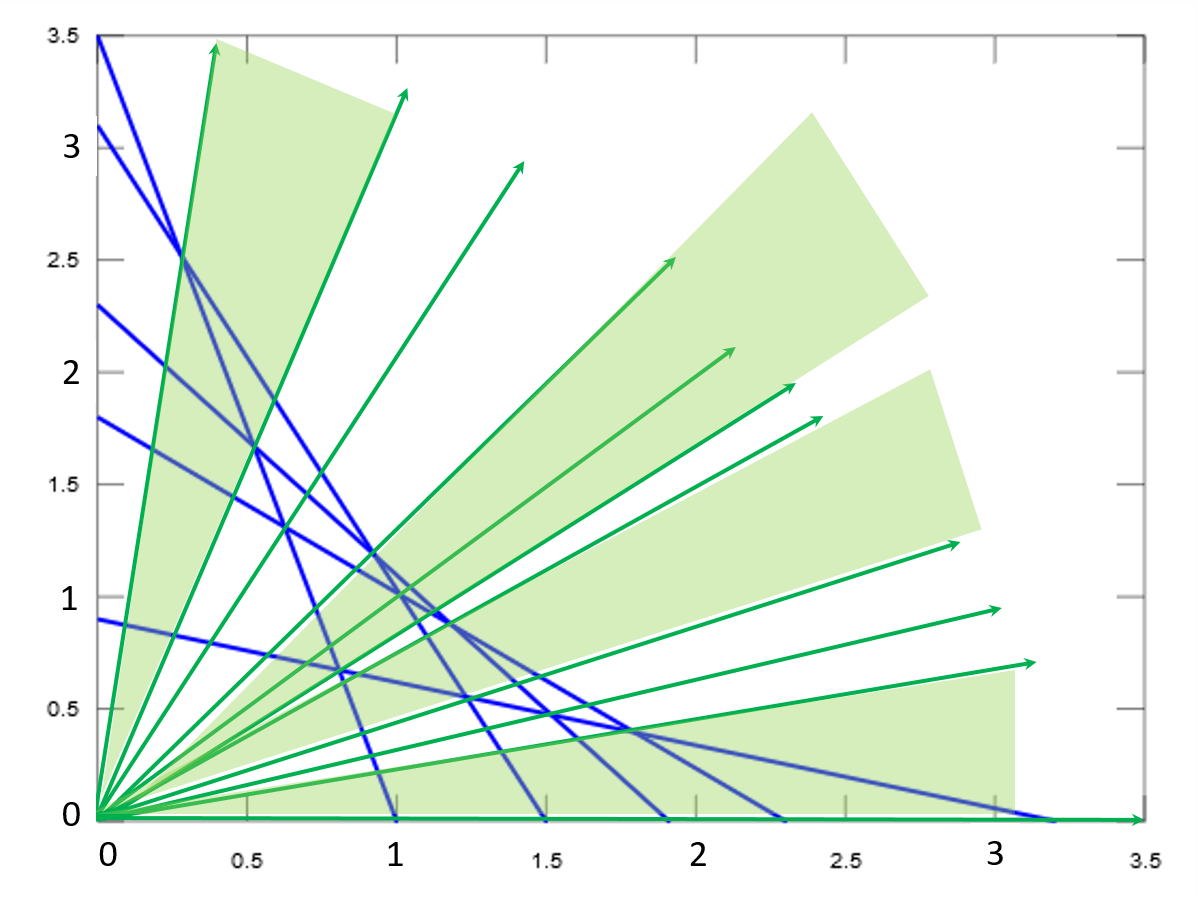}
        \vspace{-6mm}\caption{Satisfactory sectors}
        \label{fig:dual3}
    \end{minipage}
    %\hspace{1mm}
    \begin{minipage}[t]{0.26\linewidth}
        %\centering
        \includegraphics[width = \textwidth]{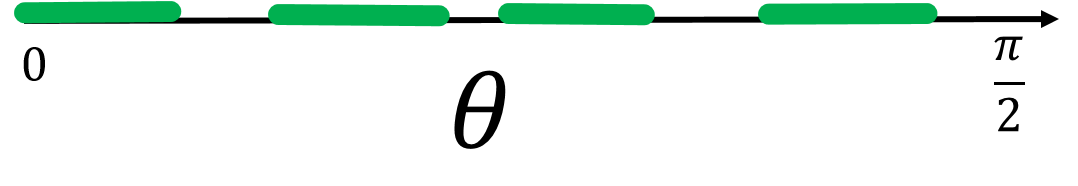}
        \vspace{-6mm}\caption{Satisfactory regions}
        \label{fig:dual4}
    \end{minipage}
\end{figure*}

To identify the \rankswitches of pairs of items, we transform items into a dual space~\cite{edelsbrunner}, where every item $t$ is transformed into the line $\mathsf{d}(t)$, as follows:
\begin{align}\label{eq:2ddual}
\mathsf{d}(t): t[1].x + t[2].y = 1
\end{align}
%Let the line be the dual transformation of the item $t\in \mathcal{D}$.
The ordering of the items based on a function $f$ with the weight vector $\{ w_1, w_2\}$ is the ordering of the intersections of the lines $\mathsf{d}(t)$ with the ray starting from the origin and passing through the point $\langle w_1, w_2\rangle$\footnote{\small This is because the line showing the contour of $t$, transforms to the intersection point of $\mathsf{d}(t)$ and the ray of $f$. %\julia{We no longer talk about contours, is this footnote necessary?}\abol{this footnote is to explain why an item transformed to a line, but the function is still a ray.}\julia{I still don't understand this.  What's a contour?}\abol{contour of a point are the set of points with the same score. we have defined in in section 2 -- p3c1, bottom of the page}
}.
For example, Figure~\ref{fig:dual2} shows the dual transformation (using Eqaution~\ref{eq:2ddual}) of the 2D dataset provided in Figure~\ref{fig:dual1}.
Therefore, the \rankswitch of a pair $t_i$ and $t_j$ is the intersection of $\mathsf{d}(t_i)$ and $\mathsf{d}(t_j)$.
For example, in Figure~\ref{fig:dual2}, the \rankswitch of $t_1$ and $t_2$ is the top-left intersection (of lines $\mathsf{d}(t_1)$ and $\mathsf{d}(t_2)$). %\julia{Figures 4, 5 and 6 need larger font on the axes.}

Using Equation~\ref{eq:2ddual}, the intersection of the lines $\mathsf{d}(t_i)$ and $\mathsf{d}(t_j)$ can be computed by solving the following system of equations:
\[
    \times_{\mathsf{d}(t_i),\mathsf{d}(t_j)}: \left\{
                \begin{array}{ll}
                  t_i[1]x + t_i[2]y=1\\
                  t_j[1]x + t_j[2]y=1
                \end{array}
              \right.
  \]
The \rankswitch is the origin-starting ray with the angle:
\begin{align}\label{eq:2dintersect}
\nonumber \Rightarrow &x = (1-\frac{t_i[2]}{t_j[2]})/(t_i[1]-\frac{t_j[1]t_i[2]}{t_j[2]}) \\
\nonumber \Rightarrow &y = \frac{1-t_i[1]x}{t_i[2]} \\
          \Rightarrow &\theta = \arctan(y/x)
\end{align}

\begin{algorithm}[!h]
\caption{{\bf \twodray}\\
         {\bf Input:} dataset $\mathcal{D}$ and fairness oracle $\mathcal{O}$ \\
         {\bf Output:} sorted satisfactory regions $S$
        }
\begin{algorithmic}[1]
\label{alg:2dray}
    \STATE $\Theta = \{ 0\}$
    \FOR{i = 1 to $n$-1}
        \FOR{j = i+1 to $n$}
            \STATE {\bf if} $t_i$ or $t_j$ dominates the other {\bf then continue}
            \STATE $\theta_{ij}$ = Angle of \rankswitch of $t_i$ and $t_j$ (Eq.~\ref{eq:2dintersect})
            \STATE add $(\theta_{ij},t_i,t_j)$ to $\Theta$
        \ENDFOR
    \ENDFOR
    \STATE sort $\Theta$\label{line:2d-sort}
    \STATE $\Omega$ = sort $\{ t_i\in\mathcal{D} \}$ on $x$-axis, $i=1$
    \WHILE{$i<|\Theta|$ and $\mathcal{O}(\Omega)$ = False}
        \STATE $i= i+1$
        \STATE $(\theta,a,b) = \Theta[i]$
        \STATE swap $a$ and $b$ in $\Omega$
    \ENDWHILE
    \STATE $S = [\langle \Theta[i],0\rangle ]$, flag = True
    \FOR { $j= i+1$ to $|\Theta|-1$}
        \STATE $(\theta,a,b) = \Theta[j]$
        \STATE swap $a$ and $b$ in $\Omega$
        \STATE sign = $\mathcal{O}(\Omega)$
        \STATE {\bf if} flag = True and sign = False {\bf then} append($S$, $\langle \Theta[j],1\rangle$)
        \STATE {\bf else if} flag = False and sign = True {\bf then} append($S$, $\langle \Theta[j],0\rangle$)
        \STATE flag = sign
    \ENDFOR
    \STATE {\bf if} flag = True {\bf then} append($S$, $\langle \pi/2,1\rangle$)
    \STATE {\bf return} $S$
\end{algorithmic}
\end{algorithm}

%\julia{Need a segway, this is abrupt.}
Now, we use the \rankswitches to design the ray sweeping algorithm \twodray, presented in Algorithm~\ref{alg:2dray}: the algorithm first computes the \rankswitch between the pairs of items that do not dominate each other\footnote{\small $t$ dominates $t'$ if $\forall i\in [1,d]$, $t[i]\geq t'[i]$ and $\exists j\in [1,d]$ such that $t[j] > t'[j]$.~\cite{asudeh2016discovering}} %with different types\footnote{\small Please note that the ordering between the items with the same types does not affect the output of the fairness oracle.}, 
using Equation~\ref{eq:2dintersect}, and adds them, in addition to the angle $0$, to a list. 
%\julia{No need to look at all pairs of items, only at those where neither dominates the other.  I think this can be done in less than $O(n^2)$, and, if so, would reduce the over-all complexity. Let's brush up on Bentley~\cite{DBLP:journals/cacm/Bentley80}.}
%\abol{it does not change the complexity. Example: consider a dataset that all the items are in skyline.}
Next, Line~\ref{line:2d-sort} sorts the angles in an ascending order.
Then it orders the items in $\mathcal{D}$ based on the $x$-axis (angle $0$) and gradually updates the ordered list ($\Omega$) as it sweeps the ray toward the  $y$-axis (angle $\pi/2$), by changing the order of pairs of items in their \rankswitches.
Upon finding a satisfactory sector, the algorithm continues attaching the neighboring sectors as long as those are still satisfactory, to generate a satisfactory region.

Algorithm~\ref{alg:2dray} stores the borders of the satisfactory regions in $S$ as pairs $\langle \theta, 0/1\rangle$, where $\langle \theta, 0\rangle$ represents that $\theta$ is the start of a satisfactory region, while $\langle \theta, 1\rangle$ represents that $\theta$ is the end of the region.  Consider Figure~\ref{fig:dual3} and suppose that the green sectors are labeled as satisfactory by the fairness oracle.  Figure~\ref{fig:dual4} shows the satisfactory regions produced by Algorithm~\ref{alg:2dray}. One can see the third from the left satisfactory region is the union of two neighboring satisfactory sectors in Figure~\ref{fig:dual3}.

\begin{theorem}\label{th:2dray}
Algorithm~\ref{alg:2dray} has time complexity $O(n^2 (\log n + \mathbb{O}_n))$, where $\mathbb{O}_n$ is the time complexity of $\mathcal{O}$ for input of size of $n$.
\end{theorem}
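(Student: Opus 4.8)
The plan is to account separately for the three phases of Algorithm~\ref{alg:2dray} --- enumerating the \rankswitches, sorting them, and performing the ray sweep --- and to show that the claimed bound is simply the sum of their costs. The first step is to bound the size of $\Theta$. Each unordered pair $\{t_i,t_j\}$ contributes at most one angle, since in 2D the \rankswitch of a pair is a single function (as noted in \S~\ref{subsec:rankswitch}), and pairs in which one item dominates the other are skipped on Line~4. Hence $|\Theta|\le\binom{n}{2}+1=O(n^2)$. The double loop on Lines~2--8 ranges over all $O(n^2)$ pairs, and for each pair both the dominance test and the closed-form angle computation of Equation~\ref{eq:2dintersect} run in $O(1)$ time, so this phase costs $O(n^2)$ in total.

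Next I would bound the sorting. Sorting $\Theta$ on Line~\ref{line:2d-sort} costs $O(|\Theta|\log|\Theta|)=O(n^2\log(n^2))=O(n^2\log n)$, and the initial sort of the items along the $x$-axis costs $O(n\log n)$, which is subsumed by the former.

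The crux is the sweep carried out by the two loops following Line~\ref{line:2d-sort}. Together they advance through each entry of $\Theta$ exactly once, so the number of iterations is $O(n^2)$. Each iteration performs one swap of a pair in the current ordering $\Omega$ and one invocation of the oracle $\mathcal{O}$, which costs $\mathbb{O}_n$ by definition. The observation that keeps each swap at $O(1)$ is that when the sweeping ray crosses the \rankswitch $\theta_{ij}$, the items $t_i$ and $t_j$ are necessarily adjacent in $\Omega$: their projections onto the ray coincide exactly at $\theta_{ij}$, and just before the crossing no other item's projection lies strictly between them. Maintaining $\Omega$ as an array paired with an index that records each item's current rank, the exchange of two adjacent items is therefore $O(1)$; and even if several pairs share the same angle, the total number of swaps is still bounded by $|\Theta|=O(n^2)$. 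Thus the sweep costs $O\!\left(n^2(1+\mathbb{O}_n)\right)=O(n^2\,\mathbb{O}_n)$.

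Summing the three phases yields $O(n^2)+O(n^2\log n)+O(n^2\,\mathbb{O}_n)=O\!\left(n^2(\log n+\mathbb{O}_n)\right)$, as claimed. I expect the only non-routine point to be the adjacency argument that justifies constant-time swaps, since it rests on a geometric property of the sweep rather than on plain counting; the remaining steps are a direct accounting of the loop bounds together with the constant-time angle formula.
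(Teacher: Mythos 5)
Your proposal is correct and follows essentially the same decomposition as the paper's proof: $O(n^2)$ \rankswitches enumerated in $O(n^2)$, sorted in $O(n^2\log n)$, and a sweep making one oracle call per exchange for $O(n^2\,\mathbb{O}_n)$, summing to the claimed bound. Your adjacency argument for constant-time swaps is a finer point the paper's proof leaves implicit (and is not strictly needed for the complexity bound, since maintaining an item-to-position index already makes any swap $O(1)$; adjacency matters for correctness of the sweep, under a general-position assumption), but it does not change the route of the argument.
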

\techrep{
\begin{proof}
The proof for this theorem is straightforward, following the number of \rankswitchesnospace.
Since every pair of items in 2D has at most one \rankswitchnospace, the total number of \rankswitches is in $O(n^2)$. Sorting the \rankswitches in Line~\ref{line:2d-sort} is in $O(n^2\log n)$.
Sorting the items along the x-axis is in $O(n\log n)$. 
Then in lines 11 to 24, the algorithm gradually updates the ranked list as it moves from each sector to the next one.
For each of the sectors, it calls the oracle once to check if it is satisfactory. This is in $O(n^2 \mathbb{O}_n)$.
Therefore \twodray is in $O(n^2 (\log n + \mathbb{O}_n))$.
\end{proof}
}
\submit{Proof is given in the technical report~\cite{techreport}.}

\submit{\vspace{1cm}}
\subsection{Online processing}\label{subsec:2donline}
Having the sorted list of 2D satisfactory regions constructed in the offline phase allows us to design an efficient algorithm for online answering of the users' queries.  Recall that a query is a proposed set of weights for a linear ranking function.  Our task is to determine whether these weights result in a fair ranking, and to suggest weight modifications if they do not.

Online processing is implemented by Algorithm~\ref{alg:2donline} that, given a function $f$, applies binary search on the sorted list of satisfactory regions. If $f$ falls within a satisfactory region, the algorithm returns $f$, otherwise it returns the satisfactory border closest to $f$.  %Note that the {\em fairness oracle is not invoked}. 

\begin{algorithm}[!h]
\caption{{\bf \twodonline}\\
         {\bf Input:} sorted satisfactory regions $S$, function $f: \{ w_1, w_2\}$ \\
         {\bf Output:} weight vector $\{ w'_1, w'_2\}$
        }
\begin{algorithmic}[1]
\label{alg:2donline}
    \STATE $(r,\theta ) = (\sqrt{w_1^2+ w_2^2}, \arctan\frac{w_2}{w_1})$
    \STATE low = 1, high = $|S|$
    \WHILE{$($high$-$low$) > 1$}
        \STATE mid = (low+high)/2
        \STATE {\bf if} $S[$mid$][1] < \theta$ {\bf then} low = mid
        \STATE {\bf else} high = mid
    \ENDWHILE
    \IF{ $S[$low$][2] = 0$}
        \STATE {\bf return} $\{ w_1, w_2\}$ \texttt{\scriptsize // input vector is satisfactory}
    \ENDIF
    \IF{$(\theta-S[$low$][1]) < (S[$high$][1]-\theta)$}
            \STATE {\bf return} $\{ r\cos (S[$low$][1]), r\sin (S[$low$][1])\}$
        \ENDIF
    \STATE {\bf return} $\{ r\cos (S[$high$][1]), r\sin (S[$high$][1])\}$
\end{algorithmic}
\end{algorithm}

\begin{theorem}
Algorithm~\ref{alg:2donline} has time complexity $O(\log n)$.
\end{theorem}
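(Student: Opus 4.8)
The plan is to charge each line of Algorithm~\ref{alg:2donline} its cost and observe that the only non-constant term is the binary search over the sorted array $S$, whose length I will bound in terms of $n$.

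First I would dispatch the constant-time work. Line~1 converts the query weight vector into polar coordinates $(r,\theta)$ using a fixed number of arithmetic and trigonometric operations, so it costs $O(1)$; likewise the initialization in Line~2 and the entire post-loop block in Lines~8--14 (a constant number of array lookups, angle comparisons, and $\cos/\sin$ evaluations) each cost $O(1)$. Crucially, no call to the fairness oracle $\mathcal{O}$ appears anywhere in the online phase: all satisfaction testing was carried out offline by \twodray and is already encoded in the region borders stored in $S$. This is exactly why the online cost avoids the $\mathbb{O}_n$ factor that dominates Theorem~\ref{th:2dray}.

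The heart of the argument is the while-loop in Lines~3--7, which is a textbook binary search on the sorted list $S$. The loop invariant is that the sought angle $\theta$ lies in the index interval $[\mathrm{low},\mathrm{high}]$; each iteration sets $\mathrm{mid}$ to the midpoint and, via the single comparison $S[\mathrm{mid}][1] < \theta$, discards one half of the interval. Hence $\mathrm{high}-\mathrm{low}$ is at least halved every pass, the loop runs for $O(\log |S|)$ iterations, and each iteration performs only $O(1)$ work. Combining this with the constant-time pieces above, the total running time is $O(\log |S|)$.

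It then remains to bound $\log|S|$, which is the step I would be most careful about. The entries of $S$ are the start/end angles of satisfactory regions, and every such boundary is an \rankswitch angle together with the two endpoints $0$ and $\pi/2$; as established earlier, in 2D each of the $\binom{n}{2}$ pairs of items contributes at most one \rankswitchnospace, so $|S| = O(n^2)$. The potential pitfall is to conclude a bound of $O(\log n^2)$ and leave it there, but $\log(n^2)=2\log n = O(\log n)$, so the quadratic number of boundaries still yields a logarithmic search. Therefore the overall cost is $O(\log|S|)=O(\log n)$, as claimed.
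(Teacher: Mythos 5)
Your proof is correct and follows essentially the same route as the paper's: bound $|S|$ by the $O(n^2)$ count of \rankswitches in 2D, then observe that binary search over the sorted list costs $O(\log n^2) = O(\log n)$. Your additional line-by-line accounting and the explicit remark that no oracle call occurs online are sound elaborations of the same argument, not a different approach.
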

\techrep{
\begin{proof}
There totally are at most $O(n^2)$ \rankswitches for $n$ items.
Therefore, the size of the sorted list of satisfactory regions in 2D is in $O(n^2)$.
Applying  binary search on this list is $O(\log n)$.
\end{proof}
}
\submit{Proof is given in the technical report~\cite{techreport}.}

%Next, in \S~\ref{sec:md}, we discuss the extension of 2D algorithms for MD (where the ranking functions are defined on more than two attributes) and their drawbacks.
%We will later propose several techniques for resolving these drawbacks in \S~\ref{sec:speedup}.
\section{The Multi-Dimensional Case}\label{sec:md}
In general, more than two attributes may be used for ranking.  We now extend the basic framework introduced in \S~\ref{sec:2d} to handle multi-dimensional cases.  The challenge is that 
regions of interest are no longer simple planar wedges, bounded by two rays at an angle.  Rather, they are high-dimensional objects, with multiple bounding facets.  

To manage the geometry better, we first introduce an angle coordinate system, and show
that \rankswitches form hyperplanes in this system. Identifying and indexing satisfactory regions during offline processing is similar to constructing the {\em arrangement} of these hyperplanes~\cite{edelsbrunner}.
We then propose an exact online algorithm that works based on the indexed satisfactory regions.
%We conclude this section with a complexity analysis of the proposed algorithms.
%and discussing its pitfalls to work as an online answering of the queries.
%Later in \S~\ref{sec:speedup}, we will propose several techniques for resolving these drawbacks.

\begin{figure*}[!tb] 
    \begin{minipage}[t]{0.18\linewidth}
        \centering
        \vspace{-25mm}
        \begin{small}
        \begin{tabular}{|l|c|c|c|}
               \hline
                       $t_1$&1&2&3 \\ \hline
                       $t_2$&2&4&1 \\ \hline
                       $t_3$&5.3&1&6 \\ \hline
                       $t_4$&3&7.2&2 \\ \hline
        \end{tabular}
        \end{small}
        \vspace{8mm}\caption{A 3D dataset}\label{fig:mdhyperpolar1}
    \end{minipage}
    \hspace{1mm}
    \begin{minipage}[t]{0.29\linewidth}
        %\centering
        \includegraphics[width = 0.9\textwidth]{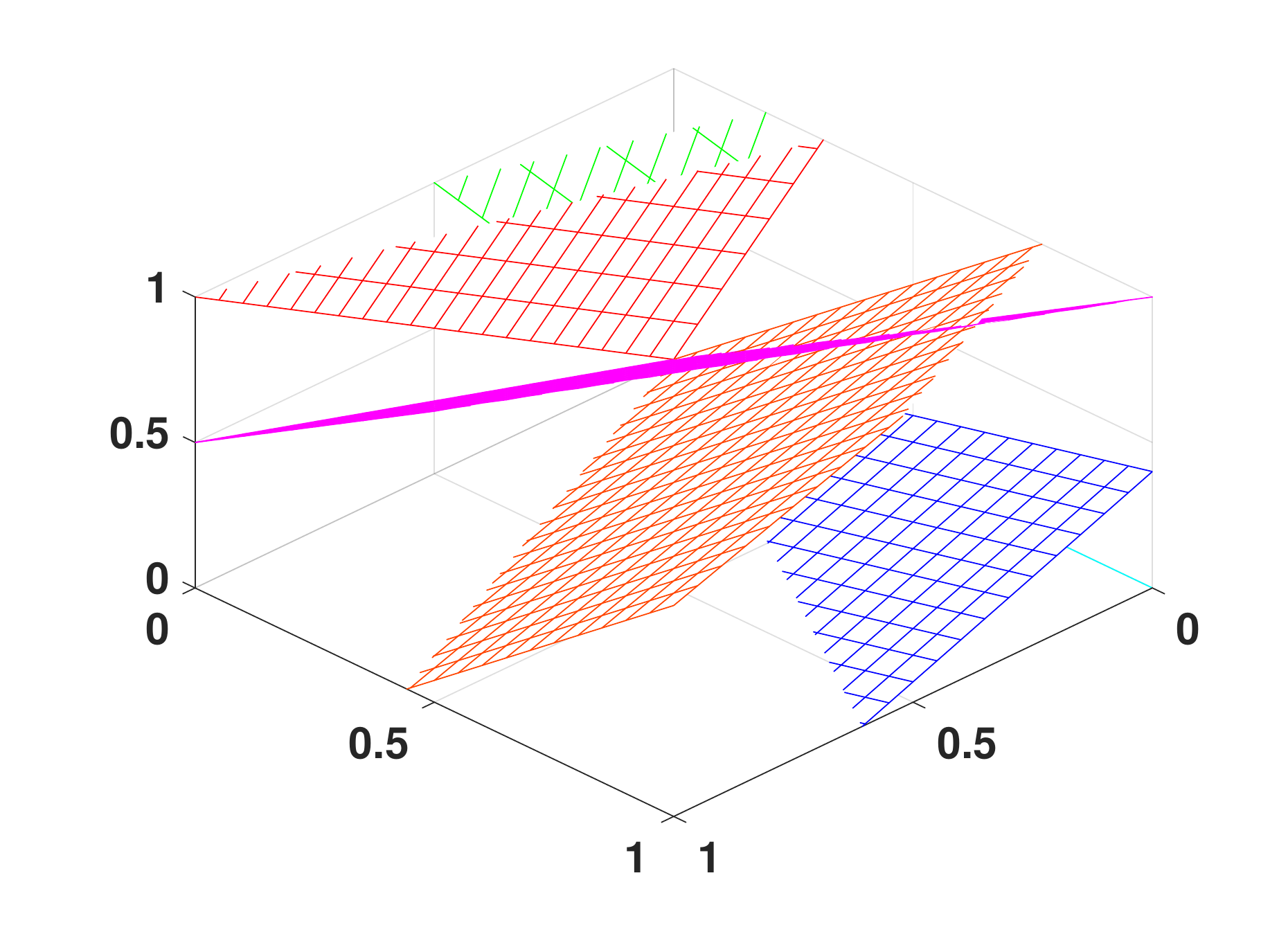}
        \vspace{-3mm}\caption{Ordering exchanges for Fig.~\ref{fig:mdhyperpolar1}}\label{fig:mdhyperpolar2}
    \end{minipage}
    \hspace{1mm}
    \begin{minipage}[t]{0.2\linewidth}
        %\centering
        \includegraphics[width = 0.95\textwidth]{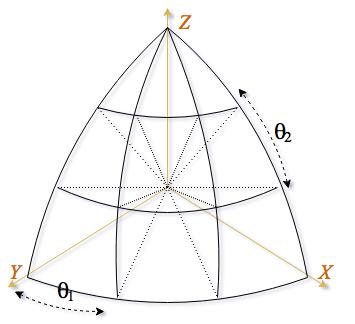}
        \vspace{-3mm}\caption{Angles in $\mathbb{R}^3$}\label{fig:hd1}
    \end{minipage}
    \hspace{1mm}
    \begin{minipage}[t]{0.31\linewidth}
        %\centering
        \includegraphics[width = 0.95\textwidth]{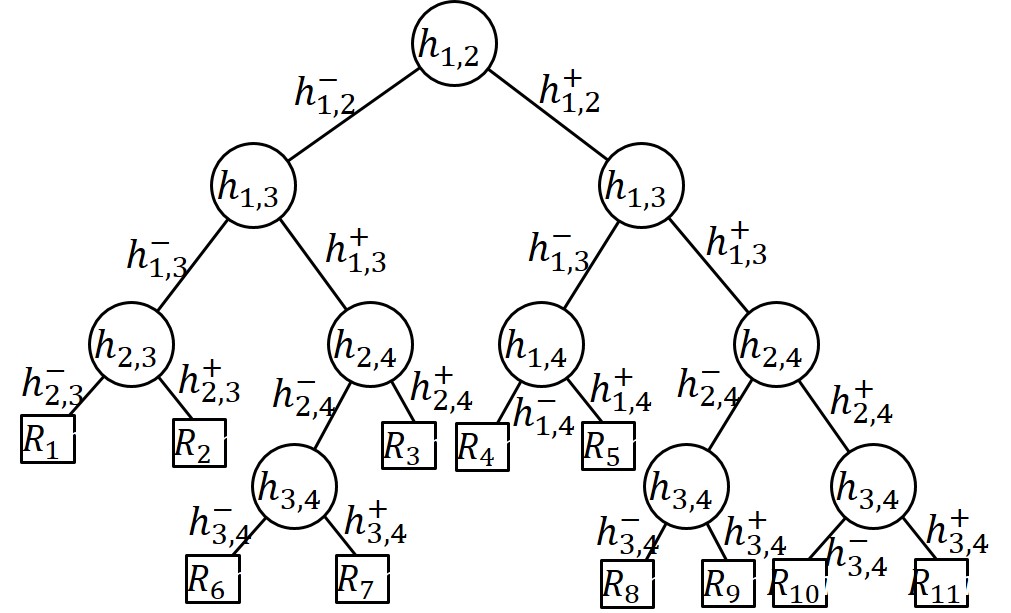}
        \vspace{-3mm}\caption{Arrangement tree example}\label{fig:arrangementtree}
    \end{minipage}
\end{figure*}

\subsection{Ordering exchange in angle coordinates}
Consider function $f$ with weight vector $\vec{w} = \{w_1,w_2,\cdots,w_d \}$.
The score of each tuple $t_i$ based on $f$ is $\Sigma_{k=1}^d w_k t_i[k]$.
For every pair of items $t_i$ and $t_j$, the \rankswitch is the set of functions that give the same score to both items.
As in the previous section, we consider the dual space, transforming item $t$ into a $(d-1)$-dimensional hyperplane in $\mathbb{R}^d$:
\begin{align}\label{eq:2ddual}
\mathsf{d}(t): \sum_{k=1}^d t[k].x_k = 1
\end{align}

For each pair of items $t_i$ and $t_j$, the intersection of $\mathsf{d}(t_i)$ and $\mathsf{d}(t_j)$ is a $(d-2)$ dimensional structure. For instance, in $\mathbb{R}^3$ the dual transformation of every item is a plane and the intersection of two planes is a line.
%Similar to 2D, 
The intersection between $\mathsf{d}(t_i)$ and $\mathsf{d}(t_j)$ can be computed using the following system of equations:
\begin{align}\label{eq:mddualintersect}
    \times_{\mathsf{d}(t_i),\mathsf{d}(t_j)}: \left\{
                \begin{array}{ll}
                  \sum_{k=1}^d t_i[k].x_k = 1\\
                  \sum_{k=1}^d t_j[k].x_k = 1
                \end{array}
              \right.
\end{align}
The set of rays starting from the origin and passing through the points $p\in\times_{\mathsf{d}(t_i),\mathsf{d}(t_j)}$ represents the \rankswitch of $t_i$ and $t_j$.
Hence, the $(d-1)$-dimensional hyperplane defined by $\times_{\mathsf{d}(t_i),\mathsf{d}(t_j)}$ and the origin point (Equation~\ref{eq:mdIntersect}) contains these rays.
\begin{align}\label{eq:mdIntersect}
\sum\limits_{k=1}^{d} (t_i[k] - t_j[k])w_k = 0
\end{align}
For example, consider items $t_1=\{ 1,2,3\}$ and $t_2=\{ 2,4,1\}$ in Figure~\ref{fig:mdhyperpolar1}. Using Equation~\ref{eq:mdIntersect}, the \rankswitch of $t_1$ and $t_2$ is defined by the magenta plane $w_1+2w_2-2w_3 = 0$ in Figure~\ref{fig:mdhyperpolar2}. %\julia{Need larger axis font in this figure.}

As explained in \S~\ref{sec:pre}, linear functions over $d$ attributes (rays in $\mathbb{R}^d$) are identified by $d-1$ angles, each between $0$ and $\pi/2$. For instance, in \S~\ref{sec:2d}, we identify every function in 2D by an angle $\theta\in [0,\pi/2]$.
%Remember from the 2D case that the space of linear functions is identified by an angle $\theta\in [0,\pi/2]$. In 2D we found the angles of the satisfactory ranges and indexed them for the online processing. In $\mathbb{R}^d$, the functions (rays) are defined by $(d-1)$ angles.
Similarly, in multiple dimensions, we identify the functions by their angles.  We now introduce the angle coordinate system for this purpose.

\vspace{2mm}
\noindent{\bf Angle coordinate system:}
Consider the $\mathbb{R}^{d-1}$ coordinate system, where every axis $\theta_i\in [0,\pi/2]$ stands for the angle $\theta_i$ in the polar representation of points in $\mathbb{R}^d$.
Every function (ray in $\mathbb{R}^d$) is represented by the point $\langle \theta_1, \theta_2, \cdots, \theta_{d-1} \rangle$ in the angle coordinate system.
For example, 
as depicted in Figure~\ref{fig:hd1}, a function $f$ in $\mathbb{R}^3$ is the combination of two angles $\theta_1$ and $\theta_2$, each over the range $[0,\pi/2]$.
%Hence, the angle coordinate system for $\mathbb{R}^3$, shown in Figure~\ref{fig:hd2}, has two axes $\theta_1$ and $\theta_2$, each defined over the range $[0,\frac{\pi}{2}]$.

%We study the \rankswitches of the items in the angle coordinate system.
%The angles of the rays (starting from the origin) in the hyperplane defined by Equation~\ref{eq:mdIntersect}, i.e., 
Following Equation~\ref{eq:mdIntersect}, the \rankswitch of a pair of items forms a $(d-2)$-dimensional hyperplane in the angle coordinate system. For example, in 3D, the \rankswitch of $t_i$ and $t_j$ forms a line. We use $h_{i,j}$ to refer to the \rankswitch of $t_i$ and $t_j$ in the angle coordinate system.

\begin{algorithm}[!h]
\caption{{\bf \hyperpolar}\\
         {\bf Input:} items $t_i$ and $t_j$\\
         {\bf Output:} \rankswitch $h_{i,j}$
        }
\begin{algorithmic}[1]
\label{alg:hyperpolar}
    \STATE $V = [ t_i[k]-t_j[k] ~,~ \forall 1\leq k \leq d ]$ \texttt{\scriptsize  // Equation~\ref{eq:mdIntersect}}
    \STATE $\Theta= \{ \}$ %\\ \texttt{\scriptsize  // Find d-1 linearly independent points satisfying Equation~\ref{eq:mdIntersect}}
    \STATE $p =$ $d-1$ linearly independent points satisfying Equation~\ref{eq:mdIntersect} 
    \FOR{$k=1$ to $d-1$}
        %\STATE sum$=0$
        %\FOR{$l=1$ to $d-1$}
        %    \STATE $p[l] = $primes$[l]^k$
        %    \STATE sum$ = $sum$+V[l].p[l]$
        %\ENDFOR
        %\STATE $p[d] = -$sum$/v[d]$
        \STATE $(r,\theta)$ = {\bf ToPolar}($p[k]$)
        \STATE add $\theta$ to $\Theta$
    \ENDFOR \\ \texttt{\scriptsize  // Find the hyperplane containing the points in $\Theta$}
    \STATE $\iota=[1,1,\cdots ,1]$
    \STATE {\bf return} $\Theta^{-1}\times \iota$
\end{algorithmic}
\end{algorithm}

Before we can construct satisfactory regions, we first need to compute \rankswitches in the angle coordinate system.
Algorithm~\ref{alg:hyperpolar} computes $h_{i,j}$ for a given pair of items $t_i$ and $t_j$.
The algorithm uses %using the prime numbers,
$(d-1)$ linearly independent points in the hyperplane of Equation~\ref{eq:mdIntersect}, %In order to do so, it uses the prime numbers to, starting from a point other than origin, independently scale each dimension to get new points.
and finds the angles of the ray from the origin through each of the points, using their polar representations.
%\julia{I don't understand this.  With which non-zero point do we start, with any?  And what are ``the other points''?}
To find the points, one can start with an arbitrary non-zero point on the plane and scale each dimension independently to get the other points.
%\gautam{How do we compute these linearly independent points?}
After this step, each row of the $(d-1) \times (d-1)$ matrix $\Theta$ shows a point in the angle coordinate system.
\hyperpolar represents hyperplanes as $\sum_{k=1}^{d-1} h_{i,j}[k]\theta_k=1$.
Since all $(d-1)$ points in $\Theta$ fall in $h_{i,j}$, this forms a linear system of equations $\Theta\times h_{i,j} = \iota$, where $\iota$ is the unit vector of size $(d-1)$.
Solving this system of equations, we get $h_{i,j} = \Theta^{-1}\times \iota$.
Given that computing $\Theta^{-1}$ is the bottleneck in Algorithm~\ref{alg:hyperpolar}, it is easy to see that \hyperpolar is in $O(d^3)$, which is $O(1)$ for a fixed $d$.

\noindent
%Having the \rankswitches transformed into the angle coordinate system, next we discuss their arrangement for finding the satisfactory regions.

\subsection{Construction of satisfactory regions}\label{subsec:arrangementconstruction}
The construction of satisfactory regions relates to the arrangement~\cite{edelsbrunner} of \rankswitch hyperplanes in the angle coordinate system.
Consider the arrangement of $h_{i,j}$, $\forall t_i$, $t_j\in \mathcal{D}$.
%For example, Figure~\ref{fig:hd2} shows an arrangement of \rankswitches between 4 items when $d=3$.
Items $t_i$ and $t_j$ switch order on the two sides of $h_{i,j}$,
%As discussed previously, for every pair of items $t_i$ and $t_j$, the ordering at each side of $h_{i,j}$ is different.
while inside each convex region in the arrangement their relative ordering does not change.
In the following, we construct all convex regions in the arrangement and check if the ordering inside each is satisfactory.

A convex region %in the arrangement 
is defined as the intersection of a set of half-spaces~\cite{edelsbrunner}. Every hyperplane $h$ divides the space into two half-spaces $h^+$ and $h^-$. In our problem, 
 the ordering between $t_i$ and $t_j$ switches for each hyperplane $h_{i,j}$, moving from $h_{i,j}^+$ to $h_{i,j}^-$.
%considering each half-space as the \rankswitch of a pair $t_i$ and $t_j$, the ordering between $t_i$ and $t_j$ switches,  from $h_{i,j}^+$ to $h_{i,j}^-$.

Inspired by the algorithm proposed in~\cite{edelsbrunner}, we develop an incremental algorithm for discovering the convex regions in the arrangement.
%We propose an incremental algorithm similar to the one proposed for constructing the arrangement of hyperplanes in~\cite{edelsbrunner}. However, rather than
Intuitively, Algorithm~\ref{alg:mdsatregions} adds the hyperplanes one after the other to the arrangement. At every iteration, it finds the set of regions in the arrangement with which the new hyperplane intersects.
Recall that $h_{i,j}$ is in the form of $\sum_{k=1}^{d-1} h_{i,j}[k] \theta_k = 1$.
Hence, the half-space  $h_{i,j}^+$ can be considered as the constraint $\sum_{k=1}^{d-1} h_{i,j}[k] \theta_k \geq 1$ and $h_{i,j}^-$ as $\sum_{k=1}^{d-1} h_{i,j}[k] \theta_k \leq 1$.
The set of points inside a convex region $R=\{ (h_{R1},+/-), (h_{R2},+/-), \cdots\}$ satisfy constraints $\sigma_R$ as defined in Equation~\ref{eq:CofR}.

\begin{align} \label{eq:CofR}
\sigma_R: \left\{
                \begin{array}{ll}
                  \forall \mbox{ half-space} (h',+)\in R,~ \sum_{k=1}^{d-1} h'[k] \theta_k \geq 1\\
                  \forall \mbox{ half-space} (h',-)\in R,~ \sum_{k=1}^{d-1} h'[k] \theta_k \leq 1
                \end{array}
              \right.
\end{align}

Using Equation~\ref{eq:CofR}, a hyperplane $h$ intersects with a convex region $R$ if there exists a point $p\in h$ such that the constraints in $\sigma_R$ are satisfied. The existence of such a point can be determined using linear programming (LP).  
If the new hyperplane intersects with $R$, Algorithm~\ref{alg:mdsatregions} breaks it down into two convex regions that represent the intersections of $R$ with half-spaces $h^+$ and $h^-$.

Having constructed the arrangement, % (i.e., adding all hyperplanes to the arrangement)
Algorithm~\ref{alg:mdsatregions} finds, using linear programming, a point $\theta$ that satisfies $\sigma_R$, and uses $\theta$ to check if region $R$ is satisfactory.  If $R$ is not satisfactory, it is removed from the set of satisfactory regions $\mathcal{R}$.

\begin{algorithm}[!h]
\caption{{\bf \arrangement}\\
         {\bf Input:}  dataset $\mathcal{D}$ and fairness oracle $\mathcal{O}$ \\
         {\bf Output:} satisfactory regions $\mathcal{R}$
        }
\begin{algorithmic}[1]
\label{alg:mdsatregions}
    \STATE $H=\{\}$ \\ \texttt{\scriptsize  // construct \rankswitches in angle coordinates}
    \FOR{$i = 1$ to $n-1$}
        \FOR{$j = i+1$ to $n$}
                \STATE {\bf if} $t_i$ or $t_j$ dominates the other {\bf then continue}
                \STATE add \hyperpolar($t_i$ ,$t_j$) to $H$
        \ENDFOR
    \ENDFOR
    \STATE $\mathcal{R}=\{~\{(H[1],+)\},\{(H[1],-)\}~\}$ \\ \texttt{\scriptsize  // add hyperplanes incrementally to the arrangement}
    \FOR{$h\in (H\backslash \{H[1]\})$}
    		\STATE $\ell_\mathcal{R} = |\mathcal{R}|$
            \FOR{$i = 1$ to $\ell_\mathcal{R}$}
                \IF{$\exists p\in h$ s.t. $\sigma_{\mathcal{R}[i]}$}
                    \STATE $R' = \mathcal{R}[i]$
                    \STATE append $\mathcal{R}[i]$ by $(h,+)$
                    \STATE append $R'$ by $(h,-)$
                    \STATE add $R'$ to $\mathcal{R}$
                \ENDIF
            \ENDFOR
    \ENDFOR \\ \texttt{\scriptsize  // remove the unsatisfactory regions}
    \FOR{$R\in\mathcal{R}$}
            \STATE $\theta = $ a point that $\sigma_R$ is satisfied
            \STATE $\vec{w}$ = {\bf ToCartesian}(1,$\theta$)
            \IF{ $\mathcal{O}($OrderBy$_{f_{\vec{w}}}(\mathcal{D}))=$ False}
                \STATE remove $R$ from $\mathcal{R}$
            \ENDIF
    \ENDFOR
    \STATE {\bf return} $\mathcal{R}$
\end{algorithmic}
\end{algorithm}

\begin{theorem}\label{th:3}
For a fixed number of dimensions, the time complexity of Algorithm~\ref{alg:mdsatregions} is $O\big(n^{2d-1} (n\, Lp(n^2) + \mathbb{O}_n\log n )\big)$, where $Lp(n^2)$ is the time of solving a linear programming problem of $n^2$ constrains and a fixed number of variables and $\mathbb{O}_n$ is the time complexity of $\mathcal{O}$ for an input of size $n$.
\end{theorem}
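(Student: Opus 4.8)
The plan is to bound the running time of Algorithm~\ref{alg:mdsatregions} phase by phase --- (i) building the \rankswitch hyperplanes, (ii) incrementally constructing the arrangement, and (iii) testing each resulting region with the oracle --- and then to show that phases (ii) and (iii) produce the two summands of the claimed bound while phase (i) is negligible.

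First I would count the hyperplanes. There is at most one \rankswitch per non-dominating pair of items, so the set $H$ has size $|H| = O(n^2)$, and by the remark following Algorithm~\ref{alg:hyperpolar} each $h_{i,j}$ is computed by \hyperpolar in $O(d^3) = O(1)$ time for a fixed $d$. Hence phase (i) costs $O(n^2)$, which is dominated by the later phases.

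The crux is the combinatorial size of the arrangement. Because the angle coordinate system is $\mathbb{R}^{d-1}$, we are arranging $m = O(n^2)$ hyperplanes in $(d-1)$-dimensional space, and the classical bound on arrangement complexity~\cite{edelsbrunner} gives $O(m^{d-1}) = O(n^{2(d-1)}) = O(n^{2d-2})$ cells (and faces of all dimensions) for fixed $d$. This single structural fact is what the whole estimate rests on. In the incremental loop, each of the $O(n^2)$ hyperplanes is tested once against every region currently in $\mathcal{R}$, and the number of regions never exceeds the final count $O(n^{2d-2})$. Each test is the feasibility LP of Equation~\ref{eq:CofR}: deciding whether some point of $h$ satisfies $\sigma_R$ is a linear program with a fixed number ($d-1$) of variables and at most $|H| = O(n^2)$ constraints, so it costs $Lp(n^2)$. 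Phase (ii) therefore takes $O(n^2) \cdot O(n^{2d-2}) \cdot Lp(n^2) = O(n^{2d}\,Lp(n^2)) = O(n^{2d-1} \cdot n\,Lp(n^2))$, which is the first summand.

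Finally I would analyze phase (iii), the satisfaction test. It visits each of the $O(n^{2d-2})$ regions once; for each it solves one LP ($Lp(n^2)$) to obtain a representative $\theta$, evaluates the scores and sorts to get the induced ordering in $O(n\log n)$, and calls the oracle once ($\mathbb{O}_n$). The per-region LP cost is dominated by the first summand, and since $\mathbb{O}_n \geq 1$ and $n\log n \geq 1$ we have $O(n^{2d-2}(n\log n + \mathbb{O}_n)) = O(n^{2d-1}\log n + n^{2d-2}\mathbb{O}_n) = O(n^{2d-1}\,\mathbb{O}_n\log n)$, the second summand. Summing the three phases yields $O(n^{2d-1}(n\,Lp(n^2) + \mathbb{O}_n\log n))$. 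The only genuinely delicate part is the exponent bookkeeping: one must use the $O(m^{d-1})$ cell bound (equivalently $\sum_{k \leq m} k^{d-1} = O(m^d)$) so that sweeping in the $O(n^2)$ hyperplanes multiplies the region count by exactly one extra factor of $n^2 = n \cdot n$, giving $n^{2d}$ rather than a larger power.
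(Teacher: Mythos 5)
Your proof is correct and follows essentially the same route as the paper: bound $|H|=O(n^2)$, use the $O(m^{d-1})$ arrangement complexity in $\mathbb{R}^{d-1}$ to charge one $Lp(n^2)$ feasibility test per (hyperplane, region) pair for the first summand, and one sort plus oracle call per region for the second. Your bookkeeping --- bounding the region count at every incremental step uniformly by the final $O(n^{2(d-1)})$ rather than summing $\sum_i i^{2(d-1)}Lp(i^2)$ as the paper does --- is if anything slightly cleaner, since it makes explicit why exactly one extra factor of $n^2$ appears.
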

\submit{Proof is given in the technical report~\cite{techreport}.
\vspace{2mm}}
\techrep{
\begin{proof}
%\gautam{This theorem is stated strangely. On one hand, you say the number of dimensions is constant. On the other hand, $d$ also shows up in the complexity analysis.} 
Lines 2 to 6 of \arrangement construct $h_{i,j}$ for each pair of the items $t_i$ and $t_j$ (in the dual space). Since Algorithm~\ref{alg:hyperpolar} 
has a constant complexity for a fixed number of dimensions, constructing the \rankswitches in the angle coordinate system is in $O(n^2)$.
%\gautam{Why is Algorithm~\ref{alg:hyperpolar} in $O(d^2)$? Why don't you provide the analysis just after the algorithm is described?}
The next step of the algorithm is constructing the arrangement of hyperplanes. 
Using results from combinatorial geometry, the complexity of the arrangement of $n^2$ hyperplanes in $\mathbb{R}^{d-1}$ is $O(n^{2(d-1)})$~\cite{edelsbrunner}.
The bottleneck in Algorithm~\ref{alg:mdsatregions} is the construction of the arrangement: at iteration $i$, add the $i^{th}$ hyperplane to the arrangement.
To do so, identify the set of regions with which the current hyperplane intersects, by 
applying a linear scan over the set of regions to find intersections. Furthermore, for each region, Algorithm~\ref{alg:mdsatregions} solves an LP with $i^2$ constraints over a fixed number of variables.
The number of regions at iteration $i$ is $i^{2(d-1)}$.
Thus the total cost is:
$$O(\sum\limits_{i=1}^{n^2} (i^{2(d-1)} Lp(i^2))) \leq O(n^{2d}Lp(n^2))$$
After constructing the arrangement, the algorithm removes the unsatisfactory regions from $\mathcal{R}$. To do so, for each region, it chooses a function inside the regions, orders the items based on it, and calls the oracle to check if it is satisfactory.
There are $O(n^{2(d-1)})$ regions in the arrangement and ordering the items in each region is in $O(n\log n)$. Hence, this step is in $O(n^{2d-1}\log n \, \mathbb{O}_n)$.
The time complexity of the algorithm, therefore, is: $$O\big(n^{2d-1} (n\, Lp(n^2) + \mathbb{O}_n\log n )\big)$$
\end{proof}
}
%Hence, the upperbound cost of each iteration is equal to the number of regions in that iteration times to the cost of solving the LP problems.
%Therefore to total complexity of Algorithm~\ref{alg:mdsatregions} is $O(n^{4(d-1)}Lp(n^2,d-1))$.

%Most of the effort in Algorithm~\ref{alg:mdsatregions} is in the LP on line 10, which has to be solved once for each region in order to add a new hyperplane.
To add a new hyperplane, Algorithm~\ref{alg:mdsatregions} checks the intersection of every region with the hyperplane. But in practice most regions do not intersect with it.
In the following, we
%use an observation to skip this linear scan.
%Consider a region $R$ at step $i$. Adding the new hyperplanes to the arrangement after step $i$, {\em partitions} $R$ into new regions.
%As a result, while adding a hyperplane $h_{i+j}$, if it does not intersect with $R$, it will not intersect with any of regions inside it.
%Following this observation, we 
define the {\em arrangement tree}, which  keeps tracks of the space partitioning in a hierarchical manner, and can quickly rule out many regions.  While this does not change the asymptotic worst case complexity, we find that it greatly helps in practice, as we will illustrate experimentally in \S~\ref{subsec:exp-performance}.

\vspace{2mm}
\noindent
{\bf Arrangement tree:}
Consider a binary tree where every vertex $v$ is associated with a hyperplane $h_i$, while its left and right edges refer to $h_i^-$ and $h_i^+$, respectively.
Every vertex of the tree corresponds to a region $R$ that is the set of half-spaces specified by the edges from the root to it.
As a result, the left (resp. right) child of $v$ shows the regions in $R$ that fall in $h^-$ (resp. $h^+$).\\
Figure~\ref{fig:arrangementtree} shows a sample arrangement tree for a set of $6$ hyperplanes $\{h_{1,2},h_{1,3},h_{1,4},h_{2,3},h_{2,4},h_{3,4}\}$.
The leaves of the tree are the regions of the arrangement. The region $R_3$, for example, is the intersection of the half-spaces $\{h_{1,2}^-,h_{1,3}^+,h_{2,4}^+\}$.
In this figure, 
consider the left child of the root. Let us assume that a new hyperplane $h$ does not intersect with the right child of this node, i.e., it does not intersect with the region $\{h_{1,2}^-,h_{1,3}^+\}$. Then we can prune the whole subtree and skip checking the intersection of $h$ with the regions $R_3$, $R_6$, and $R_7$, because all these regions are inside the region $\{h_{1,2}^-,h_{1,3}^+\}$.

\techrep{Algorithm~\ref{alg:harrange} shows the recursive algorithm for adding a hyperplane to an arrangement, using the arrangement tree.}
\submit{The pseudocode of \harrangementnsp, the recursive algorithm for adding a hyperplane to an arrangement, using the arrangement tree is provided in the technical report~\cite{techreport}.}
It replaces the lines $9$ to $18$ in Algorithm~\ref{alg:mdsatregions}.

\techrep{
\begin{algorithm}[!h]
\caption{{\bf \harrangement}\\
         {\bf Input:} arrangement tree $T$, the hyperplane $h$, the constraints path to root $\sigma$}
\begin{algorithmic}[1]
\label{alg:harrange}
	\IF{$T$ is null}
		\STATE $T=$ new {\bf ArrangementTree}($h$)
		\STATE {\bf return}
	\ENDIF
    \STATE $\sigma_l = \sigma\cup \{\sum_{k=1}^{d-1} T.h[k] \theta_k \leq 1 \}$
    \STATE $\sigma_r = \sigma\cup \{\sum_{k=1}^{d-1} T.h[k] \theta_k \geq 1 \}$
    \STATE {\bf if} $h$ passes through $\sigma_l$ {\bf then} \harrangement($T$.left,$h$,$\sigma_l$)
    \STATE {\bf if} $h$ passes through $\sigma_r$ {\bf then} \harrangement($T$.right,$h$,$\sigma_r$)
%    \STATE {\bf return} leaves($T$)
\end{algorithmic}
\end{algorithm}
}

\subsection{Online processing} % - \mdbaseline}
Thus far in this section, we studied how to preprocess the data and construct satisfactory regions in multiple dimensions.
Next, given a query (a function $f$) and the satisfactory regions, our objective is to find the closest satisfactory function $f'$ to $f$. 
To do so, \mdbaseline solves a non-linear programming problem for each  satisfactory region to find the closest point of the region to $f$. It then returns the function with the minimum angle distance with $f$.
%\gautam{Can you be a bit more specific about the type of nonlinear optimization? Seems to me that the constraints are linear, while the objective function of minimizing angle distance is nonlinear, right?}

\begin{algorithm}[!h]
\caption{{\bf \mdbaseline}\\
         {\bf Input:} Satisfactory regions $\mathcal{R}$, dataset $\mathcal{D}$, fairness oracle $\mathcal{O}$, function  $f: \vec{w}$ \\
         {\bf Output:} the satisfactory weight vector $\vec{w'}$
        }
\begin{algorithmic}[1]
\label{alg:mdbaseline}
    \IF{$\mathcal{O}($OrderBy$_{f_{\vec{w}}}(\mathcal{D}))=$ True}
        \STATE {\bf return $\vec{w}$}
    \ENDIF
    \STATE $(r,\Theta^{(i)})=$ {\bf ToPolar}$(\vec{w})$
    \STATE mindist=$\infty$
    \FOR{$R\in\mathcal{R}$}
        \STATE (dist,$\Theta^{(j)}$) = the minimum $\theta_{i,j}$ such that $C_R$ is satisfied \techrep{\texttt{\scriptsize // based on Equation~\ref{eq:raydistance}} }
        \IF{dist$<$mindist}
            \STATE $\Theta^o = \Theta^{(j)}$, mindist$=$dist
        \ENDIF
    \ENDFOR
    \STATE {\bf return} {\bf ToCartesian}(1,$\Theta^o$)
\end{algorithmic}
\end{algorithm}

\begin{theorem}
For a constant number of dimensions, the time complexity of Algorithm~\ref{alg:mdbaseline} is $O(n^{2(d-1)}NLp(n^2))$, where $NLp(n^2)$ is the time for solving a non-linear programming problem of $n^2$ constraints and a fixed number of variables.
\end{theorem}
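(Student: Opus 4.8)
The plan is to decompose the running time of \mdbaseline into the number of satisfactory regions it iterates over, times the per-region cost of the non-linear program, and then argue that the initial feasibility check is dominated. First I would observe that the main loop (Lines 6--11 of Algorithm~\ref{alg:mdbaseline}) iterates exactly once over every region in $\mathcal{R}$. Since $\mathcal{R}$ is a subset of the cells of the arrangement of the $O(n^2)$ \rankswitch hyperplanes in the $(d-1)$-dimensional angle coordinate system, and since the arrangement of $m$ hyperplanes in $\mathbb{R}^{d-1}$ has $O(m^{d-1})$ cells (the same combinatorial-geometry bound invoked in the proof of Theorem~\ref{th:3}), we have $|\mathcal{R}| = O((n^2)^{d-1}) = O(n^{2(d-1)})$.

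Next I would bound the work done inside each iteration. For a region $R$, the algorithm computes the point of $R$ closest to $f$ in angular distance by solving a non-linear program: minimize the angular distance $\theta_{i,j}$ to the query point $\Theta^{(i)}$ subject to the constraints $C_R$ (that is, $\sigma_R$ of Equation~\ref{eq:CofR}) defining $R$. Because each region is an intersection of half-spaces, one per \rankswitch hyperplane, this program has at most $O(n^2)$ linear constraints and exactly $d-1$ variables (the angle coordinates). For a constant $d$ this is a non-linear program with a fixed number of variables and $O(n^2)$ constraints, whose solution time is by definition $NLp(n^2)$; constructing the constraint set and updating the running minimum add only lower-order overhead.

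Finally, the preamble of the algorithm (Lines 1--5) orders $\mathcal{D}$ under $f$ and invokes the oracle once, costing $O(n\log n + \mathbb{O}_n)$, which is dominated by a single iteration of the main loop. Multiplying the number of regions by the per-region cost then yields the claimed bound $O\big(n^{2(d-1)}\, NLp(n^2)\big)$.

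The main obstacle I anticipate is justifying the per-region constraint count and the applicability of the $NLp(\cdot)$ primitive: one must confirm that although a region could in principle carry up to $\binom{n}{2}=O(n^2)$ defining half-spaces, this count stays bounded by the number of hyperplanes and does not inflate the variable dimension, so that the angular-distance minimization is a genuine $(d-1)$-variable program with $O(n^2)$ constraints. A secondary subtlety is verifying that the angular-distance objective can be cast in a form the $NLp$ oracle accepts (for instance, minimizing $\arccos$ of a normalized inner product, or equivalently maximizing the cosine of the angle to $\Theta^{(i)}$), so that the per-region cost is truly $NLp(n^2)$ rather than something larger.
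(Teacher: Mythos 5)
Your proposal is correct and follows essentially the same argument as the paper's proof: bound the number of satisfactory regions by the $O(n^{2(d-1)})$ complexity of the arrangement of $O(n^2)$ \rankswitch hyperplanes in the $(d-1)$-dimensional angle space, and charge each region one non-linear program with $O(n^2)$ constraints over a fixed number of variables, giving $O\big(n^{2(d-1)}\,NLp(n^2)\big)$. Your additional remarks on the dominated preamble cost and the casting of the angular-distance objective merely spell out details the paper treats as immediate.
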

\submit{Proof is given in the technical report~\cite{techreport}.}
\techrep{
\begin{proof}
%\gautam{Again, same comment as earlier. You assume constant $d$, yet $d$ appears in the analysis.}
Given that the upper-bound on the total number of satisfactory regions, is $O(n^{2(d-1)})$, the proof is straightforward.
For every satisfactory region, \mdbaseline needs to solve a non-linear programming problem of size $O(n^2)$ constraints over fixed number of variables.
Thus, Algorithm~\ref{alg:mdbaseline} is in $O(n^{2(d-1)}NLp(n^2 ))$.
\end{proof}
}

%\section{Performance Optimizations}\label{sec:speedup}
\vspace{1cm}
\section{Approximation}\label{sec:speedup}
A user developing a scoring function requires an interactive response from the system.
\mdbaseline is not practical for query answering as it needs to solve a non-linear programming problem for each satisfactory region, before answering each query.
In this section, we propose an efficient algorithm for obtaining approximate answers quickly.
%an alternative preprocessing that makes the online answering of the queries possible.
%We start the section by proposing a tree data structure that makes \arrangement faster in practice.
%Then, in the rest of the section, we focus on making the online query answering efficient.
Our approach relies on first partitioning the angle space, based on a user-controlled parameter $N$, into $N$ {\em cells}, where each cell $c$ is a hypercube of $(d-1)$-dimensions.
We conduct the partitioning in a way that the maximum angle distance between every pair of functions in every cell is bounded.
\submit{
Due to the space limitations, please see the details in the technical report~\cite{techreport}.}
\techrep{lease see the details in Appendix~\ref{subsec:anglepartitioning}.}
%it bounds the maximum angle distance between every pair of functions based on a user-specified parameter, based on the value of $N$ (see the technical report~\cite{techreport} for details). \julia{Let's see which details we move to the TR.  A citation may not be appropriate here.}
%\subsection{Approximation: angle space partitioning}\label{subsec:approx}
%Our approach is based on first partitioning the angle space into $N$ {\em cells}, where each cell $c$ is a $d-1$-dimensional hypercube. 
%and answer each query with a satisfactory function that is guaranteed to be less than a specified maximum distance from the optimal solution.
%So far, we discussed how to partition the angle space such that the maximum angle distance between every pair of functions is bounded.
%\julia{The following paragraph is difficult to understand, because ``distance'' is used to mean two different things, if I understand correctly.  In the exact solution, we minimize $dist(f,f')$.  In the approximate solution, is the statement that: $|dist(f,f_c') - dist(f,f')| \leq \epsilon$, and that $\epsilon$ is specified by the user?  Or does the user specify some other parameter, $N$, and then there is a function that takes $N$ to $\epsilon$? In any case, I wouldn't call $\epsilon$ a distance, let's call it a threshold?}
In the preprocessing, we assign a satisfactory function $f_c'$ to every cell $c$ such that, for every function $f$, 
the angle between $f$ and $f_c'$ is within a bounded threshold (based on the value of $N$) from $f$ and its optimal answer.
To do so, in \S~\ref{subsec:cellmarking}, we first identify the cells that intersect with a satisfactory region, and assign the corresponding satisfactory function to each such cell.
Then, in \S~\ref{subsec:coloring}, we assign the cells that are outside of the satisfactory regions to the nearest discovered satisfactory function.
%\gautam{I think you first need to have a small subsection where you carefully describe how the space is partitioned into cells. Have a notation for the user-controlled parameter, and show how the number of cells $N$ is derivable from this parameter. Also give some insight on the geometry of the cells, i.e., they are hypercubes.}
%In the following, we show how to identify and mark the cells that intersect with a satisfactory region.

\subsection{Identifying cells in satisfactory regions}\label{subsec:cellmarking}

\begin{figure*}[!tb] 
    \begin{minipage}[t]{0.2\linewidth}
        %\centering
        \includegraphics[width = \textwidth]{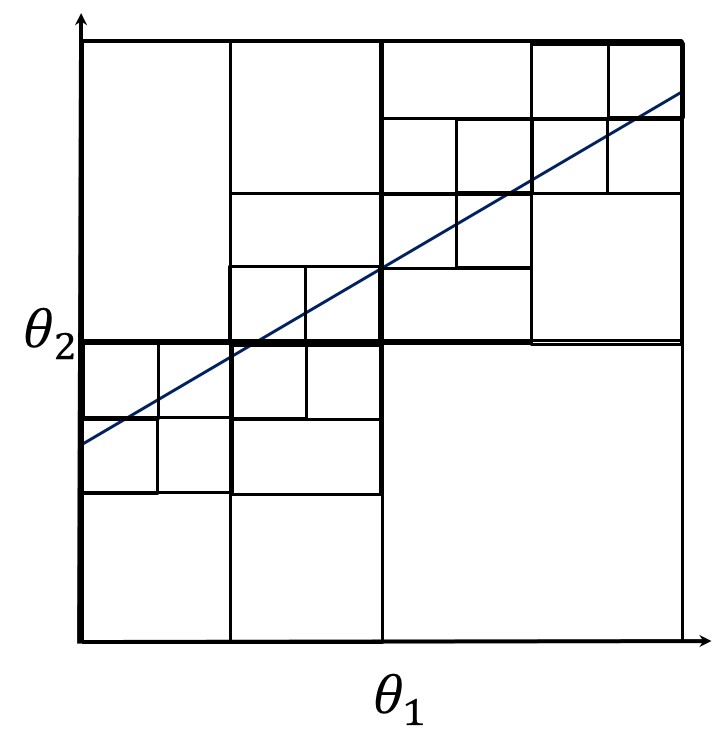}
        \caption{Identifying cells that intersect a hyperplane}\label{fig:quadtree}
    \end{minipage}
    \hspace{1mm}
    \begin{minipage}[t]{0.30\linewidth}
        %\centering
        \includegraphics[width = \textwidth]{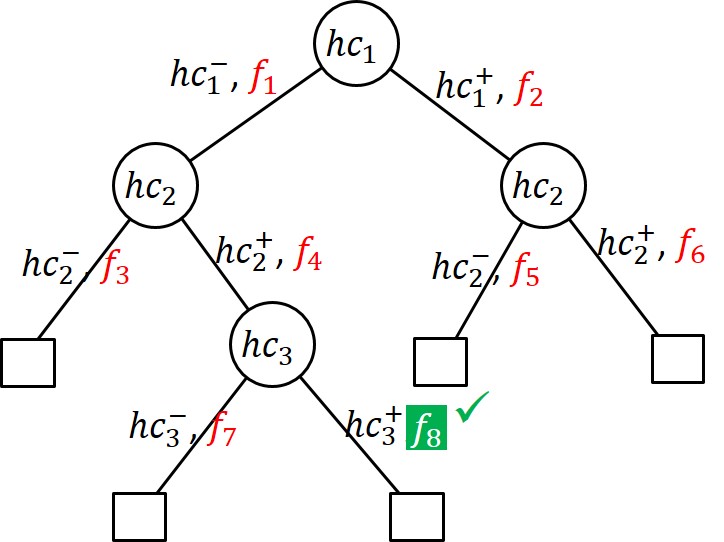}
        \caption{Early stopping when constructing the arrangement of a cell}\label{fig:atc}
    \end{minipage}
    \hspace{1mm}
    \begin{minipage}[t]{0.215\linewidth}
        %\centering
        \includegraphics[width = 0.95\textwidth]{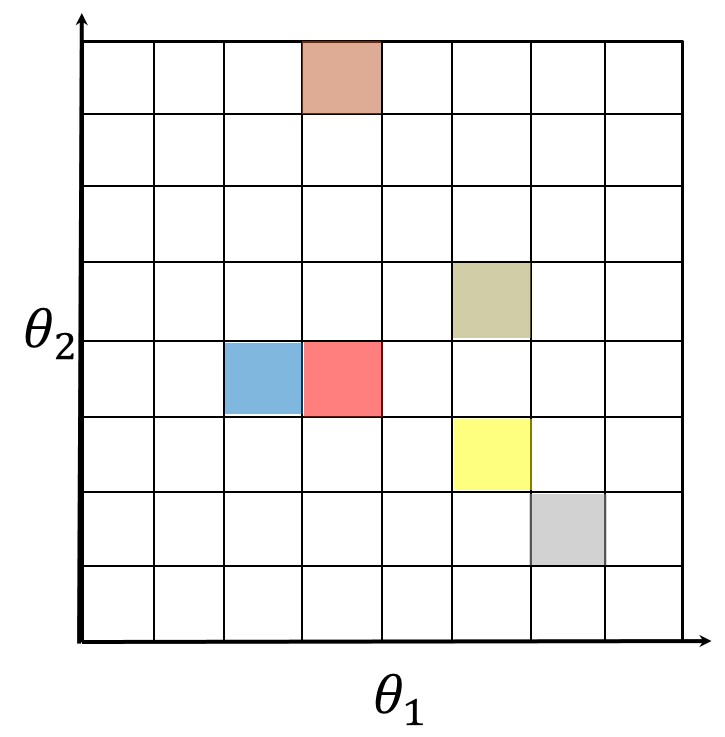}
        \caption{Satisfactory cells example}\label{fig:coloring1}
    \end{minipage}
    \hspace{1mm}
    \begin{minipage}[t]{0.215\linewidth}
        %\centering
        \includegraphics[width = 0.95\textwidth]{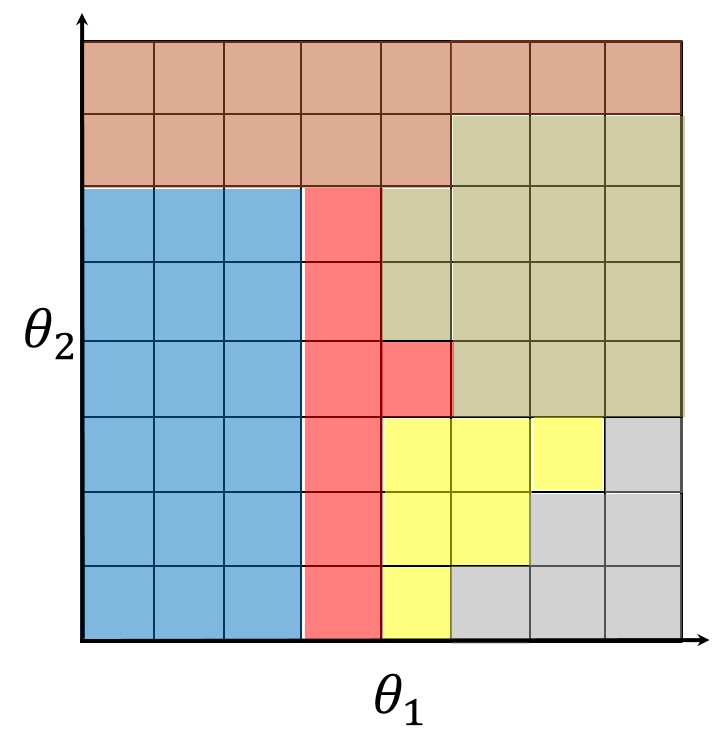}
        \caption{Coloring unsatisfactory cells in Fig.~\ref{fig:coloring1}}\label{fig:coloring2}
    \end{minipage}
\end{figure*}

After partitioning the angle space, our objective here is to find cells in $Cells$, the set of all cells, that intersect with at least one satisfactory region $R\in\mathcal{R}$.
Formally,
\begin{align}\label{eq:satis.cells}
\mathcal{C} = \{c\in Cells ~|~\exists R\in\mathcal{R}\mbox{ s.t. } R\cap c \neq \emptyset \}
\end{align}

A brute force algorithm follows Equation~\ref{eq:satis.cells} literally.
%and first constructs the arrangement and then for every cell in $Cells$ checks if there exists a region $R\in\mathcal{R}$ where their intersection is not empty.
%\julia{What's $N$, the size of the grid?  How is it quantified exactly?}
This algorithm needs to first construct a complete arrangement and then check the intersection of all $N\times |\mathcal{R}|$ pairs of cells and satisfactory regions. Given the potentially large values of $N$ and size of $\mathcal{R}$, this is inefficient.
As discussed in \S~\ref{sec:md}, and experimentally shown in \S~\ref{sec:exp},
the complexity of the arrangement and the running time of Algorithm~\ref{alg:mdsatregions} highly depends on the number of hyperplanes in the arrangement. Even though the first few hyperplanes are quickly added to the arrangement, adding the later hyperplanes is more time consuming. This observation motivates us to limit the construction of the arrangement to subsets of hyperplanes, as opposed to constructing the complete arrangement all at once.
On the other hand, the changes in the ordering in every cell is limited to the hyperplanes passing through it. As a result, for finding out if a cell intersects with a satisfactory region, it is enough to only consider the arrangement of these hyperplanes.

%Here we take a different route for finding the cells that satisfy Equation~\ref{eq:satis.cells}.  As discussed in \S~\ref{sec:md}, and we will experiment in \S~\ref{sec:exp}, the incremental nature of Algorithm~\ref{alg:mdsatregions} makes its performance dependent on the number of hyperplanes in the arrangement. On the other hand, while constructing the complete arrangement is very expensive, finding out if a hyperplane intersects with a cell or not is simple and can be done in constant time.
Given a hyperplane $h$ and a cell $c$, checking if $h$ passes through $c$ is simple, using the ``bottom-left'' ($bl$) and ``top-right'' ($tr$) corners of the cell, i.e., the corners that have the minimum and maximum values of the cell ranges in each dimensions.
%\gautam{Not sure whether I like the terms top-left and bottom-right, they seem to imply 2 dimensions only.}

Recall that \hyperpolar constructs the hyperplane $h$ in the form of $\sum_{k=1}^{d-1} h[k]\theta_k = 1$.
Thus, for every point $p$ in $h^-$, $\sum_{k=1}^{d-1} p_k\theta_k \leq 1$ while for every point $p'$ in $h^+$, $\sum_{k=1}^{d-1} p'_k\theta_k \geq 1$.Therefore, $h$ passes through $c$, {\it iff} 
$\sum_{k=1}^{d-1} bl[k]\theta_k \leq 1$ and $\sum_{k=1}^{d-1} tr[k]\theta_k \geq 1$.

%Assuming that, on average, a small subset of hyperplanes pass through a cell, limiting the arrangement to those hyperplanes reduces the arrangement construction cost significantly. Therefore, rather than applying the expensive arrangement construction for the complete set of hyperplanes, for each cell, we limit the arrangement to the hyperplanes passing through it.

The complete pairwise check between each hyperplane and each cell takes $O(N\times |H|)$ time. Instead, we use the following observation to skip some of the operations: consider a hyperrectangle specified by its bottom-left corner $bl$ and the top-right corner $tr$; also consider a hyperplane $h$ that does not pass through this hyperrectangle.
For every cell $c$ for which its bottom-left dominates $bl$ (for each dimension $i$ its value is greater than or equal to $bl[i]$) and its top-right corner is dominated by $tr$, $h$ does not pass through $c$. 

As a result, for checking the cells that intersect with hyperplane $h$, one can start from the complete angle space, partition the space in a hierarchical manner, and prune the cells inside the hyperrectangles that do not intersect with $h$.
We adopt the {\em quadtree}~\cite{finkel1974quad} data structure for this purpose. %\gautam{reference for quadtree?}
To do so, the recursive
\techrep{Algorithm~\ref{alg:quadtree}}
\submit{algorithm \cellplane}
 iterates over the dimensions in a round robin manner and, at every step, if $h$ passes through the current hyperrectangle, divides it in two equi-size hyperrectangles on the current dimension.
\submit{Please find the pseudo code of \cellplane in the technical report~\cite{techreport}.}

Figure~\ref{fig:quadtree} illustrates \cellplane for finding the cells that intersect with the drawn line $h$. The algorithm prunes all cells in the bottom-right quadrant, since $h$ does not pass through it.
%\gautam{In Figure 11, which direction is increasing  $\theta_1$ (resp. $\theta_2$)? Also, the pseudo-code of Algorithm 7 has several undefined notations.}

\techrep{
\begin{algorithm}[!h]
\caption{{\bf \cellplane}\\
         {\bf Input:} hyperplane $h$, $Cells$, low (indices of bottom-left corner), high (indices of top-right corner), turn (the dimension to divide), and list of hyperplanes for cells $\mathcal{HC}$
        }
\begin{algorithmic}[1]
\label{alg:quadtree}
    \STATE {\bf if} $h$ does not passes through \\ $\cdots$rectangle(bottom-left(low),top-right(high)) {\bf then return}
    \IF {high[turn] = low[turn]}
        \IF {$\forall 1\leq i \leq (d-1):$ low[i]=high[i]}
            \STATE add $h$ to HC[low] and {\bf return}
        \ENDIF
        \STATE {\bf while} high[turn] = low[turn] {\do} turn= (turn+1)mod(d-1)
    \ENDIF
    \STATE mid = low[turn]+high[turn]/2
    \STATE tmp= high[turn]; high[turn] = mid
    \STATE \cellplane ($h$,$Cells$,low,high,(turn+1)mod(d-1),$\mathcal{HC}$)
    \STATE high[turn]=tmp; low[turn] = mid+1
    \STATE \cellplane ($h$,$Cells$,low,high,(turn+1)mod(d-1),$\mathcal{HC}$)
\end{algorithmic}
\end{algorithm}
}

\begin{algorithm}[!h]
\caption{{\bf \cellarrangement}\\
         {\bf Input:} cell $c$, $\mathcal{HC}$
        }
\begin{algorithmic}[1]
\label{alg:cellarrangement}
    \IF{$|\mathcal{HC}[c]|=0$}
        \STATE $p = $ a point inside $c$
        \STATE {\bf if} $\mathcal{O}($OrderBy$_{p}(\mathcal{D}))=$ True {\bf then} Marked[$c$]$=p$
        \STATE {\bf return}
    \ENDIF
    \STATE $p=$ a point in $\mathcal{HC}[c][1]^-\cap c$
    \STATE {\bf if} $\mathcal{O}($OrderBy$_{p}(\mathcal{D}))=$ True {\bf then} Marked[$c$]$=p$; {\bf return}
    \STATE $p=$ a point in $\mathcal{HC}[c][1]^+\cap c$
    \STATE {\bf if} $\mathcal{O}($OrderBy$_{p}(\mathcal{D}))=$ True {\bf then} Marked[$c$]$=p$; {\bf return}
    \STATE $T=$new {\bf ArrangementTree}$(\mathcal{HC}[c][1])$
    \FOR{$h\in\mathcal{HC}[c]\backslash \mathcal{HC}[c][1]$}
        \IF {$p=$\harrangementc($T$,$h$,$c$,null) is not null}
            \STATE Marked[$c$]$=p$; {\bf return}
        \ENDIF
    \ENDFOR
\end{algorithmic}
\end{algorithm}

\begin{algorithm}[!h]
\caption{{\bf \harrangementc}\\
         {\bf Input:} arrangement tree $T$,  hyperplane $h$,  cell $c$,  constraints path to root $\sigma$}
\begin{algorithmic}[1]
\label{alg:harrangec}
    \IF{$T$ is null}
        \STATE $T=$ new {\bf ArrangementTree}($h$)
        \STATE $\sigma_l = \sigma\cup \{\sum_{k=1}^{d-1} h[k] \theta_k \leq 1 \}$
        \STATE $p=$ a point in $c$ s.t. $\sigma_l$ is satisfied
        \STATE {\bf if} $\mathcal{O}($OrderBy$_{p}(\mathcal{D}))=$ True {\bf then} {\bf return} $p$
        \STATE $\sigma_r = \sigma\cup \{\sum_{k=1}^{d-1} h[k] \theta_k \geq 1 \}$
        \STATE $p=$ a point in $c$ s.t. $\sigma_r$ is satisfied
        \STATE {\bf if} $\mathcal{O}($OrderBy$_{p}(\mathcal{D}))=$ True {\bf then} {\bf return} $p$
        \STATE {\bf return}
    \ENDIF
    \STATE $\sigma_l = \sigma\cup \{\sum_{k=1}^{d-1} T.h[k] \theta_k \leq 1 \}$
    \IF {$h$ passes through $\sigma_l$}
        \STATE {\bf if} $p=$\harrangementc($T$,$h$,$c$,$\sigma_l$) is not null {\bf then} {\bf return} $p$
    \ENDIF
    \STATE $\sigma_r = \sigma\cup \{\sum_{k=1}^{d-1} T.h[k] \theta_k \leq 1 \}$
    \IF {$h$ passes through $\sigma_r$}
        \STATE {\bf if} $p=$\harrangementc($T$,$h$,$c$,$\sigma_r$) is not null {\bf then} {\bf return} $p$
    \ENDIF
\end{algorithmic}
\end{algorithm}

After identifying $\mathcal{HC}$ (the sets of hyperplanes passing through the cells), for each cell $c\in Cells$, we limit the arrangement to $\mathcal{HC}[c]$.
Moreover, note that in this step our goal is to find a satisfactory function inside $c$. This is different from our objective in \arrangement, where we wanted to find {\em all} satisfactory regions.
This gives us the opportunity to apply a {\em stop early} strategy, as follows: at every iteration, while using the arrangement tree for the construction, check a function inside the newly added regions, and stop as soon as a satisfactory function is discovered.

Algorithm~\ref{alg:cellarrangement}, \cellarrangementnsp, assigns a satisfactory function to the cells that intersect with a satisfactory region $R$. It calls Algorithm~\ref{alg:harrangec} that adds the new hyperplanes and checks if a function inside the new regions is satisfactory. Both algorithms stop as soon as they find a satisfactory function and assign it to the cell.

Figure~\ref{fig:atc} illustrates how Algorithm~\ref{alg:cellarrangement} finds a satisfactory function for cell $c$. After adding hyperplanes $hc_1$ and $hc_2$, since functions $f_1$ to $f_6$ are unsatisfactory (denoted by red color), Algorithm~\ref{alg:cellarrangement} adds $hc_3$ to the construction. In this example, $hc_3$ does not pass through $\{hc_1^-, hc_2^-\}$, but it passes through $R=\{hc_1^-, hc_2^+\}$, dividing it into $R_l = R\cup hc_3^-$ and $R_r = R\cup hc_3^+$. Although $f_7\in R_l$ is unsatisfactory, $f_8\in R_r$ is satisfactory. The algorithm assigns $f_8$ to $c$ and stops without constructing the rest of the arrangement.

Considering $|\mathcal{HC}[c]|$ as the total number of hyperplanes passing through a cell $c$, the complexity their arrangement is $O(|\mathcal{HC}[c]|^{d-1})$. Thus, adopting Theorem~\ref{th:3} for Algorithm~\ref{alg:cellarrangement}, its time complexity is $O\big(|\mathcal{HC}[c]|^d Lp(|\mathcal{HC}[c]|) + |\mathcal{HC}[c]|^{d-1} n\log n \mathbb{O}_n \big)$ for a fixed $d$.
%, Algorithm~\ref{alg:cellarrangement} in the worst case needs to construct the complete arrangement, and is in $O\big(|\mathcal{HC}[c]|^{d-1} \big)$ for a fixed $d$.

\subsection{Coloring cells outside satisfactory regions}\label{subsec:coloring}
So far, we identified cells $\mathcal{C}$ that intersect with some satisfactory region, and assigned a satisfactory function to each of them. We now focus on cells $\bar{\mathcal{C}}$ that do not contain a satisfactory function. For  ease of explanation, we will represent the satisfactory function assigned to cell $c\in\mathcal{C}$ with the color of $c$ (see Figure~\ref{fig:coloring1}).
For each cell $c'\in \bar{\mathcal{C}}$, our objective is to find the closest satisfactory function to the center of $c'$, and to color $c'$ accordingly (see Figure~\ref{fig:coloring2}).

To do so, we implement \cellcoloring \submit{(see technical report~\cite{techreport} for pseudocode)},
%Algorithm~\ref{alg:colorcells}, 
an algorithm that uses monotonicity of the angular distance and adopts Dijkstra's algorithm~\cite{Dij}. The algorithm initially sets the distance of the satisfactory cells to zero, and the distance of all other cells to $\infty$, and adds them to a priority queue $Q$.
Then, while $Q$ is not empty, it visits the cell $c$ with the minimum distance, and remove it from $Q$.  For all neighbors of $c$ that are still not visited and their distances are more than the angular distance of their center with $F[c]$, the algorithm updates their distance and position in the queue, and sets their color to $F[c]$.

\techrep{
\vspace{2mm}
\begin{algorithm}[!h]
\caption{{\bf \cellcoloring}\\
         {\bf Input:} Satisfactory cells $\mathcal{C}$, unsatisfactory cells $\bar{\mathcal{C}}$, and assigned functions to cells $F$
        }
\begin{algorithmic}[1]
\label{alg:colorcells}
\FOR{$c \in Cells$}
	\STATE visited[$c$] = False
    \STATE {\bf if} $c\in\mathcal{C}$ {\bf then} $Q$.add\_with\_priority($c$, 0)
	\STATE {\bf else} $Q$.add\_with\_priority($c'$, $\infty$)
\ENDFOR
\WHILE {$Q$ is not empty}
	\STATE $c = Q$.extract\_min()
	\STATE visited$[c]$ = True
	\FOR{each neighbor $c'$ of $c$ where visited$[c]$ = False}
		\STATE alt = $\theta_{F[c],\mbox{center}(c')}$
		\IF{alt$<$dist$[c']$}
			\STATE $dist$[c'] = alt; $F[c'] = F[c]$
			\STATE $Q$.decrease\_priority$(c', alt)$
		\ENDIF
	\ENDFOR
\ENDWHILE
\end{algorithmic}
\end{algorithm}
}

Since the number of neighbors of each cell is fixed, it is easy to see that \cellcoloring is in $O(N\log N)$~\cite{Dij}.

Applying \cellcoloring completes offline preprocessing. After this step every cell in the partitioned angle space is assigned a satisfactory function\footnote{\small We assume the existence of at least one satisfactory region.}. % in function space.}.
We store the cell coordinates, together with the assigned satisfactory functions, as an index that enables online answering of user queries, discussed next.

\subsection{Online processing}\label{subsect:speedup-online}

Given an unsatisfactory function $f$, 
we now need to find the cell to which $f$ belongs, and to return the satisfactory function assigned to that cell. This is implemented by \techrep{Algorithm~\ref{alg:mdonline}} \submit{the algorithm \mdonline (see technical report~\cite{techreport} for pseudocode)}.

Given a query $f$ and the assigned functions to the cells, the algorithm transforms the weight vector of $f$ to polar coordinates and then performs binary search on each dimension to identify the cell $c$ to which $f$ belongs.  \mdonline returns the satisfactory function of the cell, $F[c]$.

\begin{theorem}\label{th:mdonline}
Algorithm \mdonline runs in $O(\log N)$ time.
\end{theorem}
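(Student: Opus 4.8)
The plan is to decompose the running time of \mdonline into its three constituent operations and bound each of them for a fixed number of dimensions $d$. First I would recall the structure of the index produced during preprocessing: the angle space $[0,\pi/2]^{d-1}$ is partitioned into $N$ axis-aligned cells that form a grid, so that along each of the $d-1$ angle axes there are $m = N^{1/(d-1)}$ sorted boundary values, giving $m^{d-1} = N$ cells in total. The index stores, for every cell, the assigned satisfactory function $F[c]$, keyed by the cell's coordinate tuple. Establishing this product-grid view is what lets the localization be carried out one dimension at a time.

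The first operation converts the query's weight vector $\vec{w}$ to its polar representation $\langle \theta_1, \ldots, \theta_{d-1}\rangle$ via ToPolar; this computes $d-1$ angles and costs $O(d)$ time, which is $O(1)$ for fixed $d$. The second operation is the per-dimension localization: for each angle coordinate $\theta_k$, a binary search over the $m$ sorted boundaries of dimension $k$ identifies the slab index $c_k$ containing $\theta_k$. Each such search costs $O(\log m)$, and since $\log m = \log N^{1/(d-1)} = \frac{1}{d-1}\log N$, this is $O(\log N)$ for fixed $d$. Repeating this over all $d-1$ dimensions yields a total of $O\big((d-1)\log N\big) = O(\log N)$.

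Finally, the $d-1$ slab indices are assembled into the cell identifier $c = \langle c_1, \ldots, c_{d-1}\rangle$ in $O(d) = O(1)$ time, and the stored satisfactory function $F[c]$ is retrieved in $O(1)$ time. Summing the three contributions gives $O(1) + O(\log N) + O(1) = O(\log N)$, which establishes the claim.

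The step most in need of care is justifying that binary search applies independently on each dimension, i.e., that the partition is a genuine axis-aligned grid whose per-dimension boundaries form a sorted sequence of size $O(N^{1/(d-1)})$, rather than an arbitrary decomposition of the angle space. This follows from the grid partitioning of \S~\ref{sec:speedup} (detailed in the appendix), in which every cell is a hypercube and the cell ranges in each dimension are sorted; once this structural fact is invoked, the remaining analysis is routine and the dependence on $d$ collapses to a constant factor.
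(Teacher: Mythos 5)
Your overall route --- polar conversion in $O(1)$ for fixed $d$, one binary search per angle dimension, constant-time retrieval of $F[c]$ --- is exactly the paper's argument, and the conclusion is right. However, the structural premise you yourself single out as ``the step most in need of care'' is justified incorrectly. You assert that the partition is a genuine product grid, with each axis carrying a single global sorted list of $m = N^{1/(d-1)}$ boundaries shared by all cells. The paper's partitioning (Appendix~\ref{subsec:anglepartitioning}) is explicitly \emph{not} this: a regular equi-width grid would not produce equal-area cells, so \anglepartitioning builds a tree of depth $d-1$ in which the boundaries along axis $\theta_k$ depend, via Equation~\ref{eq:getnextangle}, on the ranges already chosen for axes $\theta_1,\dots,\theta_{k-1}$. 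Consequently there is no global per-axis boundary list of size $N^{1/(d-1)}$; different rows of axis $k-1$ induce different subdivisions of axis $k$.

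The repair is immediate, which is why this is a flaw in the justification rather than in the result: Algorithm~\ref{alg:mdonline} binary-searches, at each of the $d-1$ levels, among the sorted children of the \emph{current} tree node, and each node has at most $N$ children (in fact roughly $\lceil (\pi/2)/\gamma \rceil$ of them), so each per-level search costs $O(\log N)$ and the total is $O\big((d-1)\log N\big) = O(\log N)$ for fixed $d$ --- the same arithmetic you carried out, just grounded in the path-dependent tree rather than a product grid. One further mismatch worth stating explicitly: the first line of the pseudocode calls the oracle on the ordering induced by $\vec{w}$, which costs $O(n\log n + \mathbb{O}_n)$, and your three-operation decomposition silently omits it. The paper's own proof acknowledges this ordering cost but likewise excludes it from the stated bound, treating the query as an already-unsatisfactory function per \S~\ref{subsect:speedup-online}; your omission is therefore consistent with the theorem's intent, but it should be flagged rather than elided.
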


\techrep{
\begin{algorithm}[!h]
\caption{{\bf \mdonline}\\
         {\bf Input:} partitioned space $T$, assigned functions $F$,  dataset $\mathcal{D}$, fairness oracle $\mathcal{O}$, and weight vector $\vec{w}$ \\
         {\bf Output:}  satisfactory weight vector $\vec{w'}$
        }
\begin{algorithmic}[1]
\label{alg:mdonline}
    \IF{$\mathcal{O}($OrderBy$_{f_{\vec{w}}}(\mathcal{D}))=$ True}
        \STATE {\bf return $\vec{w}$}
    \ENDIF
    \STATE $(r,\Theta)=$ {\bf ToPolar}$(\vec{w})$
    \FOR{$k=1$ to $d-1$}
        \STATE $T$ = apply binary search on children of $T$ and find the child to which $\Theta_k$ belongs
    \ENDFOR
    \STATE {\bf return} $F[T]$
\end{algorithmic}
\end{algorithm}
}

\techrep{
\begin{proof}
The proof simply follows the fact that ordering the items based on the input function is in $O(n\log n)$ while finding its corresponding cell, using binary search is in $O(\log N)$.
\end{proof}
}

\begin{theorem}\label{th:thresh}
Let $f_{opt}$ and $\theta_{opt}$ be the closest function and its angle distance to a queried function $f$.
Also, let $f_{app}$ and $\theta_{app}$ be the function and its angle distance that \submit{Algorithm \mdonline} \techrep{Algorithm~\ref{alg:mdonline}} returns for $f$, based on the space partitioning parameter $N$.
Then, $\theta_{app}\leq \theta_{opt}$ + $4 \arcsin \Bigg( \frac{\sqrt{d-1} }{2} \sqrt[d-1]{ \frac{\pi^{d/2}}{N 2^{d-1} \Gamma (d/2)} } ~ \Bigg)$.
\end{theorem}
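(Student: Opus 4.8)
The plan is to treat the angular distance between two scoring functions as a metric $\delta(\cdot,\cdot)$ on the set of rays in the positive orthant. Since this is the geodesic distance on the unit sphere restricted to that orthant, the triangle inequality holds, and the whole argument reduces to chaining triangle inequalities among four carefully chosen points: the query $f$, the center of the cell containing $f$, the optimal satisfactory function $f_{opt}$, and the satisfactory function that was discovered for the cell containing $f_{opt}$. If $f$ is already satisfactory then \mdonline returns $f$ and $\theta_{app}=\theta_{opt}=0$, so I may assume $f$ is unsatisfactory. Recall that \mdonline returns $f_{app}=F[c]$, where $c$ is the cell that contains $f$, so $\theta_{app}=\delta(f,f_{app})$.

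The single geometric ingredient I would isolate as a lemma is a bound on the \emph{angular radius} of a cell. The partitioning construction of \S~\ref{sec:speedup} splits the angle space into $N$ cells, each covering an equal $\frac{1}{N}$ share of the spherical measure of the orthant. Since the unit sphere restricted to the positive orthant has area $\frac{\pi^{d/2}}{2^{d-1}\Gamma(d/2)}$, each $(d-1)$-dimensional cell has side length $s=\sqrt[d-1]{\frac{\pi^{d/2}}{N\,2^{d-1}\Gamma(d/2)}}$. The point of a cell farthest from its center is a corner, at half-diagonal $\frac{\sqrt{d-1}}{2}\,s$; converting this chord on the unit sphere to an angle shows that the angular distance from the center of any cell to any point inside it is at most $\epsilon:=\arcsin\!\big(\frac{\sqrt{d-1}}{2}\,s\big)$, which is exactly the argument of the $\arcsin$ in the statement. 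This chord-to-angle step, and the equal-measure normalization that fixes the constant $\frac{\sqrt{d-1}}{2}$, is the part I expect to require the most care, and I would defer its details to the angle-partitioning construction.

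Next I would locate the relevant discovered functions. Let $\bar c$ denote the center of the cell $c$ containing $f$. Since a satisfactory region is assumed to exist, $f_{opt}$ lies in some cell $c_{opt}$ that intersects a satisfactory region; by \cellarrangementnsp, $c_{opt}\in\mathcal{C}$ is therefore assigned a satisfactory function $g:=F[c_{opt}]\in c_{opt}$. By correctness of \cellcoloringnsp, which assigns to every cell the nearest discovered satisfactory function to its center (using monotonicity of angular distance), $f_{app}=F[c]$ is the discovered satisfactory function closest to $\bar c$; since $g$ is discovered, this gives $\delta(\bar c,f_{app})\le\delta(\bar c,g)$.

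Finally I would assemble the bound by repeated triangle inequalities together with the radius lemma:
\[
\theta_{app}=\delta(f,f_{app})\le \delta(f,\bar c)+\delta(\bar c,f_{app})\le \delta(f,\bar c)+\delta(\bar c,g),
\]
and then bound $\delta(\bar c,g)\le \delta(\bar c,f)+\delta(f,f_{opt})+\delta(f_{opt},g)$. Using $\delta(f,\bar c)\le\epsilon$ (counted twice, once directly and once in the detour back through $\bar c$), $\delta(f_{opt},g)\le 2\epsilon$ (both lie in $c_{opt}$, routed through its center), and $\delta(f,f_{opt})=\theta_{opt}$ by definition, I obtain $\theta_{app}\le \theta_{opt}+4\epsilon=\theta_{opt}+4\arcsin\!\big(\frac{\sqrt{d-1}}{2}\,s\big)$. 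The main obstacle is thus twofold: proving the angular-radius lemma with precisely the stated constant, and confirming that \cellcoloringnsp\ indeed returns the nearest discovered satisfactory function to each cell's center so that the step $\delta(\bar c,f_{app})\le\delta(\bar c,g)$ is valid. Everything else is a mechanical chain of triangle inequalities, and the factor $4$ arises transparently as the sum of the four $\epsilon$ contributions above.
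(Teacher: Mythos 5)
Your proposal is correct and follows essentially the same argument as the paper's proof: the same four points ($f$, the center of its cell, the discovered function $g$ in $f_{opt}$'s cell, and $f_{app}$), the same use of the coloring property $\delta(\bar c,f_{app})\le\delta(\bar c,g)$, and the same cell-radius bound $\epsilon=\arcsin\bigl(\tfrac{\sqrt{d-1}}{2}\,s\bigr)$ from the equal-area partitioning, yielding the identical $\theta_{opt}+4\epsilon$ accounting (the paper writes it as $\theta_{app}\le\theta_{opt}+\theta_5+2\theta_3$ with $\theta_5\le 2\epsilon$, $\theta_3\le\epsilon$). Your routing of $\delta(f_{opt},g)$ through the center of $c_{opt}$ rather than invoking the cell diameter directly is a cosmetic difference only.
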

\techrep{
\begin{proof}
\begin{figure}[h]
	\centering
    \includegraphics[width = 0.3\textwidth]{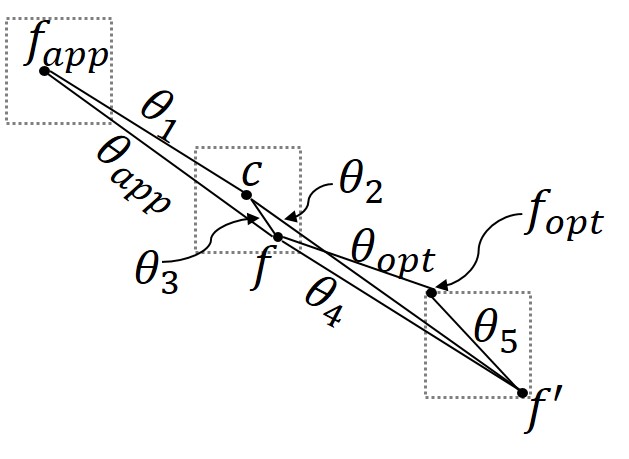}
    \vspace{-4mm}
    \caption{Illustration of $\theta_{app}$ v.s. $\theta_{opt}$}\label{fig:proofdist}
\end{figure}
Let $c_{app}$ and $c_{opt}$ be the cells $f_{app}$ and $f_{opt}$ belong to. 
First, there should exists a satisfactory function $f'$ inside $c_{opt}$ that is assigned to it. That is because $f_{opt}$ belongs to $c_{opt}$ and thus its intersection with the satisfactory regions is not empty.
Figure~\ref{fig:proofdist} illustrates such a setting. The point $c$ in the figure shows the center of the cell that $f$ belongs to. Since $f_{app}$ is assigned to this cell, the angle distance between $f_{app}$ and $c$ ($\theta_1$ in the figure) is less than the angle distance between $f'$ and $c$ ($\theta_2$ in the figure).
Let $\theta_3$, $\theta_4$, and $\theta_5$ (as specified in the figure) be the angle distance between $c$ and $f$, $f$ and $f'$, and $f'$ and $f_{opt}$, respectively.
Following the triangular inequality:
\begin{align}
\nonumber & \theta_{app} \leq \theta_1 + \theta_3,~ \theta_4 \geq \theta_2 - \theta_3 \\
\nonumber \Rightarrow ~ & \theta_{app} + \theta_2 - \theta_3 \leq \theta_1 + \theta_3 + \theta_4\\
\nonumber \Rightarrow ~ & \theta_{app} \leq \theta_4 + 2\theta_3
\end{align}
Similarly:
\begin{align}
\nonumber & \theta_4 \leq \theta_5 + \theta_{opt}\\
\nonumber \Rightarrow ~ & \theta_{app} \leq \theta_{opt} + \theta_5 + 2\theta_3
\end{align}
Let $\theta_r$ be the diameter of each cell. Looking at the figure, $\theta_5 \leq \theta_r$ and $\theta_3 \leq \theta_r /2$.
Thus: $$\theta_{app} \leq \theta_{opt} + 2 \theta_r$$
Following Equation~\ref{eq:cellside}, the diameter of the hypercube base of each cell is:
$$\eta_{d} = \sqrt{d-1} ~ \sqrt[d-1]{ \frac{\pi^{d/2}}{N 2^{d-1} \Gamma (d/2)} }$$
Therefore, $\theta_r$ is:
$$\theta_r =2 \arcsin \Bigg( \frac{\sqrt{d-1} }{2} \sqrt[d-1]{ \frac{\pi^{d/2}}{N 2^{d-1} \Gamma (d/2)} } ~ \Bigg)$$
Hence:
$$\theta_{app}\leq \theta_{opt} + 4 \arcsin \Bigg( \frac{\sqrt{d-1} }{2} \sqrt[d-1]{ \frac{\pi^{d/2}}{N 2^{d-1} \Gamma (d/2)} } ~ \Bigg)$$
\end{proof}
}

\submit{Proofs of Theorems~\ref{th:mdonline} and~\ref{th:thresh} are given in the technical report~\cite{techreport}.}

\subsection{Sampling for large-scale settings}
\label{sec:opt:sample}
A critical requirement of our system is to be efficient during online query processing, and it is fine for it to spend more time in the offline preprocessing.
As discussed in \S~\ref{sec:md} and \S~\ref{sec:speedup}, the proposed offline algorithms are polynomial for a fixed value of $d$.  In addition, the arrangement tree (c.f. \S~\ref{sec:md}) and the techniques of \S~\ref{sec:speedup} speed up preprocessing in practice.
However, preprocessing can still be slow, particularly for a large number of items. We reduce preprocessing time using sampling.

The main idea is that a uniform sample of the data maintains the underlying properties of the data distribution. Therefore, if a function is satisfactory for a dataset, it is {\em expected} to be satisfactory for a uniformly sampled subset.
Hence, for a datasets with large numbers of items, one can do the preprocessing on a uniformly sampled subset to find functions that are expected to be satisfactory for each cell.  We confirm the efficiency and effectiveness of this method experimentally on a dataset with over one million items in \S~\ref{sec:exp}. 
%As we shall later show in \S~\ref{sec:exp}, our experimental evaluation on a large dataset with more than one million items confirms this.

%\julia{What follows in this paragraph is defensive, I suggest removing, and replacing with a preview of the relevant experimental results instead.  I'll take care of this but let me know if you object.}\abol{I commented it, replacing with the few lines above}
%As a result, if one does the preprocessing on a uniformly sampled subset, rather than returning a function $f'$ that is satisfactory, when answering the users' queries, it returns a function $f'$ that is {\em expected} to be satisfactory. That is, the deviation of its ordering from a a satisfactory ordering, with a high confidence, is very low. Of course, providing the precise statements depends on the choice of ranking model.

%\input{discussions}
\section{Experimental Evaluation}\label{sec:exp}

\subsection{Experimental Setup}

\stitle{Hardware and platform.}
The experiments were performed on a Linux machine with a 2.6 GHz Core I7 CPU and 8GB memory.
The algorithms were implemented using Python2.7.
We used the Python scipy.optimize~\footnote{\small https://docs.scipy.org/doc/scipy/reference/optimize.html} package for LP optimizations.

\stitle{Datasets.} All experiments are conducted on real datasets.

%\noindent
{\it COMPAS}: a dataset collected and published by ProPublica as part of their investigation into racial bias in criminal risk assessment software~\cite{propublica}.
The dataset contains demographics, recidivism scores produced by the COMPAS software, and criminal offense information for 6,889 individuals.  

We used  {\tt c\_days\_from\_compas}, {\tt juv\_other\_count}, {\tt days\\\_b\_screening\_arrest}, {\tt start}, {\tt end}, {\tt age}, and {\tt priors\_cou-\\nt} as scoring attributes.  We normalized attribute values as $(val-min)/(max-min)$.  For all attributes except {\tt age}, a higher value corresponded to a higher score.
%While the other attributes are considered as the higher the better (i.e., a higher value dominates a lower one), we consider the {\tt age} as the lower the better (i.e., the lower the age is the higher the score will be).
In addition to the scoring attributes, we consider attributes {\tt sex} (0:male, 1: female), {\tt age\_binary} (0: less than 35 yo, 1: more than 36 yo), {\tt race} (0: African American, 1: Caucasian, 2: Other), and {\tt age\_bucketized} (0: less than 30 yo, 1: 31 to 40 yo, 2: more than 40 yo), as the type attributes.
%Besides its importance as a well-known dataset for fairness studies, this dataset has several scoring, large number of items, and protected attributes that allows us to study the performance of proposed algorithms for different scales and different fairness models on different combinations of attributes. Therefore,
COMPAS is the default dataset for our experiments.

%\noindent
{\it US Department of Transportation (DOT):} the flight on-time data-base published by DOT is 
widely used by third-party websites to identify the on-time performance of flights, routes, airports, and airlines~\cite{dot}.  The dataset contains 1,322,024 records, for all flights conducted by the 14 US carriers in the first three months of 2016. We use this dataset to study sampling for large-scale settings, and to showcase the application of our techniques for diversity.
%four major companies Delta Airlines (DL), American Airlines (AA), Southwest (WN), and United Airlines (UA)

\stitle{Fairness models.} We evaluate performance of our methods over two general fairness models, see  \S~\ref{sec:pre}. 

%\noindent
{\it FM1}, proportional representation on a single type attribute, is the default fairness model in our experiments.  This model can express common proportionality constraints from the literature~\cite{DBLP:conf/innovations/DworkHPRZ12,DBLP:conf/kdd/FeldmanFMSV15,DBLP:journals/datamine/Zliobaite17}, including also for ranked outputs~\cite{DBLP:conf/cikm/ZehlikeB0HMB17} and for set selection~\cite{StoyanovichYJ18}.  The distinguishing features of FM1 are (1) that the type attribute partitions the input dataset $\mathcal{D}$ into groups and (2) that the proportion of members of a particular group is bounded from below, from above, or both.  For the COMPAS dataset, unless noted otherwise, we state FM1 over the type attribute {\tt race} as follows:  African Americans constitute about 50\% of the dataset; a fairness oracle will consider a ranking to be satisfactory if at most 60\% (or about 10\% more than in $\mathcal{D}$) of the top-ranked 30\% are African American. 

%\noindent
{\it FM2}, proportional representation on multiple, possibly overlapping, type attributes, is a generalization of FM1 that can express proportionality constraints of~\cite{DBLP:journals/corr/CelisSV17}.  As in~\cite{DBLP:journals/corr/CelisSV17}, we bound the number of members of a group from above.  For example, for COMPAS, we specify the maximum number of items among the top-ranked 30\% based on {\tt sex} (80\% of $\mathcal{D}$ are male), {\tt race} (50\% are African American), and {\tt age\_bucketized} (42\% are 30 years old or younger, 34\% are between 31 and 50, and 24\% are over 50).  In all experiments, a ranking is considered satisfactory if the proportion of members of a particular demographic group is no more than 10\% higher than its proportion in $\mathcal{D}$.

\begin{figure*}[!ht] 
	%\centering
    \begin{minipage}[t]{0.24\linewidth}
        \centering
        \includegraphics[width =\textwidth]{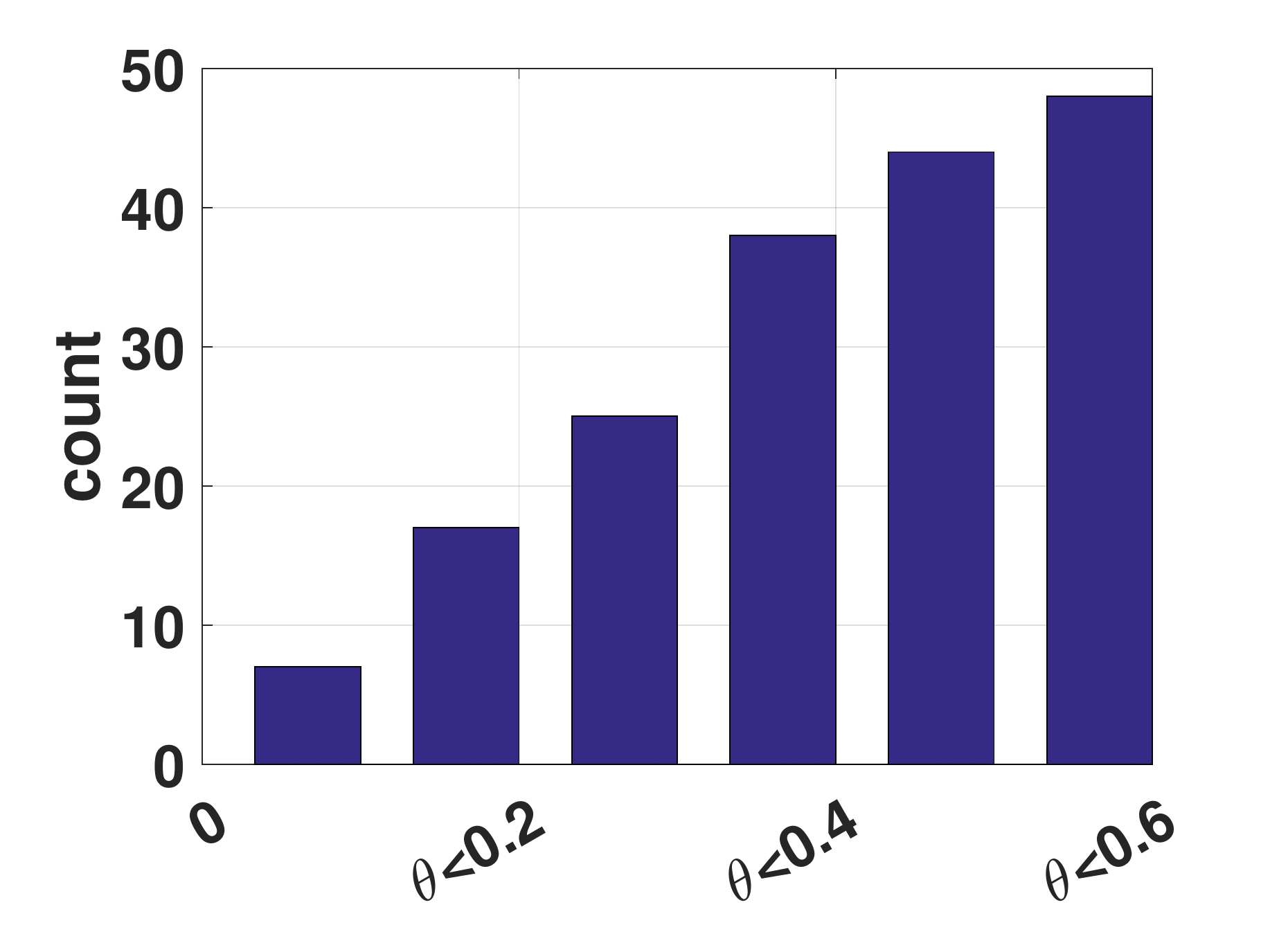}
        \vspace{-6mm}\caption{MD, angle distance between input and output functions}
        \label{fig:exp-mdonlinedist}
    \end{minipage}
    \hspace{1mm}
    \begin{minipage}[t]{0.24\linewidth}
        \centering
        \includegraphics[width =\textwidth]{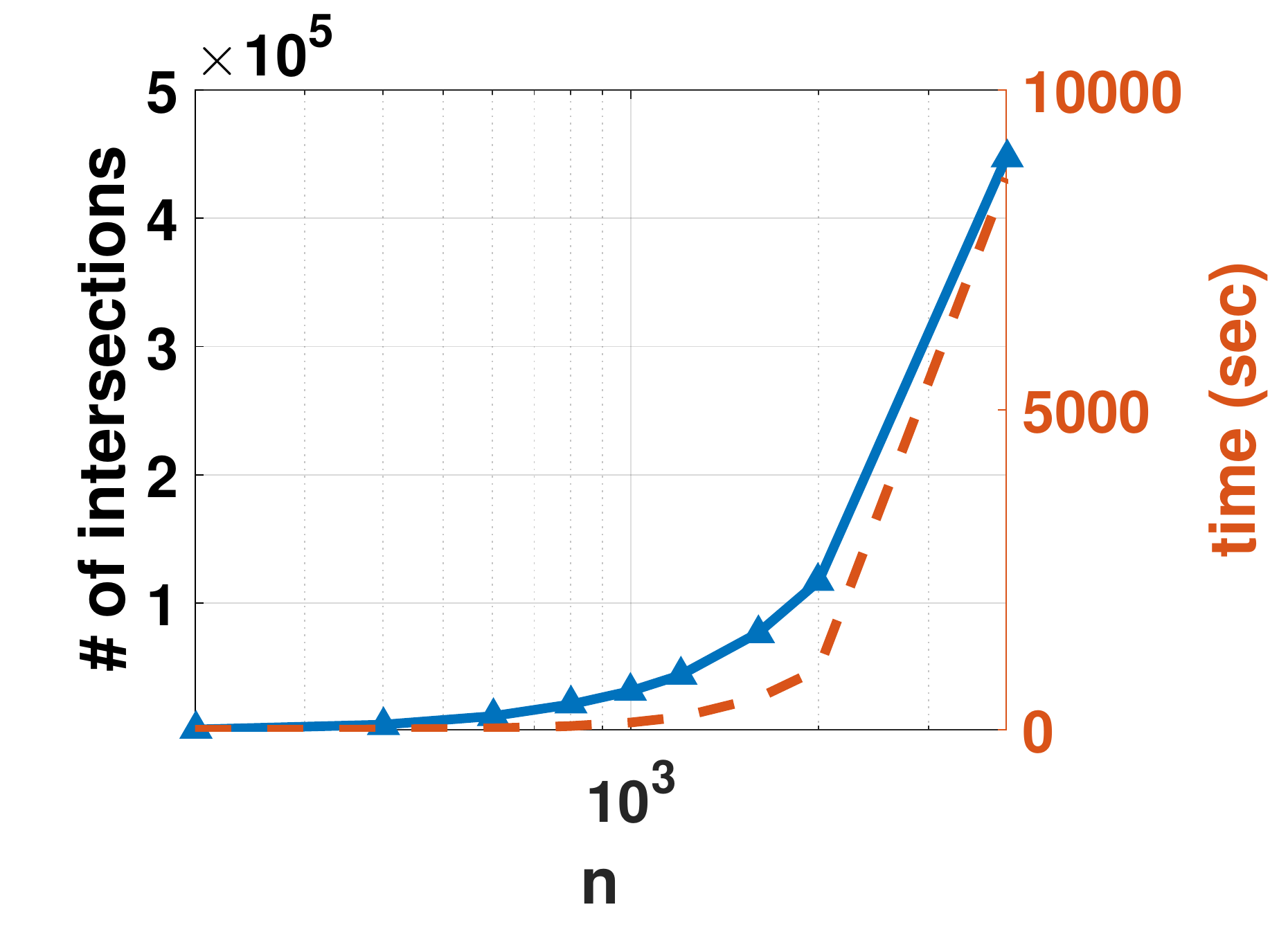}
        \vspace{-6mm}\caption{2D; preprocessing time, varying n}
        \label{fig:exp-twod1}
    \end{minipage}
    \hspace{1mm}
    \begin{minipage}[t]{0.24\linewidth}
        \centering
        \includegraphics[width =\textwidth]{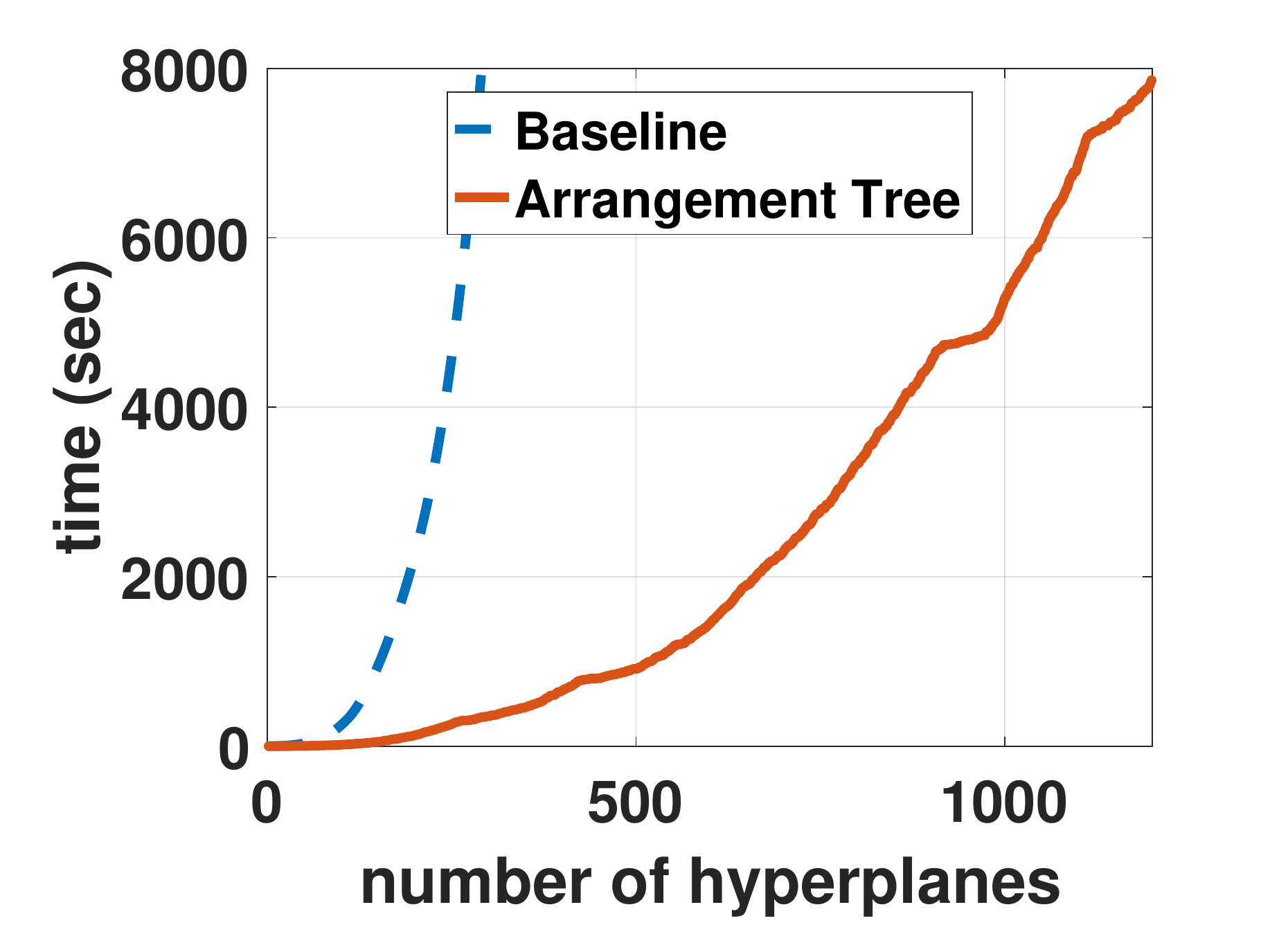}
        \vspace{-6mm}\caption{MD; arrangement construction cost, the advantage of using arrangement tree}
        \label{fig:exp-arrangementtree}
    \end{minipage}  
    \hspace{1mm}
    \begin{minipage}[t]{0.24\linewidth}
        \centering
        \includegraphics[width =\textwidth]{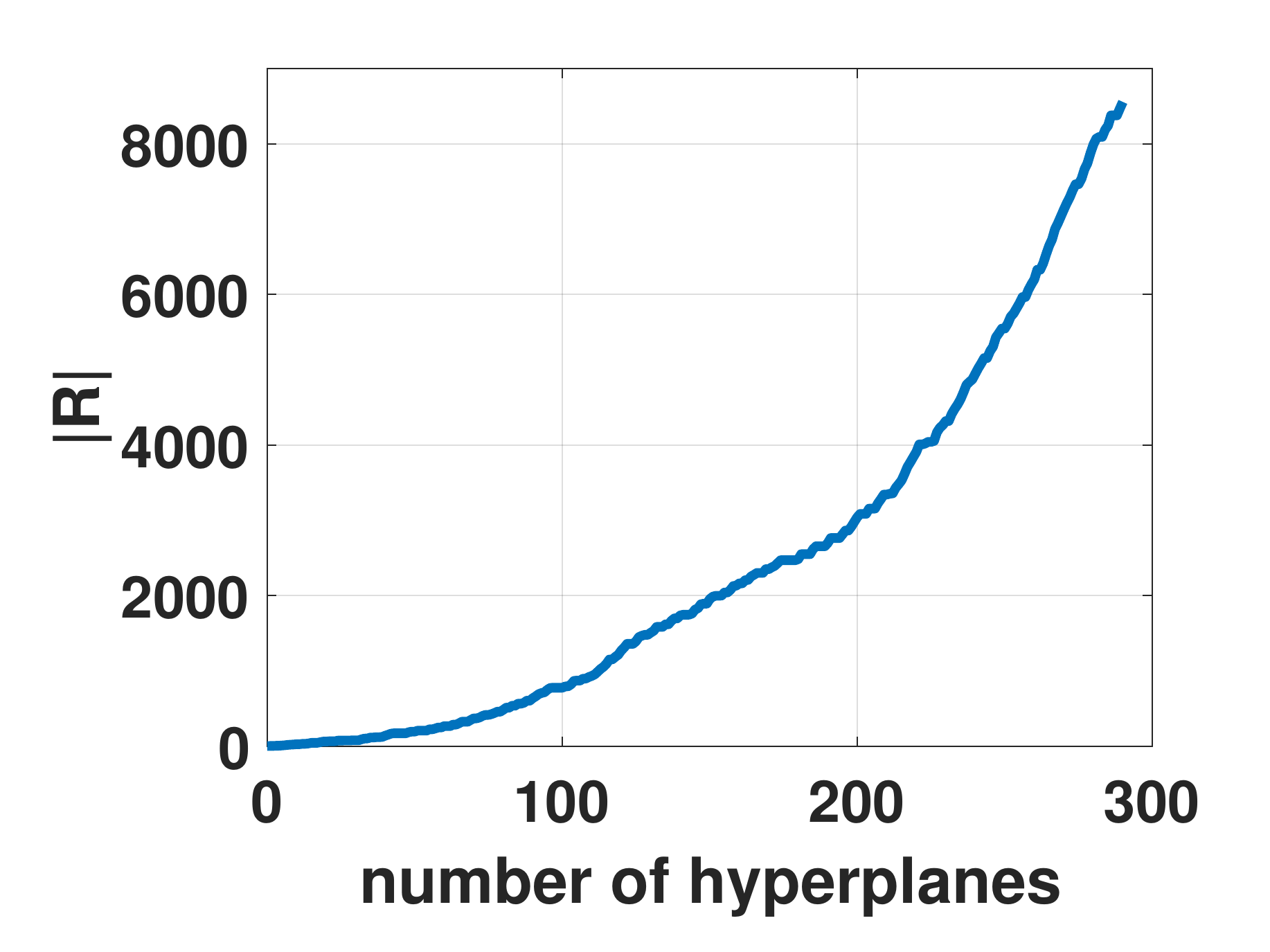}
        \vspace{-6mm}\caption{MD; arrangement complexity while adding the hyperplanes ($d=3$)}
        \label{fig:exp-Noregions}
    \end{minipage}
\end{figure*}

\begin{figure*}[!ht] 
    \begin{minipage}[t]{0.24\linewidth}
        \centering
        \includegraphics[width =\textwidth]{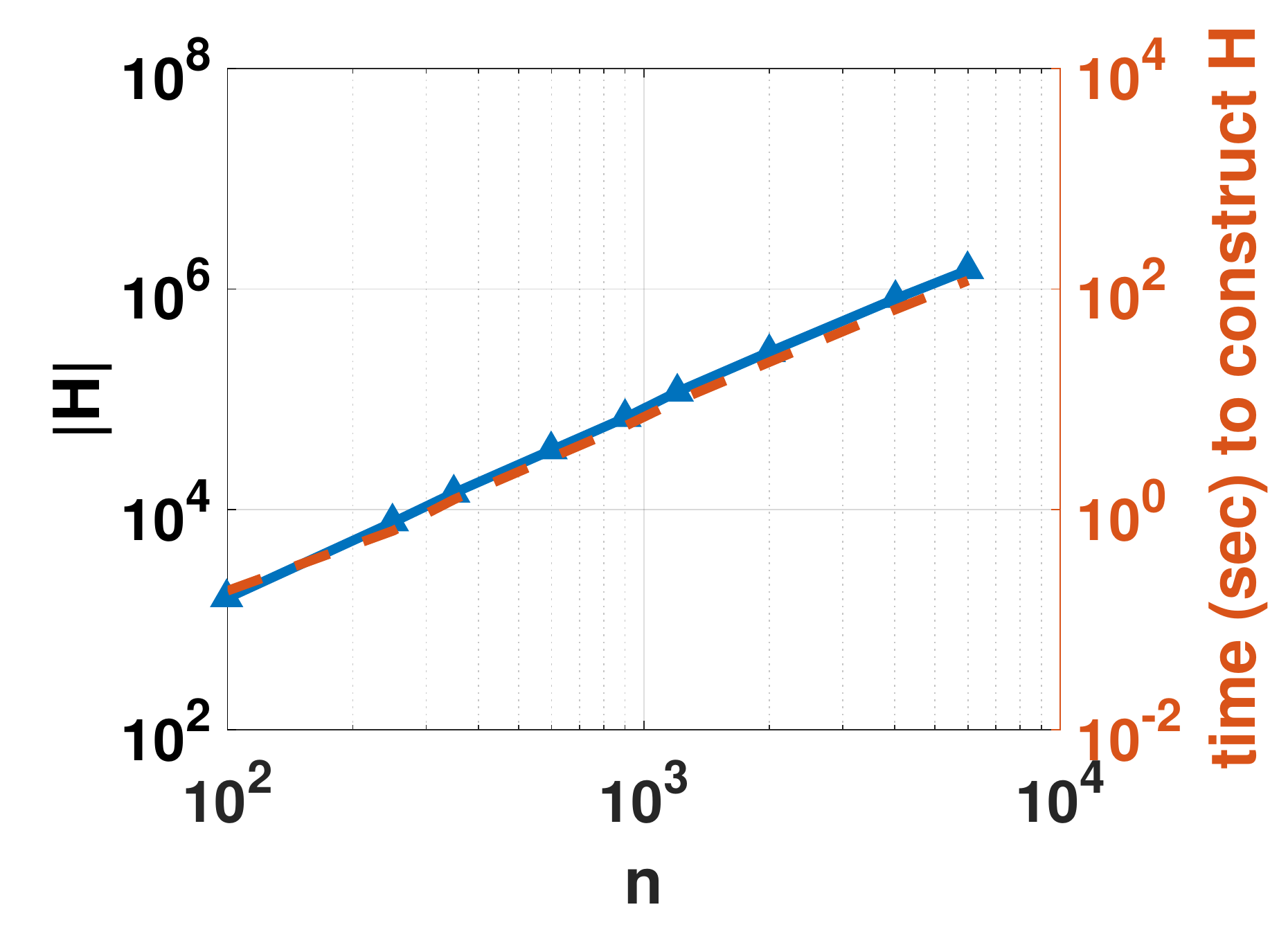}
        \vspace{-6mm}\caption{MD; effect of $n$ on $|H|$ ($d=3$)}
        \label{fig:exp-mdh}
    \end{minipage}
    \hspace{1mm}    
    \begin{minipage}[t]{0.24\linewidth}
        \centering
        \includegraphics[width =\textwidth]{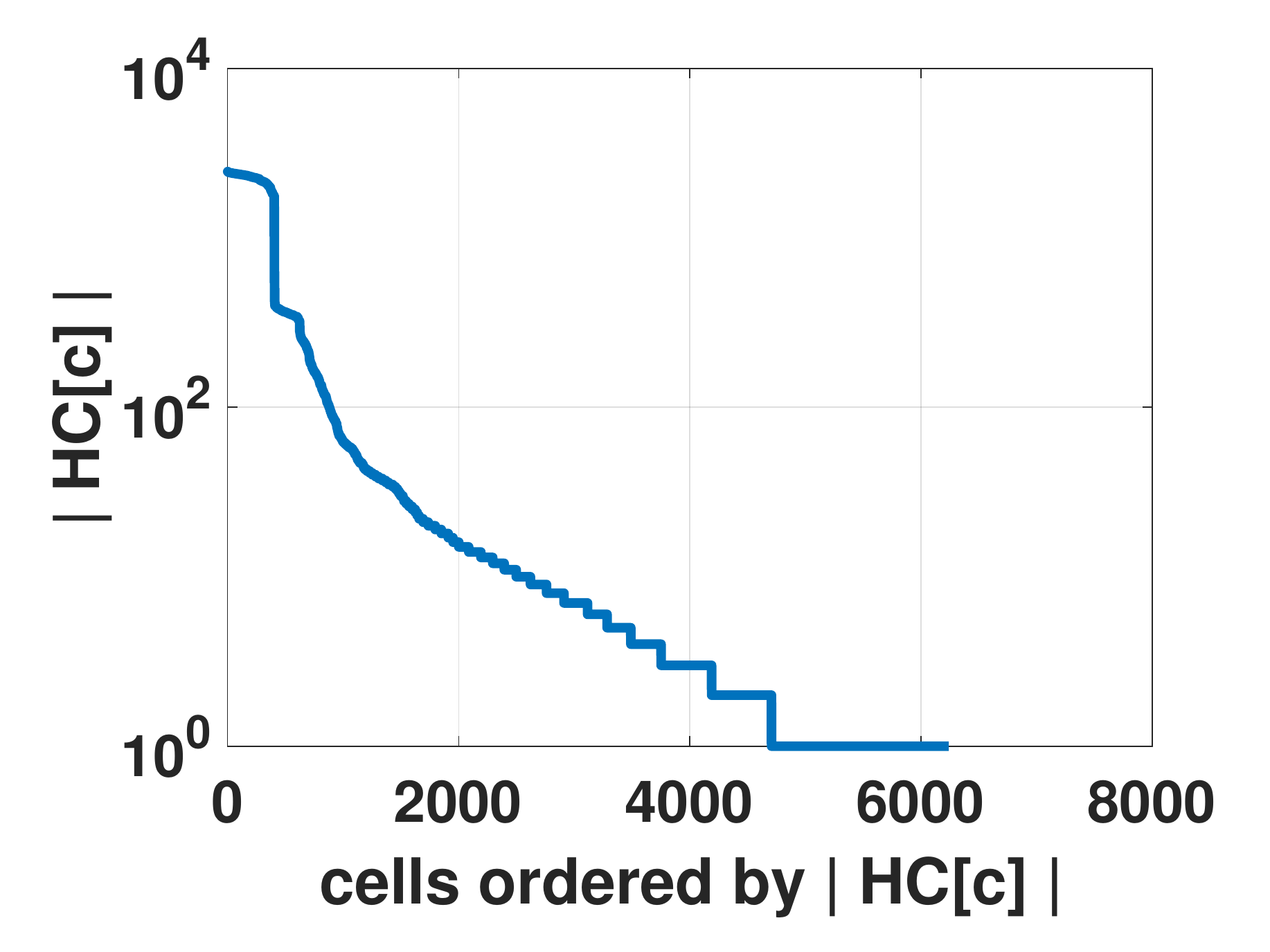}
        \vspace{-6mm}\caption{MD; umber of hyperplanes passing through each cell ($n=100$, $d=4$)}
        \label{fig:exp-mdc}
    \end{minipage}
    \hspace{1mm} 
    \begin{minipage}[t]{0.24\linewidth}
        \centering
        \includegraphics[width =\textwidth]{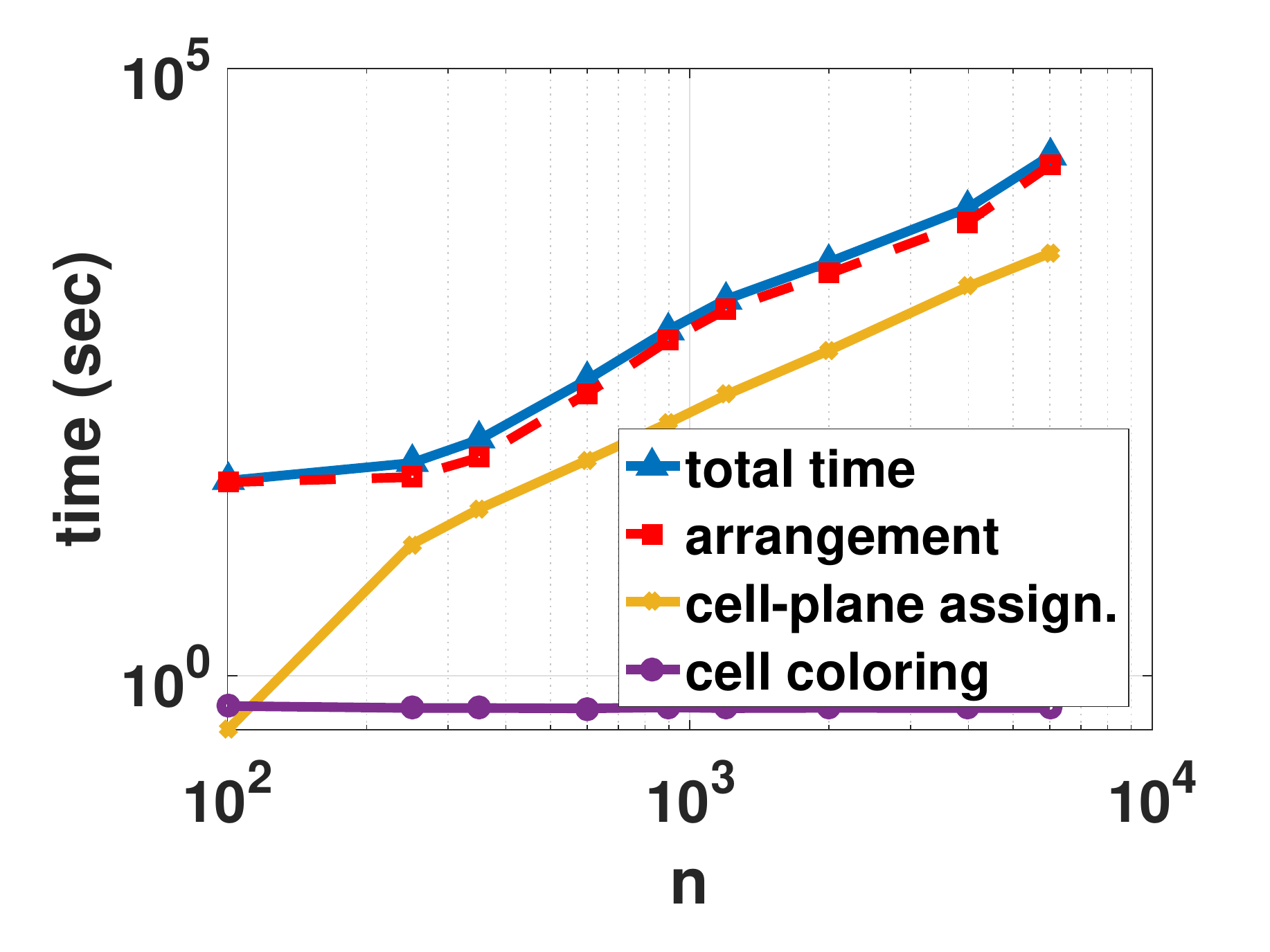}
        \vspace{-6mm}\caption{MD; effect of $n$ on preprocessing times for different steps; $d=3$}
        \label{fig:exp-mdcost}
    \end{minipage}
    \hspace{1mm}
    \begin{minipage}[t]{0.24\linewidth}
        \centering
        \includegraphics[width =\textwidth]{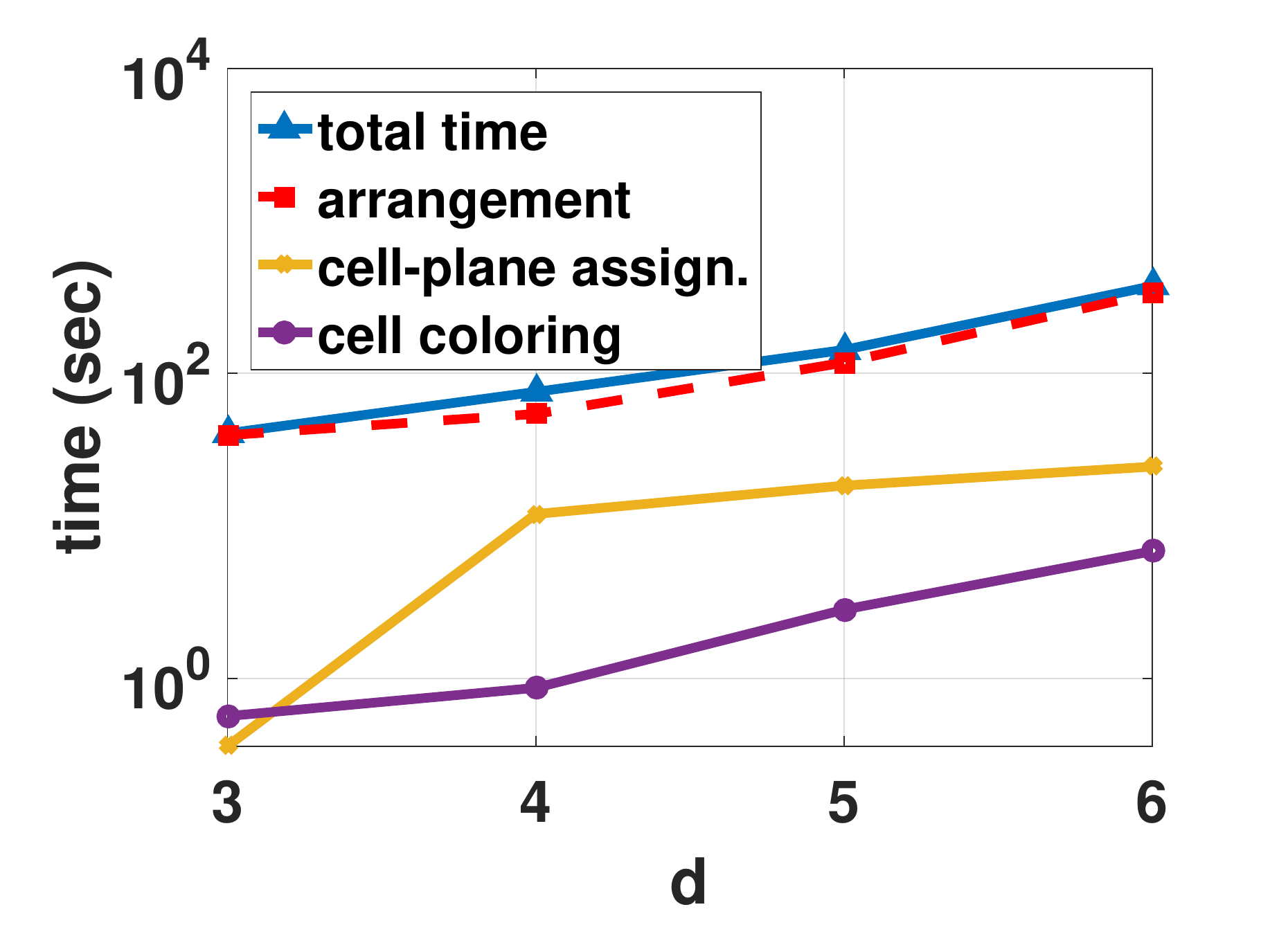}
        \vspace{-6mm}\caption{MD; effect of $d$ on preprocessing times for different steps; $n=100$}
        \label{fig:exp-mdmcost}
    \end{minipage}
\end{figure*}

%\subsection{Experimental Results}

\subsection{Validation experiments}\label{subsec:exp-valid}
%\julia{A systematic validation experiment should go first.  Need a figure here: a cumulative bar chart: $dist(f, f')$ on the $x$-axis (normalized to $[0,1]$), and number of cases where distance was at most $x$ on the $y$-axis.}
In our first experiment, we show that our methods are effective --- that they can identify scoring functions that are both satisfactory and similar to the user's query.  We use the COMPAS dataset with $d=3$ (scoring attributes  {\tt start}, {\tt c\_days\_from\_compas}, {\tt juv\_other\_count}, {\tt start}), and with fairness model FM1 on {\tt race} (at most 60\% African Americans among the top 30\%). 

%To do the systematic validation, using COMPAS while setting $d=3$, we issued 100 random queries and passed them to our system. We consider an ordering to be fair, if it contains at most 60\% in its top-30\%.
% 52\% of the input functions were already satisfactory. 

 We issued 100 random queries, and observed that 52 of them were satisfactory, and so no further intervention was needed.  For the remaining 48 functions, we used our methods to suggest the nearest satisfactory function.  Figure~\ref{fig:exp-mdonlinedist} presents a cumulative plot of the results for these 48 cases, showing the angle distance  $\theta(f,f')$ between the input $f$ and the output $f'$ on the $x$-axis, and the number of queries with {\em at most} that distance on the $y$-axis.  
 
We observe that a satisfactory function $f'$ was found close to the input function $f$ in all cases.
Specifically, note that $\theta(f,f') < 0.6$ in all cases, and recall that $\theta \in [0, \pi/2]$, with lower values corresponding to higher similarity.  (For a more intuitive measure: the value of $\theta = 0.6$ corresponds to cosine similarity of 0.82, where 1 is best, and 0 is worst).  Among the 48 cases, 38 had $\theta(f,f') < 0.4$ (cosine similarity 0.92).

%we plotted the cumulative angle distance between the input and output functions in Figure~\ref{fig:exp-mdonlinedist}. As shown in the figure, the angle distance between the input and output functions were less than 0.6 in all cases, that is the cosine similarity of the system output with what the queried function was always more than 0.82\%. This number was more than 0.92\% for 90\% of the queried functions.

In our next experiment, we give an intuitive understanding of the layout of satisfactory regions in the space of ranking functions.  We use COMPAS with {\tt age} (lower is better) and {\tt juv\_other\_count} (higher is better) for scoring.  The intuition behind this scoring function is that individuals who are younger, and who have a higher number of juvenile offenses, are considered to be more likely to re-offend, and so may be given higher priority for particular supportive services or interventions.

Naturally, a scoring function that associates a high weight with {\tt age} will include mostly members of the younger age group at top ranks.   About 60\% of COMPAS are 35 years old or younger. Consider a fairness oracle that uses FM1 over {\tt age\_binary} (with groups $g_1$: 35 year old or younger, and $g_2$: over 35 years old), and that considers a ranking satisfactory if at most 70\% of the top-100 results are in $g_1$.  Because of the correlation (by design) between one of the scoring attributes and the type attribute, there is only one satisfactory region for this problem set-up --- it corresponds to the set of functions in which the weight on {\tt age} is close to 0, and with the angle with the $x$-axis ({\tt juv\_other\_count}) of at most 0.31.

%Next, we study the effect correlation between a ranking attribute and the protected attribute. To do so, we choose the attributes {\tt priors\_count} and {\tt age} for scoring while choosing the attributes {\tt age\_binary} as the protected attribute. 4,065 out of 6,889 persons in the dataset belong to age group $0$. We considered an ordering to be fair, if at most 68 of its top 100 are less than 35 yo.
%It turned out that for this setting there is only one satisfactory region that is the set of function that their angle with the x-axis ({\tt priors\_count}) is at most 0.4. That is because of the high correlation between the attributes {\tt age} and {\tt age\_binary}, which makes only the functions that put a weight close to $0$ to the attribute {\tt age} to be fair.

Next, suppose that we use the same scoring attributes, but a different fairness oracle --- one that applies FM1 on the attribute {\tt race}, requiring that at most 60 of the top-100 are African American.  This time, there exist several satisfactory regions.  In fact, for any assignment of weights to the two scoring attributes, there exists a satisfactory function $f'$ such that $\theta(f, f') < 0.11$ (cosine similarity between $f$ and $f'$ is always more than 0.99).  

%In the next experiment, we replaced the protected attribute with {\tt race} and considered an ordering to be fair, if it contains at most 60 persons in its top 100.
%This time there exist several satisfactory ranges in the data while there is a small gap between them. 
%In this exp, the max. angle distance between a none-satisfactory function and our system's output is less than 0.11. It means that the cosine similarity of our suggestion and user input is always more than 0.99.
In our final validation experiment, we use {\tt juv\_other\_count} and {\tt c\_days\_from\_compas} for scoring, with fairness model FM2 that considers a ranking satisfactory if there are at most 90 males, at most 60 African Americans, and at most 52 persons who are 30 years old or younger at the top-100.  This fairness model is stricter than in the  preceding experiment (with FM1 on {\tt race}), making the gaps between the satisfactory regions wider. Still, the maximum angle between $f$ and $f'$ was less than 0.28, which corresponds to the minimum cosine similarity of 0.96.

\subsection{Performance of query answering}
\label{sec:exp:online}
While preprocessing can take more time, a critical requirement of our system is to be fast when answering users' queries. In this section, we use the COMPAS dataset and evaluate the performance of \twodonline and \mdonline, the two-dimensional and multi-dimensional algorithms for online query answering.  We show that queries can be answered in interactive time.
We use the default fairness model (i.e., at most 60\% AA in the top-30\%) and the scoring attributes in the same ordering provided in the description of COMPAS dataset.
%\julia{Which scoring attributes and which fairness model were used in these experiments?  Can we say that results are representative of other combinations of scoring attributes?}

\stitle{2D.} One nice property of \twodonline is that it does not need to access the raw data at query time. It only needs to apply binary search on the sorted list of satisfactory ranges to locate the position of the input function $f$. 
In this experiment, we compare the required time for ordering the results based on the input function, averaged over 30 runs of \twodonline on random inputs. Confirming the theoretical $O(\log n)$ complexity of \twodonline v.s. the $O(n\log n)$ for the ordering, \twodonline only required {\em 30 $\mu$sec} on average, while even ordering the results based on $f$ (to check if $f$ is satisfactory) required 25 msec to complete.

\stitle{MD.}
In this experiment, similarly to 2D, we took 
%While not being very efficient in the preprocessing is tolerable, it is critical to be efficient when answering the users' queries.
%Figure~\ref{fig:exp-mdonline} shows 
the average running time of 30 random queries, for between 3 and 6 scoring attributes (dimensions).
Upon arrival of a query function $f$, \mdonline 
%first orders the items based on $f$, in time $O(n\log n)$, and checks if the input function itself is satisfactory. If not, \mdonline 
finds the cell to which $f$ belongs in $O(\log N)$, and returns the corresponding satisfactory function $f'$.
As a result,
similar to \twodonline is significantly faster than even finding the ordering of the items based on $f$. This is confirmed  in our experiments were the running time, in all cases, was less than {\em 200 $\mu$sec} whereas the time required to order the items based on $f$ was {\em 25 msec}. Please note that the running time of \mdonline is independent of $n$ (the number of items in the dataset) and will perform similarly for the very large datasets.

%\begin{figure}[!ht] 
%        \centering
%        \includegraphics[width =0.35\textwidth]{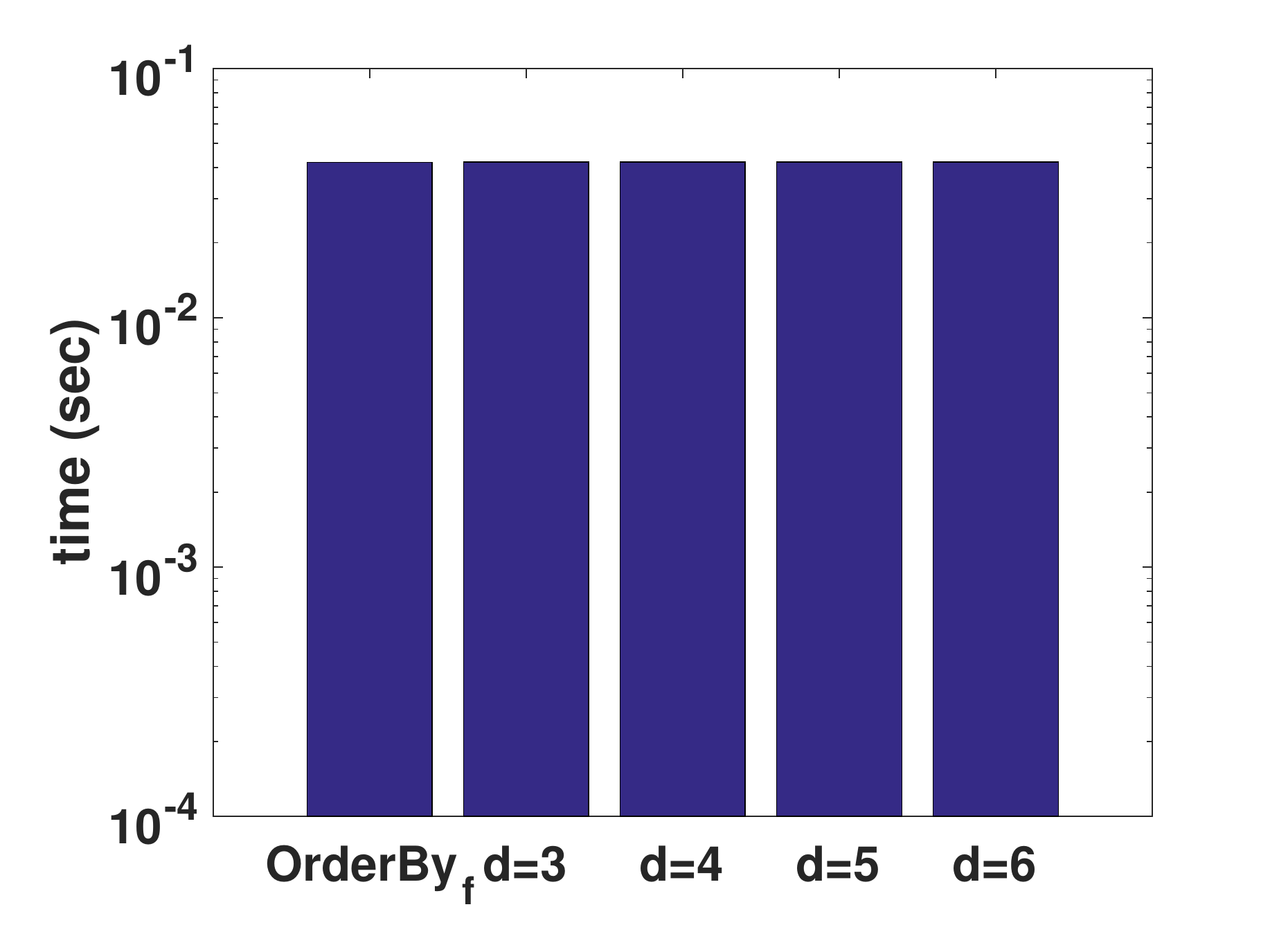}
%        \caption{MD; effect of $d$ on online query answering time}
%        \label{fig:exp-mdonline}
%\end{figure}

\subsection{Performance of preprocessing}\label{subsec:exp-performance}
In order to study the preprocessing performance, similar to \S~\ref{sec:exp:online}, we use COMPAS as the default dataset, the default fairness model (at most 60\% African Americans at the top-ranking 30\%), and the scoring attributes in the same ordering provided in the description of COMPAS dataset.

\stitle{2D.}
We start by evaluating the efficiency of \twodraynsp, the 2D preprocessing algorithm proposed in \S~\ref{sec:2d}.
%As our default dataset, we chose COMPAS with scoring attributes {\tt priors\_count} and {\tt age}, under the fairness model FM1 on {\tt race} (at most 60\% African Americans at the top-ranking 30\%). 
We study the effect of $n$ (the number of items in the dataset) on the performance of the algorithm \twodray and evaluate the number of \rankswitches and the running time of it.
Figure~\ref{fig:exp-twod1} shows the experiment results for varying the number of items from 100 to 6,000. 
The $x$-axis shows the values of $n$ (in log-scale), and the left and right $y$-axes show the number of \rankswitches and the running time of \twodraynsp, respectively. 
Looking at the left $y$-axis, one can observe that the number of \rankswitches is much smaller than the theoretical $O(n^2)$ upper-bound. For example, while the upper-bound on the number of \rankswitches for $n=4$k is $16$M, the observed number in this experiment was $450$k. This is because the pairs of items in which one dominates the other do not have a \rankswitchnospace.  Also, looking at the right $y$-axis, and comparing the dashed orange line (time) with the blue line (number of \rankswitchesnospace), one can see that the orange line has a sharper slope as it passes through the blue line. 
This is because 
the oracle is in $O(n)$ and thus, based on Theorem~\ref{th:2dray}, \twodray is in $O(n^3)$. %This confirms the analysis provided in .

\stitle{MD, the effect of using arrangement tree.}
In \S~\ref{sec:md}, we proposed the arrangement tree data structure
for constructing the arrangement of hyperplanes, in order to skip comparing a new hyperplane with all  current regions. 
Here, as the first MD experiment, we run the algorithm \arrangement as the baseline and also use \harrangement for adding the hyperplanes using the arrangement tree. %\julia{Which dataset, which scoring attributes, which fairness oracle is used here?}

Figure~\ref{fig:exp-arrangementtree}
shows the incremental cost of adding hyperplanes to the arrangement when $d=3$.
While the baseline  (\arrangementn) needed 8,000 seconds for adding the first 250 hyperplanes, using the arrangement tree helped save around 7,740 seconds.
Fixing the budget to 8,000 seconds, the baseline could construct the arrangement for the first 250 hyperplanes, while using the arrangement tree allowed us to extend the construction to 1,200 hyperplanes. 

Recall from \S\ref{sec:md} that the number of regions at step $i$ is $O(i^{2(d-1)})$, and hence, adding the consequent hyperplanes (with \arrangementn) is more expensive. This is presented in Figure~\ref{fig:exp-Noregions}, where the $y$-axis shows the number of regions in the arrangement ($|\mathcal{R}|$) for different number of hyperplanes. 
Observe that the number of regions for the first 50 hyperplanes is less than 200; it increases to  more than 5,000 regions for the hyperplanes that are added after 250$^{th}$ iteration. As a result, while adding a hyperplane (without using the arrangement tree) at the first 50 iterations requires checking fewer than 200 regions, adding a hyperplane after iteration 250 requires checking more than 5,000 regions, and so is significantly more expensive.

\stitle{MD, preprocessing.}
%After studying the cost of constructing arrangement of hyperplanes, here we 
We now evaluate the algorithms proposed in \S~\ref{sec:speedup} for preprocessing the data in  partitioned angle space.
%\julia{Is $|H|$ the number of hyperplanes?  Need to say this.}
First, similar to the 2D experiments, varying $n$ from 200 to 6,000, in Figure~\ref{fig:exp-mdh} we observe $|H|$ (the number of hyperplanes) as well as the time for constructing the hyperplanes in the angle coordinate system.
%\julia{I don't understand how to compare these two figures, they have different parameters on the $y$-axis.}
Comparing this figure with Figure~\ref{fig:exp-twod1} (remember that intersections in 2D and hyperplanes in MD refer to the \rankswitchesnospace), we observe that $|H|$ gets closer to $n^2$ as the number of dimensions increase.
This is because, as the number of dimensions increases, the probability that one in a pair of items dominate the other decreases, and therefore $|H|$ gets closer to $n^2$. Also, looking at the right-y-axis and the dashed orange line and comparing it with $|H|$ (the left-y-axis) confirms that the total running time is linear to the number of hyperplanes. %\julia{"linear constant" in the previous sentence; which is it?}

In the previous experiment for observing the benefit of using the arrangement tree, we discussed the effect of the number of hyperplanes on the complexity of the arrangement (quantified by the number of regions) and on the running time for constructing it. 
Thus, rather than constructing the arrangement for the complete set of hyperplanes, in \S~\ref{sec:speedup}, we limit the arrangement construction for each cell to the hyperplanes passing through it.  In Figure~\ref{fig:exp-mdc} we set the number of items to $100$ and $d$ to $4$, and
observe the number of hyperplanes passing through the cells. The $x$-axis in Figure~\ref{fig:exp-mdc} is the cells sorted by $|\mathcal{HC}[c]|$ (the number of hyperplanes passing through the cell $c$), and the $y$-axis shows $|\mathcal{HC}[c]|$ for each cell $c$.
Looking at the figure, one can see that more than 5000, out of 6000 cells have less than 100 hyperplanes passing through them, and even constructing the complete arrangement inside them is not very expensive.
We explained in \S~\ref{sec:speedup} that our goal is to associate a satisfactory function with each cell, allowing \cellarrangement to {\em stop early} (before constructing the complete arrangement) once a satisfactory function is identified. 
%, the objective while checking each cell is to assign a satisfactory function to it (if it intersects with a satisfactory region), not constructing the arrangement of hyperplanes passing through it. 
%Therefore, \cellarrangement tries to {\em stop early} before the complete construction of the arrangement. 
%\julia{What follows is an informal description of an algorithm that's already described elsewhere.  What point do we need to make here, succinctly?  Can we skip the rest of the paragraph, from here and until "Figures 21 and 22"?} 
%Therefore, for each region, it incrementally adds the hyperplanes intersecting it to the arrangement (using the arrangement tree); checks . As a result, \cellarrangement needs to construct the complete arrangement if the cells does not intersect with the satisfactory regions.

Figures~\ref{fig:exp-mdcost} and~\ref{fig:exp-mdmcost} show the required time for different steps of preprocessing, as well as the total preprocessing time. Figure~\ref{fig:exp-mdcost} shows the cost for varying $n$, with $d=3$ and $N=40,000$. In Figure~\ref{fig:exp-mdmcost} we fix $n=100$ and $N =40,000$ and vary $d$.  The yellow line in both figures shows the required time for identifying the hyperplanes passing through each cell. Applying \cellplane for finding the cells for each hyperplane helps skip a large portion of the cells. Still its running time increases significantly as $n$ increases. This is because the number of hyperplanes $|H|$ is in $O(n^2)$. %\julia{I say "the number of hyperplanes $|H|$, correct?  If so, let's remove $|H|$.}
On the other hand, despite the complexity of the arrangement construction (c.f. Theorem~\ref{th:3}), finding a satisfactory function for each cell that intersects with a satisfactory region (the dashed red line) may not be not very expensive and in certain cases has similar running time as \cellplanensp. Different optimizations proposed in \S~\ref{sec:md} and~\ref{sec:speedup} result in reasonable performance of this step. First, reducing the construction of the arrangement for each cell $c$, to the hyperplanes passing through it, reduces the complexity of the arrangement to $|\mathcal{HC}[c]|^{d-1}$. Second, as shown in Figure~\ref{fig:exp-arrangementtree}, the arrangement tree data structure helps
to rule out checking the intersection of the hyperplanes with all regions. %not to compare the new hyperplanes all the current regions. \julia{What does "helps not to compare the new hyperplanes all the current regions" mean?} 
Finally, the early stop condition is effective at reducing the running time. %in the arrangement construction for each cell (that is issuing a function for each new region and stopping the construction as soon as a satisfactory function is discovered) helps not to construct the complete arrangement. 
Still, looking at Figures~\ref{fig:exp-mdcost} and~\ref{fig:exp-mdmcost} this step always takes the majority of the preprocessing time.

The final step is to use \cellcoloring to assign the satisfactory function of the closest satisfactory cell to each unsatisfactory cell. Using a priority queue, this step is expected to be fast, which is observed in all the settings in Figures~\ref{fig:exp-mdcost} and~\ref{fig:exp-mdmcost}.
%Even though arrangement tree, limiting the arrangement construction to the cell scope, and early stop in the cells help in reducing the arrangement construction cost significantly, still the total preprocessing cost is dominated by the arrangement construction cost, as in both Figures~\ref{fig:exp-mdc} and~\ref{fig:exp-mdmcost}, the red and blue lines overlap. We again want to emphasize that efficiency during the preprocessing is not crucial for us and our aim is to be fast in online query processing.
%Still, as discussed in \S~\ref{subsec: sampling},
%for the situations where the input size is large that the preprocessing is not affordable, one can apply sampling, as it provides reasonable, ``close to satisfactory'' results.
%Next, we turn our attention to the system performance during the online query answering phase. \abol{add the results for }

\begin{comment}
- effect of number of cells in the arrangement: change the number of cells and monitor the cost: set n=50 for example and 3d
\end{comment}

\stitle{MD, sampling for a large-scale setting.}
%In \S~\ref{sec:discussions}, we discussed applying this proposal for diversity. 
We discussed in \S~\ref{sec:opt:sample} that preprocessing time can be reduced for very large datasets by conducting it over a uniform sample.  In this experiment, we use the DOT dataset, with three scoring attributes, {\tt departure delay}, {\tt arrival delay}, and {\tt taxi in}. The fairness oracle uses FM1 with airline name as the type attribute.  A ranking is satisfactory if the percentage of outcomes from each of four major companies Delta Airlines (DL), American Airlines (AA), Southwest (WN), and United Airlines (UA) in the top 10\% is at most 5\% higher than their proportion in the dataset.

%showcase an example for diversity over a dataset with more than a million records.
%Our objective is to find a ranking functions over three attributes {\tt departure delay}, {\tt arrival delay}, and {\tt taxi in} that diversify the results such that the percentage of outcomes from each of four major companies Delta Airlines (DL), American Airlines (AA), Southwest (WN), and United Airlines (UA) in the top 10\% of results is not more than 5\% higher than their percentage in the whole dataset.
We sample 1,000 records uniformly at random from the dataset of 1.3M records and use it for preprocessing with $N=40,000$.  Preprocessing took 1,276 seconds to complete.
Next, we used the complete dataset and checked if the function assigned to the cells using the sample  are in fact satisfactory. It turned out that for all assigned functions the percentage of 
results from each of four major airlines in the top 10\% was at most 5\% higher than their proportion in the whole dataset --- all of them were satisfactory.

\section{Related Work}\label{sec:related}

%\stitle{Fairness and diversity models:}
Several recent papers focus on measuring fairness in ranked lists\cite{DBLP:conf/ssdbm/YangS17,DBLP:conf/cikm/ZehlikeB0HMB17}, on constructing ranked lists that meet fairness criteria~\cite{DBLP:journals/corr/CelisSV17}, and on fair and diverse set selection~\cite{StoyanovichYJ18}.  Fairness in top-$k$ over a single binary type attribute (such as gender, ethnic majority/minority, or disability status) is studied in Zehlike et al.~\cite{DBLP:conf/cikm/ZehlikeB0HMB17}, where the goal is to ensure that the proportion of members of a protected group in every prefix of the ranking remains statistically above a given minimum.   
Celis et al.~\cite{DBLP:journals/corr/CelisSV17} provide a theoretical investigation of ranking with fairness constraints.  In their work, fairness in a ranked list is quantified as an upper bound on the number of items at the top-$k$ that belong to multiple, possibly overlapping, types.  In contrast, our goal is to assist the user in designing fair score-based rankers.  Our framework accommodates a large class of fairness constraints.  In our experiments, we focus on variants of fairness constraints similar to those in ~\cite{DBLP:journals/corr/CelisSV17,StoyanovichYJ18,DBLP:conf/cikm/ZehlikeB0HMB17}.

Diversification of query results has always been an important data retrieval topic~\cite{diversity,boyce1982beyond,agrawal2009diversifying}.
Different definitions of diversity include similarity function-based~\cite{chen2006less} and topic-based~\cite{agrawal2009diversifying}. General background on diversity and a connection to fairness are provided in~\cite{diversity}.
A nice property of the techniques proposed in this paper is that they are independent of the choice of a fairness function. In fact, one can replace the fairness oracle with any binary-output function that takes an ordering of the items as the input.
This makes our techniques suitable for a general range of diversity definitions.
%All is needed is to pass the orderings to the diversity oracle (instead of a fairness oracle), and all the claims in the paper remain valid.

%\stitle{Arrangement of hyperplanes and its applications:}
The techniques provided in this paper mainly follow the concepts in combinatorial geometry. The general background and the terms
%The background in combinatorial geometry and terms 
are provided in~\cite{de2008computational, edelsbrunner}.
In addition,~\cite{edelsbrunner} discussed the complexity bounds and proposes the incremental algorithm for constructing the lattice of arrangement.
Arrangement of hyperplanes is also studies in~\cite{orlik2013arrangements,grunbaum2003arrangements, schechtman1991arrangements}.
Applications of arrangements such as motion planner in robotics are discussed by P. Agrawal et. al.~\cite{agarwal2000arrangements}.

\section{Final Remarks}\label{sec:conclusion}

In this paper, we studied the problem of designing fair ranking schemes.
Considering the linear combinations of attribute values as the score of each item, our system assists users in choosing criterion weights that are fair. Creating proper indexes in an offline manner enables efficient answering of the users' queries. In addition to the theoretical analyses, empirical experiments on real datasets confirmed both efficiency and effectiveness of our proposal.

In this paper, we designed techniques for a general fairness definition that takes an ordering of the items as input and decides whether it meets the fairness requirements. Additional information about the fairness model can help optimize the techniques. For example, knowing that the fairness oracle investigates fairness only within the top-$k$ of the ordering~\cite{DBLP:conf/cikm/ZehlikeB0HMB17} can help in ignoring the items that do not belong to $k$ convex layers~\cite{chang2000onion}, as those will never appear within the top-$k$.
This reduces complexity of the arrangement from $n^{2(d-1)}$ to $n_k^{2(d-1)}$, where $n_k$ is the number of items in the top $k$ convex layers.  We will explore this and other kinds of optimizations in future work.
The techniques of this paper are provided for a fixed number of dimensions.
%For the variable high dimensional situations, the problem is NP-complete, following the complexity of the arrangement.
We consider extending our techniques to a variable number of dimensions for future work.
\submit{\newpage }
\bibliographystyle{abbrv}
\bibliography{ref}
\techrep{
\appendix
\section{Appendix}
\subsection{Angle distance computation}\label{appendix:angle}
As explained in \S~\ref{sec:pre}, linear ranking function can be represented as rays in $\mathbb{R}^d$ that start at the origin. These rays can be represented by $d-1$ angles.
Consider a ray $\rho$ that starts from the origin and passes through the point $p$.
Let $polar(p) = \langle r, \Theta\rangle$ be the polar representation of $p$.
First all the points $p'$ that $\rho$ passes through them have the polar representative $\langle r', \Theta\rangle$.
Second, for a point $p$ with the polar representative $\langle r, \Theta\rangle$, there is one and only one ray starting from the origin that passes through it.
%Second, given the angle combination $\Theta$, there is one and only one ray starting from the origin and passing through the point $p$ with the polar representation $(1,\Theta)$.
Thus, the angle vector $\Theta$ of size $d-1$ is enough for identifying this ray.
We use cosine similarity to compute the angle distance between two rays, represented by the angle vectors $\Theta^{(i)}$ and $\Theta^{(j)}$.
%The distance between the two angle vectors will not necessarily give the angle between them. For example consider the rays along with $y$ and $z$ axes in $\mathbb{R}^3$. The representation of these rays are $y: (\pi/2,0)$ and $z: (0,\pi/2)$, respectively. Looking at Figure~\ref{fig:hd1}, it is easy to see the angle between these rays is $\pi/2$.
%However, the $L_2$ norm between the vectors $(\pi/2,0)$ and $(0,\pi/2)$ is $\frac{\pi \sqrt{2}}{2}$.
%As another example, consider the angle between the rays $\rho_1(0,\pi/6)$ and $\rho_2(\pi/2,\pi/6)$, versus the rays $\rho_3(0,\pi/3)$ and $\rho_4(\pi/2,\pi/3)$ in Figure~\ref{fig:hd1}. While the $L_2$ norm between both $\rho_1$-$\rho_2$ and $\rho_3$-$\rho_4$ is $\pi/2$, both angles are smaller than $\pi/2$ and the angle between $\rho_1$-$\rho_2$ is larger then the one between $\rho_3$-$\rho_4$.

Consider the point $p_i = \langle 1, \Theta^{(i)} \rangle$ (that the ray $\Theta^{(i)}$ passes through it).
The cartesian coordinates of $p_i$ are\footnote{\small To simplify the equation, we set $\Theta_0^{(i)}$ to $\pi/2$, by appending it to the beginning of the vector $\Theta$.}:
\begin{align}
p_i = \langle \sin \Theta_k^{(i)} \underset{l=k+1}{\overset{d-1}{\Pi}} \cos \Theta_l^{(i)} ~,~ \forall 0\leq k < d \rangle
\end{align}
Using the definition of cosine similarity, for the points $p_i = \langle 1, \Theta^{(i)} \rangle$ and $p_j  = \langle 1, \Theta^{(j)} \rangle$: 
\begin{align}
\cos (\theta_{ij}) = \sum\limits_{k=0}^{d-1} \sin \Theta_k^{(i)} \sin \Theta_k^{(j)} \underset{l=k+1}{\overset{d-1}{\Pi}} ( \cos \Theta_l^{(i)} \cos \Theta_l^{(j)})
\end{align}
Thus, $\theta_{ij}$ (the angle between the rays $\Theta^{(i)}$ and $\Theta^{(j)}$) is:
\begin{align}\label{eq:raydistance}
\theta_{ij} = \arccos\big( \sum\limits_{k=0}^{d-1} \sin \Theta_k^{(i)} \sin \Theta_k^{(j)} \underset{l=k+1}{\overset{d-1}{\Pi}} ( \cos \Theta_l^{(i)} \cos \Theta_l^{(j)}) \big)
\end{align}

\subsection{Angle space partitioning}\label{subsec:anglepartitioning}
According to Appendix~\ref{appendix:angle}, the distance between two rays specified by two $(d-1)$ dimensional angle vectors $\Theta^{(i)}$ and $\Theta^{(j)}$ is not the same as their euclidean distance.
Thus, as also discussed in~\cite{vlachou2008angle}, a regular grid partitioning that equally partitions each axis into $\sqrt[d]{N}$ equal size ranges will not generate cells of equal sizes.
One can verify this by looking at Figure~\ref{fig:hd1}, in which the cells in the bottom row have larger areas than the ones in the upper rows.
Inspired by~\cite{vlachou2008angle}, we propose the angle space partitioning that partitions the space into $N$ equal area cells.
We do the partitioning using the surface of (the first quadrant of) the unit hypersphere in $\mathbb{R}^d$.
Consider a hypercone starting from the origin, while its base is a hypercube (square in $\mathbb{R}^3$) on the surface of unit hypersphere.
%The space of rays (functions) inside a cell in the partitioned grid form a cone that its base in the origin and it base is hyper-square.
%For example, Figure~\ref{sss} shows such a cone in $\mathbb{R}^3$.
We want to partition the space into $N$ such hypercones such that the area of all cells are equal.
The total area of the space (the area of the first quadrant of the unit hypersphere) is~\footnote{\scriptsize http://mathworld.wolfram.com/Hypersphere.html}
\begin{align}
\eta = \frac{\pi^{d/2}}{2^{d-1} \Gamma (d/2)}
\end{align}
where $\Gamma$ is the gamma function.
Thus, the area of each cell is
\begin{align}\label{eq:cellarea}
\eta_{\mbox{cell}} = \frac{\pi^{d/2}}{N 2^{d-1} \Gamma (d/2)}
\end{align}
Considering the cells to be small enough, one can assume that the area of each cone on the surface of the hypersphere is equal to the area if its base.
The area of a hypercube with sides of size $\gamma$ is $\gamma^{d-1}$ (e.g. $\gamma^2$ in $\mathbb{R}^3$).
Assuming the area of the hypercone and its base to be equal, using Equation~\ref{eq:cellarea}, the sides of the hyper square are of size:
\begin{align}\label{eq:cellside}
\sqrt[d-1]{\eta_{\mbox{cell}}} = \sqrt[d-1]{ \frac{\pi^{d/2}}{N 2^{d-1} \Gamma (d/2)} }
\end{align}
Since the radius of the hypersphere is $1$, the angle between the rays in two corners of a side are: 
\begin{align}\label{eq:cellside}
\gamma = 2\arcsin \frac{\sqrt[d-1]{ \frac{\pi^{d/2}}{N 2^{d-1} \Gamma (d/2)} } }{2}
\end{align}
We use $\gamma$, as computed in Equation~\ref{eq:cellside}, for angle space partitioning, as follows.

Consider the axes $\theta_1,\theta_2,\cdots , \theta_{d-1}$.
For every axis, we maintain a vector of angles $T_{\theta_i}$ such that each element $T_{\theta_i}[j]$ of the vector contains:
\begin{itemize}
\item {\em range:} the borders of the row in axis $\theta_i$.
\item {\em elements:} the vector of angles for axis $T_{\theta_{i+1}}$ in row $T_{\theta_i}[j]$.
\end{itemize}

One can see the partitioning data structure as a tree of depth $d-1$ that its leaves are the cells; the path from the root to each leaf identifies its borders in every dimension.
%\textcolor{blue}{Figure...}
%\missingfigure{Show an example tree for partitioning the angle space}
In order to construct the ranges, in an iterative manner, we apply Equation~\ref{eq:raydistance} to specify ranges of angle $\gamma$ as the rows of each axis. Then, we recursively partition the rows of the axis into equal area cells.
Algorithm~\ref{alg:partition} shows the pseudo code for angle space partitioning.
Consider the moment where the algorithm is partitioning the $i$-th axis and the current point is in the form of $p_c = \langle \Theta_1, \Theta_2, \cdots,\Theta_{i-1}, \theta, 0,\cdots ,0\rangle$.
The objective is to find the next point in $i$-th axis such that the angle of its corresponding ray with the current point is $\gamma$.
The next point is in the form of $p_n = \langle \Theta_1, \Theta_2, \cdots,\Theta_{i-1}, \theta',0,\cdots ,0\rangle$, where $\theta'$ is unknown.
Using Equation~\ref{eq:raydistance}, the angle between the rays of $p_c$ and $p_n$ can be rewritten as:
\begin{align}
\cos \gamma = \cos \theta' \cos \theta \sum\limits_{k=0}^{i-1} \sin^2 \Theta_k \underset{l=k+1}{\overset{i-1}{\Pi}} ( \cos^2 \Theta_l) + \sin \theta' \sin \theta
\end{align}
Let $\alpha$ be $\cos \theta \sum\limits_{k=0}^{i-1} \sin^2 \Theta_k \underset{l=k+1}{\overset{i-1}{\Pi}} ( \cos^2 \Theta_l)$ and $\beta$ be $\sin \theta$.
Then the angle between the above equation is
\[
	\alpha \cos \theta' + \beta \sin \theta' = \cos \gamma
\]
Now, let us set $\delta = \arctan\frac{\beta}{\alpha}$ and $\Delta = \sqrt{\alpha^2 + \beta^2}$.
Thus,
\begin{align}\label{eq:getnextangle}
\nonumber
& \Delta \cos \delta \cos \theta' + \Delta \sin \delta \sin \theta' = \cos \gamma \\
\nonumber
& \Rightarrow \Delta \cos(\theta' - \delta) = \cos \gamma \\
& \Rightarrow \theta' = \arccos \frac{\cos \gamma}{\Delta} + \delta
\end{align}

\begin{algorithm}[!h]
\caption{{\bf \anglepartitioning}\\
		 {\bf Input:} axis number $i$, angle combination for previous axes $\Theta$, $d$\\
		 {\bf Output:} Partitioned space $T$
		}
\begin{algorithmic}[1]
\label{alg:partition}
    \STATE $\theta = 0$, $T=\{\}$, $j=1$
    \WHILE{$\theta<\pi/2$}
    	\STATE compute $\theta'$, using Equation~\ref{eq:getnextangle}
    	\STATE $T[j].$range$=(\theta,\theta')$
    	\IF{$i<(d-1)$}
    		\STATE $\Theta[i] = \theta$
    		\STATE $T[j].$elements$=${\bf \anglepartitioning}$(\Theta,i+1,d)$
    	\ENDIF
    	\STATE $\theta = \theta'$, $j=j+1$
    \ENDWHILE
    \STATE {\bf return} leaves($T$)
\end{algorithmic}
\end{algorithm}

\begin{theorem}
Algorithm~\ref{alg:partition} is in $O(N)$.
\end{theorem}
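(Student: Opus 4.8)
The plan is to charge the total running time to the nodes of the recursion tree that \anglepartitioning implicitly constructs, and then to show this tree has only $O(N)$ nodes for a fixed number of dimensions. Concretely, the recursive calls over axes $1,2,\dots,d-1$ form a tree of depth $d-1$ in which a node at depth $i$ corresponds to a fixed prefix $\langle \Theta_1,\dots,\Theta_i\rangle$ of angle choices, its children are the rows $T[j]$ produced by the WHILE loop on axis $i+1$, and its leaves (at depth $d-1$) are exactly the cells of the partition. Since the construction is designed (Equations~\ref{eq:cellarea} and~\ref{eq:cellside}) so that every cell has equal area $\eta/N$ and the cells tile the first quadrant of the unit hypersphere, the number of leaves is $\Theta(N)$.

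First I would bound the per-node cost. Each iteration of the WHILE loop records one range $T[j].\mathrm{range}$ and computes the next angle $\theta'$ from Equation~\ref{eq:getnextangle}. Evaluating $\alpha$, $\beta$, $\delta$, and $\Delta$ involves a sum of at most $i \le d-1$ terms, each a product of at most $d-1$ trigonometric factors, so a single iteration costs $O(d^2)$, which is $O(1)$ for fixed $d$. Because one WHILE iteration creates exactly one child (one edge) of the current node, the work spent at a node equals $O(1)$ times its number of children; hence the total work is $O(1)$ times the total number of edges, i.e. $O(1)$ times the total number of nodes in the tree.

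Then I would bound the number of nodes by $O(N)$. The key observation is that the tree has fixed depth $d-1$, that every internal node has at least one descendant leaf, and that distinct nodes at the same depth own disjoint sets of descendant leaves (they correspond to disjoint angle ranges). Consequently, at each of the $d-1$ internal levels there are at most $N$ nodes, so the number of internal nodes is at most $(d-1)N$ and the total number of nodes is at most $(d-1)N + N = dN = O(N)$ for fixed $d$. Combining this with the per-node cost bound gives running time $O(N)$, and the final \texttt{leaves}$(T)$ collection step is likewise $O(N)$.

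The main obstacle is the counting argument that ties everything back to $N$: one must (i) justify that the equal-area construction indeed yields $\Theta(N)$ leaf cells rather than some other count, which follows from each cell having area $\eta/N$ while the cells cover the total area $\eta$, and (ii) establish the $O(N)$ bound on internal nodes even though the tree is a general, non-binary tree with variable branching, which is exactly where the bounded depth $d-1$ and the disjoint-leaf-ownership property are needed. Everything else, in particular the constant-time angle update for fixed $d$, is routine.
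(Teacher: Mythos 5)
Your proof is correct and takes essentially the same approach as the paper: both analyze the recursion tree of fixed depth $d-1$ whose leaves are the $N$ cells, and charge constant work (for fixed $d$) to each node. You simply make explicit two steps the paper leaves implicit --- that the equal-area construction yields $\Theta(N)$ leaves, and that bounded depth plus disjoint descendant-leaf sets caps the internal nodes at $O(N)$ --- which is a welcome tightening but not a different argument.
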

\begin{proof}
The total number of cells is $N$.
Looking at the recursion tree, every leaf of the tree (every cell) has the level $d$. Therefore, for a constant value of $d$, the cost of generating each cell is constant. Therefore, Algorithm~\ref{alg:partition} is in $O(N)$.
\end{proof}

}
\end{document}